\setlist{noitemsep}
\newcommand{\yes}{\textsc{yes}\xspace}
\newcommand{\no}{\textsc{no}\xspace}
\newcommand{\yesinstance}{\yes-instance\xspace}
\newcommand{\noinstance}{\no-instance\xspace}
\newtheorem*{untheorem}{Theorem}
\newtheorem{theorem}{Theorem}
\newtheorem{lemma}{Lemma}[section]
\newtheorem{claim}{Claim}[section]
\newtheorem{definition}{Definition}[section]
\newtheorem{observation}{Observation}[section]
\newtheorem{proposition}{Proposition}[section]
\newtheorem{define}{Definition}
\newenvironment{claimproof}{\begin{proof}\renewcommand{\qedsymbol}{\claimqed}}{\end{proof}\renewcommand{\qedsymbol}{\plainqed}}
\let\plainqed\qedsymbol
\newcommand{\image}[1]{\includegraphics[scale=0.35]{#1}}
\newcommand{\fpt}{{\sf FPT}}
\newcommand{\Oh}{{\mathcal{O}}}
\newcommand{\Q}{{\mathcal{Q}}}
\newcommand{\td}{\mathbf{td}}
\newcommand{\height}{{\mathbf{height}}}
\newcommand{\depth}{{\mathbf{depth}}}
\newcommand{\reach}{{\mathbf{reach}}}
\newcommand{\anc}{{\mathbf{anc}}}
\newcommand{\tw}{{\mathbf{tw}}}
\newcommand{\containment}[0]{{\sf NP}~$\subseteq$~{\sf coNP$/$poly}\xspace}
\newcommand{\h}[1]{\end{document}}
\newcommand{\X}{\ensuremath{\mathcal{X}}\xspace}
\newcommand{\C}{\ensuremath{\mathcal{C}}\xspace}
\newcommand{\F}{\ensuremath{\mathcal{F}}\xspace}
\newcommand{\T}{\ensuremath{\mathcal{T}}\xspace}
\newcommand{\cO}{{\mathcal{O}}}
\newcommand{\defparproblem}[4]{
 \vspace{1mm}
\noindent\fbox{
 \begin{minipage}{0.96\textwidth}
 \begin{tabular*}{\textwidth}{@{\extracolsep{\fill}}lr} #1 & {\bf{Parameter:}} #3 \\ \end{tabular*}
 {\bf{Input:}} #2 \\
 {\bf{Question:}} #4
 \end{minipage}
 }
 \vspace{1mm}
}
\newcommand{\ProblemFormat}[1]{{\sc #1}}
\newcommand{\ProblemIndex}[1]{\index{problem!\ProblemFormat{#1}}}
\newcommand{\ProblemName}[1]{\ProblemFormat{#1}\ProblemIndex{#1}\xspace}
\newcommand{\probtdeta}{\ProblemName{Treedepth-$\eta$ Deletion}}
\newcommand{\XSC}{\ProblemName{Exact $d$-Uniform Set Cover}}
\newcommand{\FDeletion}{\ProblemName{$\F$-Minor-Free Deletion}}
\newcommand{\PlanarFDeletion}{\ProblemName{Planar $\F$-Minor-Free Deletion}}
\newcommand{\TreewidthEtaDeletion}{\ProblemName{Treewidth-$\eta$ Deletion}}
\newcommand{\TreedepthEtaDeletion}{\probtdeta}
\newcommand{\KdplusoneDeletion}{\ProblemName{$\{K_{d+1}\}$-Minor-Free Deletion}}
\newcommand{\CliqueorPathDeletion}{\ProblemName{$\{K_{d+1},P_{4d}\}$-Minor-Free Deletion}}
\newcommand{\TreewidthDminoneDeletion}{\ProblemName{Treewidth-$(d-1)$ Deletion}}
\title{Uniform Kernelization Complexity of Hitting Forbidden Minors}
\author{Archontia C. Giannopoulou\thanks{School of Engineering and Computing Sciences, Durham University, United Kingdom. \texttt{Archontia.Giannopoulou@gmail.com}. 
Supported by ERC Grant Agreement nr.~267959.}
\and Bart M. P. Jansen\thanks{Department of Mathematics and Computer Science, Technical University Eindhoven, The Netherlands. \texttt{b.m.p.jansen@tue.nl}. Supported by NWO Veni grant ``Frontiers in Parameterized Preprocessing'' and NWO Gravity grant ``Networks''.}  
\and Daniel Lokshtanov\thanks{Department of Informatics, University of Bergen, Norway. \texttt{daniello@ii.uib.no}. Supported by Bergen Research Foundation grant BeHard.}
\and Saket Saurabh\thanks{The Institute of Mathematical Sciences, Chennai, India. \texttt{saket@imsc.res.in}. Supported by Parameterized Approximation, ERC Starting Grant 306992.}
}
\date{}
\begin{document}
\maketitle

\thispagestyle{empty}
\vspace{-.7cm}
\begin{abstract}
The \FDeletion problem asks, for a fixed set~\F and an input consisting of a graph~$G$ and integer~$k$, whether~$k$ vertices can be removed from~$G$ such that the resulting graph does not contain any member of~\F as a minor. It generalizes classic graph problems such as \textsc{Vertex Cover} and \textsc{Feedback Vertex Set}. This paper analyzes to what extent provably effective and efficient preprocessing is possible for \FDeletion. Fomin et al.~(FOCS 2012) showed that the special case \PlanarFDeletion (when~\F contains at least one planar graph) has a kernel of polynomial size: instances~$(G,k)$ can efficiently be reduced to equivalent instances~$(G',k)$ of size~$f(\F) \cdot k^{g(\F)}$ for some functions~$f$ and~$g$. The degree~$g$ of the polynomial grows very quickly; it is not even known to be computable. Fomin et al.\ left open whether \PlanarFDeletion has kernels whose size is \emph{uniformly polynomial}, i.e., of the form~$f(\F) \cdot k^c$ for some universal constant~$c$ that does not depend on~$\F$. Our results in this paper are twofold.

\begin{enumerate}
	\item We prove that not all \PlanarFDeletion problems have uniformly polynomial kernels (unless \containment). Since a graph class has bounded treewidth if and only if it excludes a planar graph as a minor, a canonical \PlanarFDeletion problem is \TreewidthEtaDeletion: can~$k$ vertices be removed to obtain a graph of treewidth at most~$\eta$? We prove that the \TreewidthEtaDeletion problem does not have a kernel with~$\Oh(k^{\frac{\eta}{4} - \epsilon})$ vertices for any~$\epsilon > 0$, unless \containment. In fact, we prove the stronger result that even parameterized by the \emph{vertex cover number} of the graph (a larger parameter), the \TreewidthEtaDeletion problem does not admit uniformly polynomial kernels unless \containment. This resolves an open problem of Cygan et al.~(IPEC 2011). It is a natural question whether further restrictions on~$\F$ lead to uniformly polynomial kernels. However, we prove that even when~\F contains a \emph{path}, the degree of the polynomial must, in general, depend on the set~\F.
	\item Since a graph class has bounded treedepth if and only if it excludes a path as a minor, a canonical \FDeletion problem when~\F contains a path is \TreedepthEtaDeletion: can~$k$ vertices be removed to obtain a graph of treedepth at most~$\eta$? We prove that \TreedepthEtaDeletion admits uniformly polynomial kernels: for every fixed~$\eta$ there is a polynomial kernel with~$\Oh(k^6)$ vertices. In order to develop the kernelization we prove several new results about the structure of optimal treedepth-decompositions. These insights allow us to formulate a simple, fully explicit, algorithm to reduce the instance. As opposed to the kernels of Fomin et al.~(FOCS 2012), our kernelization algorithm does not rely on ``protrusion machinery'', which is a source of algorithmic non-constructivity.
\end{enumerate}
\end{abstract}

\newpage

\setcounter{page}{1}

\section{Introduction}

Kernelization is the subfield of parameterized and multivariate algorithmics that investigates the power of provably effective preprocessing procedures for hard combinatorial problems. While the origins of data reduction and preprocessing can be traced back far into the history of computing, the rigorous analysis of these topics developed over the last decade. In kernelization we study \emph{parameterized problems}: decision problems where every instance~$x$ is associated with a parameter~$k$ that measures some aspect of its structure. A parameterized problem is said to admit a kernel of size~$f \colon \mathbb{N} \to \mathbb{N}$ if every instance $(x,k)$ can be reduced in polynomial time to an equivalent instance with both size and parameter value bounded by~$f(k)$. For practical and theoretical reasons we are primarily interested in kernels whose size is polynomial, so-called \emph{polynomial kernels}. The study of kernelization has recently been one of the main areas of research in parameterized complexity, yielding many important new contributions to the theory. These include general results showing that  certain classes of parameterized problems have polynomial kernels, and results showing how to utilize algebra, matroid theory, and topology for data reduction~\cite{AlonGKSY11,H.Bodlaender:2009ng,FominLST10,Jansen14,Kratsch08Brief,KratschW12oct,KratschW12,PilipczukPSL13}. The development of a framework for ruling out polynomial kernels under certain complexity-theoretic assumptions~\cite{BDFH08,BodlaenderJK13a,DellM14,FortnowS11} has added a new dimension to the field and strengthened its connections to classical complexity.

One of the fundamental challenges in the area is the possibility of characterizing general classes of parameterized problems possessing a kernel of polynomial size. In other words, to obtain ``kernelization meta-theorems''. In general, algorithmic meta-theorems have the following form: problems definable in a certain logic admit a certain kind of algorithms on certain inputs.  A typical example of a meta-theorem is Courcelle's celebrated theorem~\cite{Courcelle92} which states that all graph properties definable in monadic second order logic can be decided in linear time on graphs of bounded treewidth. It seems very difficult to find a fragment of logic for which every problem expressible in this logic admits a polynomial kernel on all undirected graphs. The main obstacle in obtaining such results stems from the fact that even a simplest form of logic can formalize problems that are not even fixed parameter tractable (\fpt). In graph theory, one can define a general family of problems as follows. Let $\cal F$ be a family of graphs. Given an undirected graph $G$ and a positive integer $k$, is it possible to do at most $k$ edits of $G$ such that the resulting graph does not contain a graph from $\F$? Here one can define edits as either vertex/edge deletions, edge additions, or edge contraction. Similarly, one may consider containment as a subgraph, induced subgraph, or a minor. The topic of this paper is one such generic problem, namely, the \FDeletion problem.

The \FDeletion problem asks, for a fixed set of graphs~\F and an input consisting of a graph~$G$ and integer~$k$, whether~$k$ vertices can be removed from~$G$ such that the resulting graph does not contain any member of~\F as a minor. It generalizes classic graph problems such as \textsc{Vertex Cover}, \textsc{Feedback Vertex Set}, and \textsc{Vertex Planarization}. The parameterized complexity of this general problem is well understood. By a celebrated result of Robertson and Seymour, every  \FDeletion problem is non-uniformly  \fpt. That is, for every $k$ there is an algorithm solving the problem  in time~$f(k) \cdot n^3$~\cite{RobertsonS-GMXIII}.  However, whenever $\cal F$ is given explicitly, the problem is uniformly \fpt~because the excluded minors for the class of graphs that are \yesinstance of the \FDeletion~problem can by computed explicitly~\cite{AdlerGK08}. Thus, the \FDeletion problem is an interesting subject from the kernelization perspective:
\begin{quote}
For which sets~\F does \FDeletion admit a polynomial kernel?
\end{quote}

\noindent Fomin et al.~\cite{FominLMS12} studied the special case of \FDeletion problem where~$\F$ contains at least one planar graph, known as the \PlanarFDeletion problem. It is much more restricted than \FDeletion, but still generalizes problems such as \textsc{Vertex Cover} and \textsc{Feedback Vertex Set}. These problems are essentially about deleting~$k$ vertices to get a graph of constant treewidth: graphs that exclude a planar graph~$H$ as a minor have treewidth at most $|V(H)|^{\cO(1)}$~\cite{ChekuriC14}. In fact, a graph class has bounded treewidth if and only if it excludes a planar graph as a minor. Fomin et al.~\cite{FominLMS12} exploited the properties of graphs of bounded treewidth and obtained  a constant factor approximation algorithm, a~$2^{\cO(k \log k)} \cdot n$ time parameterized algorithm, and---most importantly, from our perspective---a polynomial sized kernel for every \PlanarFDeletion problem. More precisely, they showed that  \PlanarFDeletion admits a kernel of size $f(\F) \cdot k^{g(\F)}$ for some functions~$f$ and~$g$. The degree~$g$ of the polynomial in the kernel size grows very quickly; it is not even known to be computable. This result is the starting point of our research. 

\begin{quote}
Does \PlanarFDeletion have kernels whose size is \emph{uniformly polynomial}, of the form~$f(\F) \cdot k^c$ for a universal constant~$c$ that does not depend on~$\F$?
\end{quote}

\noindent We prove that some \PlanarFDeletion problems \emph{do not} have uniformly polynomial kernels (unless \containment). Since a graph class has bounded treewidth if and only if it excludes a planar graph as a minor, a canonical \PlanarFDeletion problem is \TreewidthEtaDeletion: can~$k$ vertices be removed to obtain a graph of treewidth at most~$\eta$? We denote by~$K_d$ and~$P_d$ a clique and path on~$d$ vertices, respectively. Our first theorem is the following lower bound result. 

\begin{theorem} \label{theorem:cliqueminordeletion:lowerbound}
Let~$d \geq 3$ be a fixed integer and~$\epsilon > 0$. If the parameterization by solution size~$k$ of one of the problems
\begin{enumerate}
	\item \KdplusoneDeletion, 
	\item \label{item:cliqueorpath} \CliqueorPathDeletion, and
	\item \TreewidthDminoneDeletion 
\end{enumerate}
admits a compression of bitsize~$\Oh(k^{\frac{d}{2} - \epsilon})$, or a kernel with $\Oh(k^{\frac{d}{4} - \epsilon})$ vertices, then \containment. In fact, even if the parameterization by the size~$x$ of a vertex cover of the input graph admits a compression of bitsize~$\Oh(x^{\frac{d}{2} - \epsilon})$ or a kernel with $\Oh(x^{\frac{d}{4} - \epsilon})$ vertices, then \containment.
\end{theorem}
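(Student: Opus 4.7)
The plan is to prove all three lower bounds via weak cross-composition, using the framework of Dell and van Melkebeek together with the cross-composition technique of Bodlaender, Jansen, and Kratsch. Concretely, to rule out a compression of bitsize $\Oh(p^{d/2-\epsilon})$ (where~$p$ denotes either the solution size~$k$ or the vertex cover number~$x$), it suffices to exhibit a polynomial-time reduction that combines~$t$ instances of an NP-hard source problem of size~$n$ into one instance of the target with $p = \Oh(t^{2/d} \cdot \poly(n))$ whose answer is the OR of the~$t$ inputs. The vertex-count bound of~$\Oh(p^{d/4-\epsilon})$ then follows immediately, since a graph on~$N$ vertices is describable in $\Oh(N^2)$ bits.

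For the source I would pick a canonical NP-hard problem of the appropriate ``dimensionality''---such as an exact $d$-uniform set-cover variant or a colored version of $d$-\textsc{SAT}---for which no compression below the composition-based bound is known. The~$t$ inputs are indexed by $d/2$-tuples drawn from a base vertex set~$B$ of size~$\Theta(t^{2/d})$; the vertex cover of the produced instance will essentially coincide with~$B$ together with~$\Oh(1)$ extra vertices per input gadget. For \KdplusoneDeletion the per-input obstruction is a $K_{d+1}$ whose vertices are determined by the corresponding tuple of~$B$, so that destroying the unique ``satisfied'' input's clique with at most~$k$ deletions is possible iff one of the inputs is \yes. The extension to \CliqueorPathDeletion requires the gadget to be ``shallow'' enough that no residual $P_{4d}$ remains after the intended clique deletions, while the extension to \TreewidthDminoneDeletion uses the fact that a $K_{d+1}$ minor certifies treewidth at least~$d$, so the same clique-based obstruction drives the lower bound---provided the residual graph has treewidth at most~$d-1$.

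The hardest part is obtaining the~$\Theta(t^{2/d})$ scaling with respect to the \emph{vertex cover number}, not only the solution size. A naive attachment of one gadget per input yields vertex cover~$\Theta(t)$, which is far too weak; the $d/2$-dimensional product construction is needed to pack~$t$ inputs into a base of size~$t^{2/d}$, and one must then verify that no unintended tuple of coordinates in~$B$ creates an accidental $K_{d+1}$ (or treewidth obstruction) that would falsely witness a \yes answer. Verifying this ``no spurious obstruction'' property, and ensuring that a single gadget skeleton can be adjusted to handle all three targets---in particular preserving correctness when the additional $P_{4d}$-free constraint is imposed---is the combinatorial crux of the argument.
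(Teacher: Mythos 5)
Your overall strategy---a direct weak cross-composition in the style of Dell--van Melkebeek/Bodlaender--Jansen--Kratsch with a $d/2$-dimensional product construction---is a legitimately different route from the paper, which instead gives a \emph{degree-two polynomial-parameter transformation} from \XSC (already known to admit no compression of size $\Oh(k^{d-\epsilon})$ by Dell and Marx) and then simply divides the exponent by two via Proposition~\ref{proposition:compressionbyppt}. The paper's route is modular and short precisely because all the composition machinery is outsourced to the known \XSC lower bound; your route would have to rebuild that machinery inside the graph gadgets. As written, however, your proposal has two genuine gaps. First, the vertex-cover accounting fails: you state that the vertex cover consists of~$B$ plus ``$\Oh(1)$ extra vertices per input gadget,'' but $\Oh(1)$ private vertices over~$t$ gadgets is $\Theta(t)$ vertices, not $\Theta(t^{2/d})$---exactly the ``naive'' failure mode you yourself flag one sentence later. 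Worse, for \KdplusoneDeletion a gadget realizing a $K_{d+1}$ from a $d/2$-tuple of~$B$ needs $d/2+1$ private vertices that are pairwise adjacent, so they \emph{cannot} be excluded from the vertex cover as an independent set; any per-gadget clique forces vertex cover $\Omega(t)$. (The paper sidesteps this by making all non-matrix vertices simplicial with neighborhoods inside a size-$\Oh(k^2)$ core, so the extra vertices form an independent set outside the cover.)

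Second, and more fundamentally, your sketch never explains how the \emph{content} of each source instance is encoded. A fixed $K_{d+1}$ indexed by a tuple of~$B$ carries no information about whether that input is a \yesinstance, so ``destroying the unique satisfied input's clique'' does not define an OR-composition: the composed instance must be solvable within budget if and only if \emph{some} input is satisfiable, and that equivalence has to be engineered through the gadget structure, which is the entire difficulty. You explicitly defer this (``the combinatorial crux'') rather than resolve it, and you likewise defer the verification that no spurious $K_{d+1}$, $P_{4d}$, or treewidth obstruction survives. Until those pieces are supplied the argument is a plan, not a proof. A cleaner path is the paper's: take the \XSC bound as a black box, build a single instance (the $n\times k$ matrix with column cliques, row dummy cliques, and simplicial enforcer vertices) whose vertex cover number and solution size are both $\Oh(k^2)$, prove the forward and backward equivalences of Lemma~\ref{lemma:lowerbound:constructgraph} once, and observe that the same construction serves all three target problems and both parameterizations.
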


\noindent 
Theorem~\ref{theorem:cliqueminordeletion:lowerbound} shows that the kernelization result of Fomin et al.~\cite{FominLMS12} is tight in the following sense: the degree~$g$ of the polynomial in the kernel sizes for \PlanarFDeletion must depend on the family \F. In fact, the theorem gives the stronger result that even parameterized by the \emph{vertex cover number} of the graph (a larger parameter), the \TreewidthEtaDeletion problem does not admit uniformly polynomial kernels unless \containment. This resolves an open problem of Cygan et al.~\cite{CyganLPPS12}.  As observed earlier, a graph class has bounded treewidth if and only if it excludes a planar graph as a minor. Thus, by restricting the \FDeletion problem to those \F that contain a planar graph, one exploits the properties of graphs of bounded treewidth to design polynomial kernels for \PlanarFDeletion. It is a natural question whether further restrictions on~$\F$ lead to uniformly polynomial kernels. However, the second item of Theorem~\ref{theorem:cliqueminordeletion:lowerbound} shows that even when~\F contains a \emph{path}, the degree of the polynomial must, in general, depend on the set~\F. This raises the question whether there are any general families of \FDeletion problems that admit uniformly polynomial kernels. 

Excluding planar minors results in graphs of bounded treewidth~\cite{RobertsonS-V}; excluding forest minors results in graphs of bounded pathwidth~\cite{RobertsonS83}; and excluding path minors results in graphs of bounded treedepth~\cite{NesetrilM06}. Since a graph class has bounded treedepth if and only if it excludes a path as a minor, a canonical \FDeletion problem when~\F contains a path is \TreedepthEtaDeletion.

\bigskip

\defparproblem{\probtdeta}{An undirected graph $G$ and a positive integer $k$.}{$k$}{Does there exist a subset $Z\subseteq V(G)$ of size at most~$k$ such that $\td(G-Z)\leq \eta$?}

\bigskip 

\noindent 
Here~$\td(G)$ denotes the treedepth of a graph~$G$. The set~$Z$ is called a \emph{treedepth-$\eta$ modulator} of~$G$. Surprisingly, we show that {\probtdeta} admits uniformly polynomial kernels. More precisely, we obtain the following theorem.

\begin{theorem}
\label{theorem:tdepthkernel}
\probtdeta admits a kernel with $2^{\cO(\eta^2)}k^6$ vertices.
\end{theorem}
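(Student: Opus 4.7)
The plan is to build the kernel in three conceptual stages: obtain an approximate treedepth-$\eta$ modulator, enrich it to an explicit treedepth decomposition that captures the global structure of the instance, and then apply explicit reduction rules that prune repetitive pieces.

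First I would compute in polynomial time an approximate treedepth-$\eta$ modulator $W \subseteq V(G)$ of size $\cO(f(\eta) \cdot k)$, for instance by iteratively extracting long paths: since $\td(H) \ge \lceil \log_2(|V(H)|+1) \rceil$ for any connected subgraph $H$, every treedepth-$\eta$ modulator must hit every path on $2^\eta + 1$ vertices, so a greedy peeling strategy yields a constant-approximation in $\eta$. If $|W|$ exceeds $f(\eta)\cdot k$ the instance is rejected as a \no-instance; otherwise I compute a treedepth-$\eta$ decomposition $T$ of $G - W$ in \fpt time.

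Second, I would refine $W$ using the LCA-closure construction ($\lcac$) in a uniform way: hang $W$ above $T$ to form a treedepth decomposition $\hat T$ of all of $G$, then replace $W$ by its LCA-closure so that every vertex in the closed modulator has a bounded-size interface with the rest of the decomposition. This is where the new structural lemmas about optimal treedepth decompositions advertised in the paper are used. With this enrichment, each vertex $v\in V(G)\setminus W$ has at most $\eta$ ancestors in $\hat T$, and its \emph{type}---the combination of its ancestor sequence in $\hat T$, its adjacency pattern to the $\cO(\eta)$ nearby LCA-closure vertices, and the isomorphism type of the subtree rooted at $v$---takes at most $2^{\cO(\eta^2)}$ values, independently of $k$.

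Third, I would formulate explicit reduction rules of the following shape: whenever the number of pending subtrees of a given type attached to the same location of $\hat T$ exceeds a threshold $p(k) = \cO(k^c)$, delete a surplus copy. Safeness is established by a pigeonhole-and-exchange argument: any treedepth-$\eta$ modulator of the reduced graph can be lifted to one of the original graph of the same size by replacing any deleted representative with a surviving one of the same type, using the rigidity guaranteed by the LCA-closure. After exhaustively applying the rules, each of the $2^{\cO(\eta^2)}$ types contributes $\cO(k^c)$ vertices at each of the $\cO(f(\eta) k)$ relevant attachment locations, and carefully balancing the degrees arising in the different rules yields the target bound $2^{\cO(\eta^2)}\cdot k^6$.

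The main obstacle is the structural analysis underpinning the second and third stages: proving that optimal treedepth decompositions behave rigidly enough under LCA-closure and the type equivalence that surplus subtrees can be removed without changing the existence of a size-$k$ solution, \emph{and} doing so without invoking the black-box protrusion replacement machinery. I expect this to require an explicit combinatorial understanding of how an optimal decomposition of the reduced instance relates to the decomposition of $G - W$, together with a marking procedure that pre-selects at most $k^{\cO(1)}$ witnesses per equivalence class before any deletion takes place, so that correctness of the reduction rules can be verified by direct exchange arguments rather than by appealing to monadic-second-order replacement theorems.
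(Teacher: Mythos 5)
Your overall architecture (approximate modulator, decomposition of $G-W$, pruning of repeated subtrees by an exchange argument) matches the paper's, but there is a genuine gap at the second stage on which everything else rests. The claim that the ``type'' of a pending subtree takes only $2^{\cO(\eta^2)}$ values is false as stated: a subtree of the decomposition of $G-W$ can be adjacent to an \emph{arbitrary subset of the modulator} $W$, and $|W|=\Theta(k)$, so the number of adjacency patterns into $W$ is $2^{\Theta(k)}$. Taking an LCA-closure inside the decomposition of $G-W$ does nothing to control these attachments, because the offending edges leave the forest and land in the modulator; and hanging $W$ above $T$ to form $\hat T$ produces a decomposition of height $\Theta(k)$, not $\cO(\eta)$, so ``at most $\eta$ ancestors in $\hat T$'' also fails. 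Controlling exactly this is where the paper spends most of its machinery: it first \emph{adds} edges between every pair of vertices joined by at least $k+\eta$ internally vertex-disjoint paths (Lemma~\ref{lemma:add:edge}), then inserts a small separator $Y_{p,q}$ for every remaining non-adjacent pair of modulator vertices, so that every component $C$ of $G-(S\cup Y)$ has $N(C)\cap S$ a clique plus at most $\eta$ further neighbors. Only then does Lemma~\ref{lem:boundingsolnintersection:updated} give the constant bound $2\eta$ on the intersection of a minimum modulator with $C$, which is what makes local exchange arguments sound and lets the paper bound degrees into $S$ by charging rather than by enumerating neighborhoods. Without a substitute for this step your type count, and hence the entire size analysis, does not go through.

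The safeness argument is also underspecified in a way that matters for treedepth specifically. Replacing a deleted subtree by a surviving one ``of the same type'' is not enough: to lift a solution $Z$ of the reduced instance back to the original graph one must re-attach the deleted piece below its neighbors inside an optimal decomposition of the reduced graph minus $Z$, and this requires each such neighbor $v$ to satisfy $\reach(v,F)\geq\td$ of the deleted piece. The paper certifies this by demanding, for \emph{each} neighbor $v$ of the deleted set, at least $\ell+\eta$ pairwise disjoint surviving subtrees of at least the same treedepth adjacent to $v$ (Lemma~\ref{lemma:remove:simplicial:component} via Lemma~\ref{lem:neighheght}); a uniform threshold of $\cO(k^c)$ copies ``at the same location'' only yields this after the neighborhoods are controlled as above. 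Two further points your sketch glosses over: the edge-addition and edge/vertex-removal rules cannot be applied exhaustively because they work against each other, which is why the paper processes the decomposition top-down with a recursive potential-function argument (Lemma~\ref{lemma:countleaves}); and the exponent $6$ has a concrete origin ($|Y|\in\cO(k^3)$, hence $\cO(k^5)$ components, each reduced to $\cO(k)$ vertices) rather than emerging from ``balancing degrees.''
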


We prove several new results about the structure of optimal treedepth decompositions and exploit this to obtain the desired kernel for \probtdeta. Unlike the kernelization algorithm of  Fomin et al.~\cite{FominLMS12}, our kernel is completely explicit. It does not use the machinery of protrusion replacement, which was introduced to the context of kernelization by Bodlaender et al.~\cite{H.Bodlaender:2009ng} and has subsequently been applied in various scenarios~\cite{FominLST10,FominLMPS11,GajarskyHOORRVS13,KimLPRRSS13}. Using protrusion replacement one can prove that kernelization algorithms exist, but the technique generally does not explicitly give the algorithm nor a concrete size bound for the resulting kernel.

\paragraph{Techniques.} The kernelization lower bound of Theorem~\ref{theorem:cliqueminordeletion:lowerbound} is obtained by reduction from \XSC, parameterized by the number of sets in the solution. Existing lower bounds exist for these problems due to Dell and Marx~\cite{DellM12} and Hermelin and Wu~\cite{HermelinW12}, showing that the degree of the kernel size must grow linearly with the cardinality~$d$ of the sets in the input. While the construction that proves Theorem~\ref{theorem:cliqueminordeletion:lowerbound} is relatively simple in hindsight, the fact that the construction applies to all three mentioned problems, and also applies to the parameterization by vertex cover number, makes it interesting.

Our main technical contribution lies in the kernelization algorithm for \TreedepthEtaDeletion. Our algorithm starts by enriching the graph~$G$ by adding edges between vertices that are connected by many internally vertex-disjoint paths. Like in prior work on \TreewidthEtaDeletion~\cite{CyganLPPS12}, adding such edges does not change the answer to the problem. We then apply an algorithm by Reidl et al.~\cite{ReidlRVS14} to compute an approximate treedepth-$\eta$ modulator~$S$ of the resulting graph. The remainder of the algorithm strongly exploits the structure of the bounded-treedepth graph~$G - S$. By combining separators for vertices that are not linked through many disjoint paths, we compute a small set~$Y$ such that all the bounded-treedepth connected components of~$G - (S \cup Y)$ have a special structure: their neighborhood in~$S$ forms a clique, while they have less than~$\eta$ neighbors in~$Y$. For such components~$C$ we can prove that optimal treedepth-$\eta$ modulators contain at most~$2\eta$ vertices from~$C$. This important fact allows us to infer that optimal solutions cannot disturb the structure of the graph~$G[C]$ too much. Using ideas inspired by earlier work on \textsc{Pathwidth}~\cite{BodlaenderJK12a}, it is relatively easy to bound the number of connected components of~$G - (S \cup Y)$. The main work consists of reducing the size of each such component.

We formulate three lemmata that analyze under which circumstances the structure of optimal treedepth-$\eta$ modulators is preserved when adding edges, removing edges, and removing vertices of the graph. By exploiting the fact that the solution size within a particular part~$C$ of the graph is constant, these lemmata ensure that even after deleting an optimal modulator from~$C$, the remainder of~$C$ forces a structure of treedepth decompositions of the remaining graph that is compatible with the graph modifications. Of particular interest is the lemma showing that if~$v$ dominates the neighborhood of component~$C$, then edges of~$v$ into the component may be safely discarded if certain other technical conditions are met.

The three described lemmata are the main tool in the reduction algorithm. To shrink components of~$G - (S \cup Y)$ we have to add some edges, while removing other edges, to create settings where vertices can be removed from the instance without changing its answer. The fact that we have to combine edge additions and removals makes our reduction algorithm quite delicate: we cannot simply formulate reduction rules for adding and removing edges and apply them exhaustively, as they would work against each other. We therefore present a recursive algorithm that processes a treedepth-$\eta$ decomposition of~$G - S$ from top to bottom, making suitable transformations that bound the degree of the modulator~$S$ into the remainder of the component~$C$. Using a careful measure expressed in terms of this degree, we can then prove that our algorithm achieves the desired size reduction.

\paragraph{Related Results.} \PlanarFDeletion has received considerable attention recently. To start with, Fomin et al.~\cite{FominLMS12} gave a $2^{\cO(k)} \cdot n$-time parameterized algorithm for a variant of \PlanarFDeletion where every graph in~$\cal F$ is connected. Kim et al.~\cite{KimLPRRSS13} showed that \PlanarFDeletion has an {\fpt} algorithm with running time $2^{\cO(k)} \cdot n^{2}$ time. They also showed, among many other results, that \PlanarFDeletion has linear kernel on topological-minor-free graphs. Cygan et al.~\cite{CyganLPPS12} studied the \TreewidthEtaDeletion problem parameterized by the vertex cover number of a graph and obtained a kernel of size~$k^{\cO(\eta)}$. In a later paper, Fomin et al.~\cite{FominJP14} studied \FDeletion parameterized by the vertex cover number of the graph. They obtained kernels of size~$k^{\cO(\Delta({\cal F}))}$, where~$\Delta({\cal F})$ is an upper bound on the maximum degree of any graph in $\cal F$.

Recently, treedepth has been the focus of several works. Reidl et al.~\cite{ReidlRVS14} gave an algorithm with running time $2^{\cO(t^2)} \cdot n$ to test whether the treedepth of graph is at most~$t$. Gajarsk\'{y} et al.~\cite{GajarskyHOORRVS13} obtained meta-theorems for kernelization when parameterized by a treedepth-$\eta$ modulator. They showed, for example, that problems satisfying certain technical conditions admit linear kernels on hereditary graphs of bounded expansion when parameterized by the size of a treedepth-$\eta$ modulator. 
\section{Preliminaries} \label{sec:prelim}
For a finite set~$X$ and non-negative integer~$n$ we use~$\binom{X}{n}$ to denote the collection of size-$n$ subsets of~$X$. We abbreviate~$\{1,\ldots,n\}$ by~$[n]$.

\subsection{Graphs}
All graphs we consider are finite, undirected, and simple. For a graph~$G$ we use~$V(G)$ to denote the vertex set and~$E(G)$ to denote the edge set, which is a subset of~$\binom{V(G)}{2}$. For graphs~$G$ and~$H$ we write~$H \subseteq G$ if~$H$ is a subgraph of~$G$, i.e., if~$V(H) \subseteq V(G)$ and~$E(H) \subseteq E(G)$. For~$S\subseteq V(G)$ we denote by $G - S$ the graph obtained from~$G$ after removing the vertices of~$S$ and their incident edges. In the case where~$S=\{u\}$, we abuse notation and write~$G - u$ instead of $G - \{u\}$. We denote by~$G[S]$ the subgraph of~$G$ induced by the set~$S$. For~$S\subseteq V(G)$, the \emph{open neighborhood} of~$S$ in~$G$, denoted~$N_{G}(S)$, is the set $\{u\in V(G) \setminus S \mid \exists v\in S \colon \{u,v\}\in E(G)\}$. Again, in the case where~$S=\{v\}$ we abuse notation and write~$N_{G}(v)$ instead of~$N_{G}(\{v\})$. The \emph{closed neighborhood} of~$S$ in~$G$, denoted~$N_G[S]$ is defined as~$N_G(S) \cup S$. Similarly,~$N_G[v] := N_G(v) \cup \{v\}$ for single vertices~$v$. The degree of a vertex~$v\in V(G)$, denoted by~$\deg_G(v)$, is~$\deg_{G}(v)=|N_G(v)|$. Given two distinct vertices~$u$ and~$v$ we define~$\lambda_{G}(u,v)$ as the maximum cardinality of a set of pairwise internally vertex-disjoint $uv$-paths in~$G$.

\subsubsection{Treedepth}
A \emph{rooted tree}~$T$ is a tree with one distinguished vertex~$r\in V(T)$, called the \emph{root} of~$T$. A \emph{rooted forest} is a disjoint union of rooted trees. The roots introduce natural parent-child and ancestor-descendant relations between vertices in forest. Let $x,y$ be vertices of a rooted forest~$F$. The vertex~$x$ is an \emph{ancestor} of~$y$ if~$x$ belongs to the path linking~$y$ to the root of the tree to which~$y$ belongs. It is a \emph{proper ancestor} if, in addition, it is not equal to~$y$. We denote by~$\anc_F(x)$ the proper ancestors of~$x$. Observe that this set is empty if~$x$ is the root of a tree. We may omit the index~$F$ if it is clear from the context. Vertex~$y$ is a \emph{descendant} of~$x$, if~$x$ is an ancestor of~$y$. A \emph{proper descendant} of~$x$ is a descendant that is not~$x$ itself. We denote by~$\pi(x)$ the parent of~$x$ in~$F$. The parent of the root of the tree is~$\bot$. Vertices whose parent is~$x$ are called the {\em{children}} of~$y$. 

For a rooted forest~$F$ and a vertex~$v\in V(F)$, we denote by~$F_v$ the subtree rooted at~$v$ that contains all~$v$'s descendants, including~$v$ itself. The \emph{depth} of a vertex~$x$ in a rooted forest~$F$ is the number of vertices on the unique simple path from~$x$ to the root of the tree to which~$x$ belongs. It is denoted by~$\depth(x,F)$. The \emph{height} of~$v$ is the maximum number of vertices on a simple path from~$v$ to a leaf in~$F_v$. The height of~$F$ is the maximum height of a vertex of~$F$ and is denoted by~$\height(F)$. Given a rooted forest~$F$ and a vertex~$v \in V(F)$ we define the \emph{reach} of~$v$ in~$F$ as
$$\reach(v,F) := \height(F)-\depth(v,F).$$ Intuitively, the reach of~$v$ shows the maximum height of a subtree that we can attach as a child of~$v$ without increasing the total height of the decomposition. Two vertices~$x$ and~$y$ are in \emph{ancestor-descendant} relation if~$x$ is an ancestor of~$y$ or vice versa.

\begin{define}[Treedepth]
A treedepth decomposition of a graph~$G$ is a rooted forest~$F$ on the vertex set~$V(G)$ (i.e.,~$V(G)=V(F)$) such that for every edge~$\{u,v\}$ of~$G$, the endpoints~$u$ and~$v$ are in ancestor-descendant relation. The \emph{treedepth} of~$G$, denoted~$\td(G)$, is the least $d \in\mathbb{N}$ such that there exists a treedepth decomposition~$F$ of~$G$ with~$\height(F)=d$.
\end{define}

We say that an edge~$\{p,q\}$ is represented in a treedepth decomposition if~$p$ and~$q$ are in ancestor-descendant relation. Observe that the treedepth of a disconnected graph is the maximum treedepth of its connected components.

\begin{observation}\label{obs:clque}
Let $G$ be a graph and $S\subseteq V(G)$ such that $G[S]$ is clique. If $F$ is a treedepth decomposition of $G$, then all the vertices of~$S$ belong to a root-to-leaf path of a tree~$T$ of~$F$.
\end{observation}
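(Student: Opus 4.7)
The plan is to prove the observation by showing that any two vertices of $S$ are forced to lie on the same root-to-leaf path, and then assembling this pairwise information into a global statement.

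First I would invoke the defining property of a treedepth decomposition: for every edge $\{u,v\}$ of $G$, the endpoints $u$ and $v$ must be in ancestor-descendant relation in $F$. Since $G[S]$ is a clique, every pair of distinct vertices $u,v \in S$ is joined by an edge of $G$, hence every such pair is in ancestor-descendant relation in $F$. In particular, any two vertices of $S$ lie in the same connected component (tree) of $F$, since vertices in different trees have no ancestor-descendant relationship. Let $T$ denote the tree of $F$ containing $S$.

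Next, I would argue that the ancestor relation restricted to $S$ is a total order. Reflexivity is trivial; antisymmetry follows because two distinct vertices cannot each be a proper ancestor of the other (the ancestor relation along a rooted tree is a partial order); transitivity is immediate from the definition; and totality is exactly the property established in the previous paragraph. Therefore there exists a (unique) vertex $v^\star \in S$ that is a descendant of every other vertex of $S$, namely the one of maximum depth in $T$.

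Finally, because every $u \in S \setminus \{v^\star\}$ is a proper ancestor of $v^\star$, all vertices of $S$ lie on the unique simple path in $T$ from the root to $v^\star$. Extending this path down to any leaf $\ell$ of $T$ that is a descendant of $v^\star$ (or taking $\ell = v^\star$ if $v^\star$ is itself a leaf), we obtain a root-to-leaf path of $T$ that contains all vertices of $S$, as required. There is no real obstacle here; the main thing to be careful about is ensuring that the ancestor relation on $S$ really is antisymmetric and total, so that the single ``deepest'' vertex $v^\star$ is well-defined and pins down the claimed path.
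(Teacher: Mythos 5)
Your proof is correct; the paper states this as an observation without proof, and your argument (clique edges force pairwise ancestor--descendant relations, hence a total order on $S$ with a deepest vertex $v^\star$ whose root-to-leaf path contains all of $S$) is exactly the standard reasoning the paper implicitly relies on.
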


\begin{observation} \label{observation:connectedsubgraph:ancestorpath}
Let~$G$ be a graph with treedepth decomposition~$F$ and let~$H$ be a connected subgraph of~$G$. All vertices of~$G$ belong to the same tree~$T$ in~$F$. If~$u,v \in V(H)$ are not in ancestor-descendant relation in~$T$, then some vertex of~$H$ is a common ancestor of~$u$ and~$v$.
\end{observation}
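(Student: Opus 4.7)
The plan has two parts, matching the two assertions in the statement.

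For the first assertion, I would argue that any two vertices of~$H$ lie in the same tree of the forest~$F$. Take $a,b \in V(H)$ and any path $a = z_0, z_1, \ldots, z_m = b$ in the connected subgraph~$H$. Each edge $\{z_i, z_{i+1}\}$ belongs to $E(G)$, so by the defining property of a treedepth decomposition, $z_i$ and $z_{i+1}$ are in ancestor-descendant relation in $F$; in particular they lie in the same tree of $F$. Transitivity along the path gives the claim.

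For the second assertion, fix a simple path $P\colon u = w_0, w_1, \ldots, w_\ell = v$ in $H$ and let $w^\star := w_k$ be a vertex of $P$ that minimizes $\depth(\cdot,T)$ (break ties arbitrarily). Since $u$ and $v$ are not in ancestor-descendant relation, $w^\star$ cannot coincide with $u$ or $v$: if $w^\star = u$ then $v$ would be a descendant of $u$ by the argument below, contradicting the hypothesis, and symmetrically for $v$. I claim that \emph{every} $w_j$ on $P$ is a descendant of $w^\star$ in $T$. Applying this to $j = 0$ and $j = \ell$ shows that $w^\star \in V(H)$ is a common ancestor of both $u$ and $v$, finishing the proof.

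The claim is proven by induction along~$P$ starting at $w^\star = w_k$, moving in either direction. The base case is trivial. For the inductive step, assume $w_i$ is a descendant of $w^\star$ and consider the next vertex $w_{i+1}$ on the subpath moving away from $w^\star$. Because $\{w_i,w_{i+1}\}\in E(G)$, the two vertices are in ancestor-descendant relation in $T$. If $w_i$ is the ancestor of $w_{i+1}$, then $w_{i+1}$ is a descendant of $w_i$, hence of $w^\star$, and we are done. The delicate case is when $w_{i+1}$ is a (proper) ancestor of $w_i$: then $w_{i+1}$ lies on the root-to-$w_i$ path, which passes through $w^\star$. Either $w_{i+1}$ lies weakly above $w^\star$ on this path, giving $\depth(w_{i+1},T) \le \depth(w^\star,T)$, or strictly below it, in which case $w_{i+1}$ is already a descendant of $w^\star$ as required. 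In the former case, the minimality of $\depth(w^\star,T)$ forces equality, so $w_{i+1} = w^\star$; this is impossible because $P$ is a simple path on which $w^\star$ appears only at position~$k$, and we are moving strictly away from it. The step is therefore established, and the induction carries the property all the way to $u$ and $v$.

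The main obstacle, as the argument above indicates, is the "upward" case in the induction: one must rule out the path climbing above $w^\star$. This is precisely where simplicity of~$P$ and the choice of $w^\star$ as a depth minimizer on~$P$ are essential, and it is the only nontrivial piece of the argument.
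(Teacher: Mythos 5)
Your proof is correct. The paper states this as an unproved Observation, and your argument --- all of $H$ lies in one tree because each edge of the connecting path must be represented, and the minimum-depth vertex $w^\star$ on a simple $u$--$v$ path in $H$ is an ancestor of every path vertex (the induction's ``upward'' case being excluded by depth-minimality together with simplicity of the path) --- is exactly the standard justification the paper implicitly relies on, with the one delicate step handled properly.
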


\begin{observation} \label{observation:neighbors:of:subtree}
If~$F$ is a treedepth decomposition of~$G - S$ for some~$S \subseteq V(G)$ and~$v$ is a node in a tree~$T$ of~$F$, then all vertices of~$N_G(T_v)$ are ancestors of~$v$ or belong to~$N_G(T_v) \cap S$.
\end{observation}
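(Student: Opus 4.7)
The plan is a short case analysis on an arbitrary $u \in N_G(T_v)$. Either $u \in S$, in which case $u \in N_G(T_v) \cap S$ and we are done, or $u \in V(G) \setminus S = V(F)$ and we must show that $u$ is an ancestor of $v$ in~$F$. I would focus on the nontrivial case.

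In that case, by definition of the open neighborhood there exists a witness $w \in T_v$ with $\{u,w\} \in E(G)$. Since $T_v \subseteq V(F) = V(G) \setminus S$, we have $w \notin S$, so both endpoints of the edge~$\{u,w\}$ lie in~$G - S$ and hence $\{u,w\} \in E(G-S)$. Because $F$ is a treedepth decomposition of $G-S$, this edge must be represented in~$F$, so $u$ and $w$ are in ancestor-descendant relation in the tree~$T$ of~$F$ that contains them.

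The main (very small) obstacle is to pin down the correct direction of this relation. If $u$ were a descendant of $w$ in $T$, then since $w \in T_v$ means $w = v$ or $w$ is a proper descendant of~$v$, $u$ would be a descendant of~$v$, i.e.\ $u \in T_v$, contradicting $u \in N_G(T_v)$. Hence $u$ is a proper ancestor of $w$. To conclude that $u$ is an ancestor of $v$ in particular, observe that both $u$ and $v$ are ancestors of $w$ (since $v = w$ or $v$ is a proper ancestor of $w$); the ancestors of~$w$ lie on the unique path from~$w$ to the root of~$T$ and are therefore totally ordered. So either $u$ is an ancestor of~$v$, which is what we want, or $v$ is an ancestor of~$u$, which would again force $u \in T_v$ and contradict the choice of $u$. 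This completes the case analysis and establishes the observation.
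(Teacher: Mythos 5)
Your proof is correct and is exactly the argument the paper leaves implicit by stating this as an unproven observation: the edge from a witness $w \in T_v$ to $u \notin S$ survives in $G-S$, must be represented in $F$, and the total ordering of the ancestors of $w$ on its root path forces $u$ to be an ancestor of $v$ rather than a member of $T_v$. No gaps.
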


We will work with the notion of a \emph{nice treedepth decomposition}. A treedepth decomposition~$F$ of a graph~$G$ is a nice treedepth decomposition if, for every~$v\in V(F)$, the subgraph of~$G$ induced by the vertices in~$F_v$ is connected.

\begin{lemma}[{\cite{ReidlRVS14}}] \label{lemma:opt:nice:decomposition}
For every fixed~$\eta$ there is a polynomial-time algorithm that, given a graph~$G$, either determines that~$\td(G) > \eta$ or computes a nice treedepth decomposition~$F$ of~$G$ of depth~$\td(G)$.
\end{lemma}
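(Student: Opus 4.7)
The plan is to use the treedepth-testing subroutine of Reidl et al.~\cite{ReidlRVS14} as a decision oracle and then construct a nice decomposition from scratch via a straightforward recursive procedure. Since Reidl et al.\ give, for every fixed $t\leq\eta$, a polynomial-time algorithm that decides whether $\td(H)\leq t$, we can in polynomial time compute $\td(G)$ exactly; if this value exceeds $\eta$ we report so and halt.

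Otherwise, I would build a nice decomposition by unfolding the defining recurrence of treedepth. Recall that $\td(\emptyset)=0$, that for disconnected $H$ the treedepth equals the maximum treedepth of any connected component, and that for a connected graph $H$ on at least one vertex we have $\td(H)=1+\min_{r\in V(H)}\td(H-r)$. Define a procedure $\textsc{Nice}(H)$ that returns a treedepth decomposition of $H$ of height $\td(H)$ as follows. If $V(H)=\emptyset$, return the empty forest. If $H$ is disconnected, recurse on each connected component and return the disjoint union of the results. Otherwise $H$ is connected: scan the vertices and pick any $r\in V(H)$ for which the oracle certifies $\td(H-r)\leq\td(H)-1$, which the recurrence guarantees to exist. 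Install $r$ as the root of the tree representing $H$, call $\textsc{Nice}(H-r)$ recursively, and attach each tree in its output as a child of $r$.

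Correctness would follow by induction on $|V(H)|$. The output is a treedepth decomposition of height at most $\td(H)$ because the depth of each child subtree is at most $\td(H-r)\leq \td(H)-1$. Niceness is automatic from the case distinction: the vertex set of the subtree rooted at any node $v$ equals the vertex set of the connected graph that $\textsc{Nice}$ was invoked upon when $v$ was chosen as its root, which is connected by construction. The running time is polynomial for fixed $\eta$, since there are at most $|V(G)|$ recursive calls, each of which performs $\cO(|V(G)|)$ treedepth-tests, and each test runs in time $2^{\cO(\eta^2)}\cdot |V(G)|$ by the Reidl et al.\ algorithm.

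The only genuine obstacle in this argument is the existence of a polynomial-time test for $\td(\cdot)\leq t$; once that is available as a black box, the self-reducible construction sketched above turns it into an algorithm producing a \emph{nice} optimal decomposition with no further combinatorial work.
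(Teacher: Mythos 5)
Your argument is correct. The self-reduction is sound: for connected $H$ the recurrence $\td(H)=1+\min_{r\in V(H)}\td(H-r)$ guarantees a root $r$ with $\td(H-r)\leq\td(H)-1$ exists and can be found by $\Oh(|V(H)|\cdot\eta)$ oracle calls, the height bound follows by induction, and niceness is indeed automatic because every node becomes the root of the tree built for the connected graph on which the recursive call was made. Each vertex is installed as a root exactly once, so the total number of oracle invocations is polynomial for fixed $\eta$.

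The route is, however, genuinely different from the one the paper relies on. The paper imports the lemma directly from Reidl et al., whose algorithm \emph{constructs} an optimal treedepth decomposition (not merely decides the threshold), and the paper separately proves Proposition~\ref{proposition:nicedec:nothigher}, which post-processes an arbitrary decomposition into a nice one of no greater height by repeatedly re-hanging subtrees $F_u$ whose vertices have no neighbor of $\pi(u)$. Composing these two yields the lemma with a single run of the $2^{\Oh(\eta^2)}\cdot n$ algorithm plus a linear-time cleanup. Your approach needs only the decision version as a black box and builds niceness in by construction, which is conceptually more elementary and self-contained; the price is a polynomial overhead of roughly $n\cdot\eta$ oracle calls, and it does not yield the reusable ``niceify any given decomposition without increasing any depth'' statement of Proposition~\ref{proposition:nicedec:nothigher}, which the paper needs again later (e.g.\ inside Lemma~\ref{lemma:switch}). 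So your proof establishes the cited lemma, but it would not substitute for the proposition.
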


\begin{proposition} \label{proposition:nicedec:nothigher}
For any treedepth decomposition~$F$ of a graph~$G$, there exists a nice treedepth decomposition~$F^*$ of~$G$ whose height does not exceed the height of~$F$, such that no vertex has greater depth in~$F^*$ than in~$F$.
\end{proposition}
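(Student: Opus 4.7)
I would prove the proposition by strong induction on $|V(G)|$, using the following projection operation as the workhorse. For any subset $S\subseteq V(G)$, define the \emph{projection} of $F$ onto $S$ to be the rooted forest on vertex set $S$ in which the parent of each $x\in S$ is declared to be the closest proper ancestor of $x$ in $F$ that also lies in $S$ (or $\bot$ if no such ancestor exists). Since $F$ is a treedepth decomposition of $G$, any edge of $G[S]$ has its endpoints in ancestor-descendant relation in $F$, and hence also in the projection; thus the projection is automatically a valid treedepth decomposition of $G[S]$, and clearly $\depth(x,\text{projection})\le\depth(x,F)$ for every $x\in S$.

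The key structural claim to establish is: whenever $S$ induces a connected subgraph of $G$, the projection of $F$ onto $S$ is a single tree, not merely a forest. The argument is by contradiction. If the projection had two distinct roots $r_1,r_2\in S$, then neither is an ancestor of the other in $F$, so $r_1$ and $r_2$ lie in distinct subtrees rooted at two different children of their lowest common ancestor $w$ in $F$; let $w_1$ be the child of $w$ whose subtree contains $r_1$. Since $G[S]$ is connected, there is a path from $r_1$ to $r_2$ in $G[S]$. To leave $V(F_{w_1})$, the path would have to cross an edge whose upper endpoint is a proper ancestor of $w_1$ (by the ancestor-descendant property of $F$), and hence a proper ancestor of $r_1$, lying in $S$; this contradicts $r_1$ being a root of the projection. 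Thus the path never escapes $V(F_{w_1})$, yet $r_2\notin V(F_{w_1})$, a contradiction.

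Equipped with this lemma, the recursive construction branches on the connectivity of $G$. If $G$ is disconnected with components $G_1,\dots,G_m$, I project $F$ onto each $V(G_i)$ (a single tree by the lemma), recursively construct a nice decomposition $F_i^*$ of $G_i$, and take the disjoint union as $F^*$. If $G$ is connected, then $F$ consists of a single tree with root $r$; letting $H_1,\dots,H_p$ be the connected components of $G-r$, I project $F$ onto each $V(H_j)$ to obtain a single tree $F_j$, recursively produce a nice decomposition $F_j^*$ of each $H_j$, and assemble $F^*$ by declaring $r$ the root and attaching the $F_j^*$'s as child subtrees. Termination is immediate since each recursive call is made on a graph with strictly fewer vertices, losing at least $r$.

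Verification is routine. \emph{Validity}: edges of $G$ incident to $r$ go to vertices in some $H_j$, which become descendants of $r$ in $F^*$, while edges disjoint from $r$ lie in one $H_j$ and are handled by the inductively valid $F_j^*$. \emph{Niceness}: $G[V(F^*_r)]=G$ is connected by assumption, and for any deeper node $v\in V(F_j^*)$ we have $V(F^*_v)\subseteq V(H_j)$, so $G[V(F^*_v)]=H_j[V(F^*_v)]$, which is connected by the inductive hypothesis. \emph{Depth preservation}: for $v\in V(H_j)$ the depth of $v$ in $F_j$ counts precisely those ancestors of $v$ in $F$ that belong to $V(H_j)$; since $r$ is an ancestor of $v$ in $F$ but $r\notin V(H_j)$, this count is at most $\depth(v,F)-1$, and combining with the inductive depth preservation for $F_j^*$ yields $\depth(v,F^*)\le 1+(\depth(v,F)-1)=\depth(v,F)$. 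The bound on the height of $F^*$ then follows immediately from the bound on depths. The main obstacle is proving the single-tree lemma for projections of connected subgraphs; the remaining steps amount to straightforward bookkeeping.
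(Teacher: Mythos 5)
Your proof is correct, but it follows a genuinely different route from the paper's. The paper modifies~$F$ in place by an iterative lifting rule: while some non-root node~$u$ has no vertex of~$F_u$ adjacent in~$G$ to~$\pi(u)$, it detaches~$F_u$ and re-hangs it at~$\pi(\pi(u))$ (or makes~$u$ a root); each such step preserves validity and never increases depths, and once no step applies, an induction on height shows that~$G[F_u]$ is connected for every~$u$. You instead rebuild the decomposition top-down via projections onto connected vertex sets, with the single-tree property of such projections as the key lemma; that lemma is essentially a strengthening of the paper's Observation~\ref{observation:connectedsubgraph:ancestorpath}, and your recursion on the components of~$G-r$ then handles the bookkeeping. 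The paper's argument is shorter and stays close to the given~$F$, while yours isolates a clean, reusable structural fact and yields a fully explicit recursive construction; both are sound. One small point to tighten: in your single-tree lemma you take the lowest common ancestor of~$r_1$ and~$r_2$, which presupposes that they lie in the same tree of the forest~$F$. The case where they lie in different trees should be dispatched first (it is immediate, since no edge of~$G$ can join two trees of a treedepth decomposition), and the same observation is what underlies your unproved assertion that~$F$ is a single tree whenever~$G$ is connected.
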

\begin{proof}
Let~$F$ be a treedepth decomposition of~$G$. While there is a node~$u \in F$ that is not a root and no vertex of~$F_u$ is adjacent in~$G$ to~$\pi(u)$, do the following. If~$\pi(u)$ is not a root, then remove the edge in~$F$ from~$u$ to~$\pi(u)$ and make~$u$ a child of~$\pi(\pi(u))$. If~$\pi(u)$ is a root, then remove the edge from~$u$ to~$\pi(u)$ and make~$u$ the root of the resulting tree in~$F$. It is easy to see that, since no vertex in~$F_u$ was adjacent to~$\pi(u)$, we do not violate the validity of the treedepth decomposition. It is also easy to see that the depth of vertices does not increase.

If the operation cannot be applied anymore, then for every~$u \in F$ that is not a root, the subtree~$F_u$ contains a vertex adjacent to~$\pi(u)$. A simple induction on the height of~$u$ then shows that~$G[F_u]$ is connected for every~$u$, implying that~$F$ is a nice treedepth decomposition.
\end{proof}

We use a known approximation algorithm for \probtdeta in the kernelization.

\begin{lemma}[{\cite[Lemma 2]{GajarskyHOORRVS13}}] \label{lemma:tdmodapprox}
Fix~$\eta \in \mathbb{N}$. Given a graph~$G$, one can in polynomial time compute a subset~$S\subseteq V(G)$ such that~$\td(G-S)\leq \eta$ and~$|S|$ is at most~$2^\eta$ times the size of a minimum treedepth-$\eta$ modulator of $G$.
\end{lemma}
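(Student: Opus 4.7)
The plan is to design an iterative algorithm that uses Lemma~\ref{lemma:opt:nice:decomposition} as an oracle: since $\eta$ is fixed, this provides a polynomial-time test of whether $\td(H) \leq \eta$ for any input graph $H$. Starting from $S = \emptyset$, while $\td(G - S) > \eta$ I would find a connected subgraph $H \subseteq G - S$ that is \emph{$(\eta+1)$-critical for treedepth}, meaning $\td(H) \geq \eta + 1$ while $\td(H - v) \leq \eta$ for every $v \in V(H)$. Then I add all of $V(H)$ to $S$ and repeat.

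To compute such an $H$ in polynomial time, I would start from any connected component of $G - S$ whose treedepth exceeds $\eta$, and iteratively remove vertices $v$ as long as the remaining graph still has treedepth more than $\eta$. When no further vertex can be removed, the current $H$ is critical by definition. Each shrinking step performs polynomially many oracle calls, and the whole procedure terminates after at most $|V(G)|$ shrinkings per outer iteration.

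The key structural ingredient is the (classical) fact that every $(\eta+1)$-critical graph has at most $2^\eta$ vertices. I would prove this by induction on $\eta$. For any critical $H$, since $\td(H - v) \leq \eta$ for every $v$, the recursive definition of treedepth gives $\td(H) \leq 1 + \eta$, so $\td(H) = \eta + 1$ exactly. Take an optimal treedepth decomposition $F$ of $H$ of height $\eta + 1$; since $H$ is connected, $F$ is a single tree with root $r$. Using criticality one argues that $r$ has only one child: if $r$ had two children each rooting a subtree of height~$\eta$, one could exhibit a leaf whose removal does not drop the treedepth, contradicting criticality. The subtree below $r$ then has height $\eta$ and induces a subgraph to which the inductive hypothesis applies (via a suitable restriction to preserve criticality), so it has at most $2^{\eta-1}$ vertices, giving $|V(H)| \leq 1 + (2^\eta - 1) = 2^\eta$. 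The bound is tight at $H = P_{2^\eta}$.

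For the approximation guarantee, let $S^*$ be an optimum treedepth-$\eta$ modulator. At the moment an iteration selects a witness $H$, we have $H \subseteq G - S$, so if $S^* \cap V(H) = \emptyset$ then $H$ would be a subgraph of $G - S^*$ with $\td(H) > \eta$, contradicting $\td(G - S^*) \leq \eta$ and subgraph-monotonicity of treedepth. Hence $S^*$ meets $V(H)$; and since successive witnesses are pairwise vertex-disjoint (they are removed from the residual graph before the next iteration), the number of iterations is at most $|S^*|$. Each contributes at most $2^\eta$ vertices to $S$, yielding $|S| \leq 2^\eta \cdot |S^*|$.

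The main obstacle is the size bound $|V(H)| \leq 2^\eta$ for $(\eta+1)$-critical graphs. In particular, the claim that criticality forces an essentially path-like optimal decomposition requires care, because one must argue that both connectedness and vertex-criticality descend appropriately to the subtree below the root so that induction applies without losing the tight factor.
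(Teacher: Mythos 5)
The paper does not prove this lemma; it imports it verbatim from Gajarsk\'y et al.\ \cite{GajarskyHOORRVS13}, whose argument is exactly your greedy scheme: repeatedly extract a minimal subgraph of treedepth more than~$\eta$, put all its vertices into~$S$, and use disjointness plus the fact that the optimum must hit every such witness. Your oracle use of Lemma~\ref{lemma:opt:nice:decomposition}, the disjointness/charging argument, and the resulting factor~$2^\eta$ are all correct, \emph{provided} every vertex-critical graph of treedepth~$\eta+1$ has at most~$2^\eta$ vertices. (Connectivity of the witness is automatic: a disconnected vertex-critical graph would have a component of treedepth~$>\eta$, and deleting a vertex of another component could not lower the treedepth.)

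The gap is in your proof of that size bound, which is the entire content of the lemma. The claim that the root of an optimal decomposition of a critical graph has only one child is false: for $P_4$ (critical for $\eta=2$) an optimal decomposition roots at an internal vertex and has two children, and for $P_{2^\eta}$ the balanced optimal decomposition has two child subtrees of heights $\eta$ and $\eta-1$ containing $2^{\eta-1}$ and $2^{\eta-1}-1$ vertices respectively. This last example shows that no recursion of the shape ``$1$ plus the size of a single critical subgraph of treedepth~$\eta$'' (i.e.\ $1+2^{\eta-1}$) can reach the tight value~$2^\eta$; your displayed arithmetic $1+(2^\eta-1)=2^\eta$ silently replaces the inductive bound $2^{\eta-1}$ by $2^\eta-1$. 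A correct induction must control the root together with \emph{all} child subtrees, including those of smaller height, and the ``suitable restriction to preserve criticality'' that you defer is precisely where the difficulty lies: the graph induced by the tall subtree of a critical graph is generally not critical (e.g.\ $P_7$ below the endpoint-rooted decomposition of $P_8$). The bound you need is a genuine theorem of Dvo\v{r}\'ak, Giannopoulou and Thilikos (\emph{Forbidden graphs for tree-depth}), which shows that subgraph-minimal obstructions for treedepth at most~$\eta$ have at most~$2^\eta$ vertices; the vertex-critical case then follows because a vertex-critical graph must coincide with the vertex set of any subgraph-minimal obstruction it contains. Either cite that result as a black box or supply its proof; as written, the central combinatorial step does not go through.
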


\subsection{Parameterized complexity and kernelization} \label{section:lowerbound:tools}
In this section we define the concepts needed to prove lower bounds on kernelization. It will be convenient to consider lower bounds against compressions into small instances of arbitrary problems, rather than merely compressions of one parameterized problem into itself (which is a kernelization).

\begin{definition}[Compression] 
Let~$\Q,\Q' \subseteq \Sigma^* \times \mathbb{N}$ be parameterized problems and let~$f \colon \mathbb{N} \to \mathbb{N}$ be a function. A \emph{size-$f$ compression of~$\Q$ into~$\Q'$} is an algorithm that, given an instance~$(x,k) \in \Sigma^* \times \mathbb{N}$, takes time polynomial in~$|x| + k$ and outputs an instance~$(x', k') \in \Sigma^* \times \mathbb{N}$ such that:
\begin{enumerate}
	\item $(x,k) \in \Q$ if and only if~$(x',k') \in \Q'$, and
	\item both~$|x'|$ and~$k'$ are bounded by~$f(k)$.
\end{enumerate}
We say that~$\Q$ has a compression of size~$f$ if there is a parameterized problem~$\Q'$ for which there exists a size-$f$ compression of~$\Q$ into~$\Q'$. A \emph{kernelization of size~$f$} for a parameterized problem~$\Q$ is simply a compression of~$\Q$ into~$\Q$.
\end{definition}

To transfer lower bounds from one problem to another, we use the following type of reducibility.

\begin{definition}[Polynomial-parameter transformation]
Let~$\Q,\Q' \subseteq \Sigma^* \times \mathbb{N}$ be parameterized problems and let~$d \in \mathbb{N}$. A \emph{degree-$d$ polynomial-parameter transformation} from~$\Q$ to~$\Q'$ is an algorithm that, given an instance~$(x,k) \in \Sigma^* \times \mathbb{N}$, takes time polynomial in~$|x| + k$ and outputs an instance~$(x',k') \in \Sigma^* \times \mathbb{N}$ such that:
\begin{enumerate}
	\item $(x,k) \in \Q$ if and only if~$(x',k') \in \Q'$, and
	\item $k' \in \Oh(k^d)$.
\end{enumerate}
\end{definition}

Proposition~\ref{proposition:compressionbyppt} shows how to obtain a compression from a polynomial-parameter transformation.

\begin{proposition}\label{proposition:compressionbyppt}
Let~$\Q$ and~$\Q'$ be parameterized problems, and let~$c, d \in \mathbb{N}$. If there is a degree-$d$ polynomial-parameter transformation from~$\Q$ to~$\Q'$, and problem~$\Q'$ has a compression of size~$\Oh(k^c)$, then~$\Q$ has a compression of size $\Oh(k^{c \cdot d})$.
\end{proposition}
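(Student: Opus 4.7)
The plan is to prove this by straightforward composition: chain the polynomial-parameter transformation with the compression. Let $\Q''$ be the parameterized problem into which $\Q'$ compresses via the assumed size-$\Oh(k^c)$ compression; I will construct a size-$\Oh(k^{cd})$ compression from $\Q$ directly into $\Q''$, which by definition shows that $\Q$ admits a compression of that size.

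Concretely, given an input instance $(x,k)$ of $\Q$, I first run the degree-$d$ polynomial-parameter transformation to produce an instance $(x', k')$ of $\Q'$ with $k' \in \Oh(k^d)$ and $(x,k) \in \Q \iff (x',k') \in \Q'$. Then I feed $(x', k')$ into the assumed compression algorithm for $\Q'$, producing an instance $(x'', k'')$ of $\Q''$ with $|x''|, k'' \in \Oh((k')^c) \subseteq \Oh(k^{cd})$ and $(x', k') \in \Q' \iff (x'', k'') \in \Q''$. Chaining the equivalences shows $(x,k) \in \Q \iff (x'', k'') \in \Q''$, so this is a valid size-$\Oh(k^{cd})$ compression of $\Q$ into $\Q''$.

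The only non-trivial point to check is that the overall procedure runs in time polynomial in $|x| + k$. The transformation runs in time polynomial in $|x| + k$, so in particular $|x'| + k'$ is bounded by a polynomial in $|x| + k$. The compression then runs in time polynomial in $|x'| + k'$, and the composition of two polynomials is a polynomial, so the overall running time is polynomial in $|x| + k$ as required. The size bound on the output follows from $k' \in \Oh(k^d)$ and the size-$\Oh((k')^c)$ guarantee of the compression.

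There is no genuine obstacle here; the proposition is essentially an unpacking of the definitions, and the main thing to be careful about is to cast the output as a compression into some problem $\Q''$ rather than back into $\Q$ itself, since compositions of this form do not in general give kernelizations even when both ingredients are kernelizations.
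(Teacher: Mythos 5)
Your proof is correct and follows essentially the same route as the paper's: compose the polynomial-parameter transformation with the compression for $\Q'$, yielding a compression of $\Q$ into the target problem of the latter with size $\Oh((k^d)^c) = \Oh(k^{cd})$. Your extra care about the polynomial running time of the composition and about the output being a compression into some $\Q''$ (not a kernelization) is sound and consistent with the paper's definitions.
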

\begin{proof}
The compression algorithm for~$\Q$ works as follows. On input~$(x,k)$, it first applies the polynomial-parameter transformation to compute an equivalent instance~$(x',k')$ of~$\Q'$ whose parameter value~$k'$ is~$\Oh(k^d)$. It then applies the compression for~$\Q'$ to the instance~$(x',k')$, resulting in an instance~$(x^*,k^*)$. By the guarantee of the compression, the size and parameter of the compressed instance are bounded by~$\Oh((k^d)^c) \in \Oh(k^{c\cdot d})$. Since both steps preserve the answer and run in polynomial time, this is a valid compression for~$\Q$.
\end{proof}

For further background on parameterized complexity and kernelization we refer to one of the textbooks~\cite{DowneyF13,FlumG06,Niedermeier06} or recent surveys~\cite{Bodlaender09,Kratsch14,LokshtanovMS12}.
\section{Kernelization Lower Bounds}

We turn our attention to kernelization and compression lower bounds. To prove that \FDeletion does not have uniformly polynomial kernels for suitable families \F, we give a polynomial-parameter transformation from a problem for which a compression lower bound is known. The following problem is the starting point for our transformation.

\defparproblem{\XSC}
{A finite set~$U$ of size~$n$, an integer~$k$, and a set family~$\F \subseteq 2^{U}$ of size-$d$ subsets of~$U$.}
{The universe size~$n$.}
{Is there a subfamily~$\F' \subseteq \F$ consisting of at most~$k$ sets such that every element of~$U$ is contained in exactly one subset of~$\F'$?}

Observe that since all subsets in~$\F$ have size exactly~$d$, the requirement that each universe element is contained in exactly one subset in~$\F'$ implies that a set~$\F'$ can only be a solution if it consists of~$n/d$ subsets. This implies that~$k = n/d$ for all nontrivial instances of the problem. Hermelin and Wu~\cite{HermelinW12} obtained a compression lower bound for \XSC. The same problem was also studied by Dell and Marx~\cite{DellM12} under the name \textsc{Perfect $d$-Set Matching}. They obtained a slightly stronger compression lower bound, which forms the starting point for our reduction.

\begin{theorem}[{\cite[Theorem 1.2]{DellM12}}] \label{theorem:xsc:nocompression}
For every fixed~$d \geq 3$ and~$\epsilon > 0$, there is no compression of size~$\Oh(k^{d - \epsilon})$ for \XSC unless \containment.
\end{theorem}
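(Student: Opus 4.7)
The plan is to prove the theorem via the packing-based compression lower bound framework of Dell and Van Melkebeek, in the form customarily phrased as a ``weak $d$-cross-composition'': if an NP-hard language $L$ admits a polynomial-time procedure combining $t$ instances of $L$ into a single instance of a parameterized problem $\Q$ with parameter bounded by $\Oh(t^{1/d} \cdot \poly(n_{\max}))$ that is a \yes-instance iff some input instance is, then $\Q$ has no compression of size $\Oh(k^{d-\epsilon})$ unless \containment. Given this framework, the work reduces to constructing such a weak $d$-cross-composition into \XSC parameterized by universe size~$n$ (noting that~$n = dk$ for nontrivial instances, so the two parameterizations differ only by the constant~$d$).

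The natural source language is \emph{Perfect $d$-Dimensional Matching}, which is NP-hard for every fixed $d \geq 3$ and is itself an exact $d$-uniform cover problem, so the combinatorial match with \XSC is already strong. First I would equalize the $t = s^d$ base instances to a common universe size $n_0 = \poly(n_{\max})$ by padding, and index them by $d$-tuples $(j_1,\ldots,j_d) \in [s]^d$. Then I would build the composed \XSC instance on a universe consisting of:~a shared ``content'' block $B$ of $n_0$ elements (into which the universes of all base instances are relabelled); $d$ ``coordinate'' blocks $C_1,\ldots,C_d$ with $C_i = \{c_i^1,\ldots,c_i^s\}$; and a small pool of fresh ``slack'' elements. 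For each set $A$ of the instance indexed by $(j_1,\ldots,j_d)$, a corresponding $d$-element set $A \cup \{c_1^{j_1},\ldots,c_d^{j_d}\}$ (re-partitioned to be $d$-uniform using the slack elements if necessary) is added to $\F$. Auxiliary $d$-uniform slack sets are added so that, for any fixed coordinate tuple $(j_1^*,\ldots,j_d^*)$, the unused coordinate elements $\{c_i^j : j \neq j_i^*\}$ can be covered exactly once without touching $B$.

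The correctness argument then has two directions: (i) a perfect matching in the base instance $I_{(j_1^*,\ldots,j_d^*)}$ lifts to an exact cover of the composed universe by picking the $n_0/d$ corresponding ``active'' sets together with the slack sets that mop up the inactive coordinate elements; (ii) in any exact cover, each $c_i^j$ is covered exactly once, and the slack sets are designed so that for each $i$ exactly one value $j_i^*$ is covered by an active set and the remaining $s-1$ are covered by slack sets---forcing the active sets to come from a single base instance $I_{(j_1^*,\ldots,j_d^*)}$ and to form a perfect matching there. The resulting composed instance has universe size $n_0 + ds + \Oh(s)$, which is $\Oh(s \cdot \poly(n_{\max})) = \Oh(t^{1/d} \cdot \poly(n_{\max}))$, as required.

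The main obstacle is making the coordinate-consistency gadget simultaneously $d$-uniform \emph{and} of aggregate universe size only $\Oh(d \cdot s)$ rather than $\Oh(s^d)$: one must not use one ``switch'' element per tuple, but instead only one per coordinate axis, and then design the slack sets carefully so that they can absorb any combination of ``unused'' coordinate elements across the $d$ axes while remaining $d$-uniform. I would handle this by allocating one private slack element per coordinate element, grouping each $c_i^j$ together with $(d-1)$ private slack elements into a filler set, and proving by a simple double-counting that in every exact cover exactly one $c_i^j$ per axis is covered by an active set. Once this coordinate gadget is in place, plugging the construction into the Dell--Van Melkebeek framework, together with the NP-hardness of Perfect $d$-Dimensional Matching, yields the stated $\Oh(k^{d-\epsilon})$ compression lower bound.
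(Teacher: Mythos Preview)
The paper does not actually prove this theorem: it is quoted as \cite[Theorem~1.2]{DellM12} and used as a black box, with only a remark that the Dell--Van~Melkebeek machinery underlying it applies to compressions and not merely kernelizations. So there is no ``paper's own proof'' to compare against; your proposal is an independent attempt at reproving the cited result.

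As a standalone sketch, the high-level framework you invoke (weak $d$-partite cross-composition from an NP-hard $d$-dimensional matching problem into \XSC, yielding parameter $\Oh(t^{1/d}\cdot\poly(n_{\max}))$) is the right one and is indeed how Dell and Marx proceed. However, your concrete construction has a genuine gap. You use a \emph{single shared} content block~$B$ of size~$n_0$ and attach to each base set~$A$ the $d$ coordinate markers $c_1^{j_1},\ldots,c_d^{j_d}$. A perfect matching of a base instance consists of $n_0/d$ sets; if each of them is lifted to a set containing the same marker $c_1^{j_1^*}$, then in the composed exact cover that single element would have to be covered $n_0/d$ times, contradicting exactness. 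Your parenthetical ``re-partitioned to be $d$-uniform using the slack elements'' does not fix this: if only one lifted set carries the coordinate markers and the others are pure content sets over the shared~$B$, then those pure content sets are indistinguishable across instances, and nothing forces all active sets to come from the same~$(j_1^*,\ldots,j_d^*)$.

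The actual Dell--Marx composition avoids this by \emph{not} sharing the content block. It exploits that $d$-Dimensional Matching already has its universe split into $d$ parts $U^1,\ldots,U^d$; the composed universe takes $s$ disjoint copies of each part (total size $d\cdot s\cdot n_0$), and the sets of instance~$(j_1,\ldots,j_d)$ are placed so that their $i$-th element lives in the $j_i$-th copy of~$U^i$. Consistency is then argued per copy, and filler instances (one can assume without loss of generality that a trivial \yes-instance is among the inputs, or pad with one) handle the unused copies. If you rework your construction along these lines---separate copies per coordinate value rather than a shared block with single-use markers---the argument goes through.
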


We remark that, while Dell and Marx stated their main theorem in terms of kernelizations, the same lower bounds indeed hold for compressions. This follows from the fact that the lower bound machinery on which their result is based holds for arbitrary compressions, rather than just kernelizations (see~\cite{DellM14}). Hermelin and Wu explicitly mention that their (slightly weaker) lower bound also holds against compressions~\cite[\S 1.1]{HermelinW12}.

\subsection{The construction}
We present the construction that will be used to prove that various families of \FDeletion problems do not admit uniformly polynomial kernels. We start by giving some simple propositions that will be used in the correctness proof of the construction. Recall that a vertex~$v$ of a graph~$G$ is \emph{simplicial} if~$N_G(v)$ is a clique. We use~$\tw(G)$ to denote the treewidth of a graph~$G$.

\begin{proposition}[{cf.~\cite[Rule 3.1]{BodlaenderJK13a}}] \label{proposition:tw:onesimplicial}
If~$G$ is a graph and~$v$ is a simplicial vertex of~$G$ of degree at most~$d-1$, then~$\tw(G) \leq d - 1$ if and only if~$\tw(G - \{v\}) \leq d - 1$.
\end{proposition}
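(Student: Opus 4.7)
The plan is to prove the two directions separately, with the forward direction being immediate and the reverse direction relying on the standard fact that every clique of a graph is contained in some bag of any tree decomposition.

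For the forward direction, I observe that $G - \{v\}$ is an induced subgraph of $G$, and treewidth is monotone under taking subgraphs. Hence $\tw(G) \leq d-1$ immediately implies $\tw(G-\{v\}) \leq d-1$. This requires no use of the simpliciality of $v$.

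For the reverse direction, suppose that $\tw(G - \{v\}) \leq d-1$, and fix a tree decomposition $(T, \{B_t\}_{t \in V(T)})$ of $G - \{v\}$ of width at most $d-1$. Since $v$ is simplicial in $G$, the set $N_G(v)$ is a clique of $G$; because $v \notin N_G(v)$, it is a clique of $G - \{v\}$ as well. By the standard ``clique containment'' property of tree decompositions, there is a node $t^* \in V(T)$ such that $N_G(v) \subseteq B_{t^*}$. I now extend the decomposition by creating a fresh node $t'$ with bag $B_{t'} := N_G(v) \cup \{v\}$ and attaching $t'$ to $t^*$ in $T$. Since $|N_G(v)| \leq d-1$, the new bag has size at most $d$, so the width of the extended decomposition is still at most $d-1$. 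It remains to check the validity conditions: the vertex $v$ appears only in $B_{t'}$, so the induced subtree of $v$ is trivially connected; every edge $\{v,u\}$ with $u \in N_G(v)$ is covered by $B_{t'}$; and no edge incident to any other vertex is affected by the addition. This yields a width-$(d-1)$ tree decomposition of $G$, as required.

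The argument is essentially routine once the clique containment property of tree decompositions is invoked, so there is no serious obstacle; the only point to handle carefully is to make sure that the neighborhood of $v$ is a clique in $G - \{v\}$ and not merely in $G$, which is immediate because $v \notin N_G(v)$. The degree bound of $d-1$ is used only to ensure that the newly introduced bag does not exceed size $d$.
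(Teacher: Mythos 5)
Your proof is correct and is exactly the standard argument behind the cited result; the paper states this proposition without proof, deferring to~\cite[Rule 3.1]{BodlaenderJK13a}, and your argument is the canonical one that reference relies on. Both directions are handled properly: subgraph monotonicity of treewidth for the forward direction, and the clique-containment property of tree decompositions together with a new bag $N_G(v) \cup \{v\}$ of size at most $d$ for the reverse direction, with the degree bound used precisely where it is needed.
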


Since a $d$-vertex graph has treewidth at most~$d-1$, Proposition~\ref{proposition:tw:onesimplicial} implies the following.

\begin{proposition} \label{proposition:tw:simplicialset}
If~$G$ is a graph and~$Z \subseteq V(G)$ is a set of size~$d$ such that all vertices of~$V(G) \setminus Z$ are simplicial and have degree at most~$d-1$, then~$\tw(G) \leq d-1$.
\end{proposition}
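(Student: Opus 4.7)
The plan is to proceed by induction on $|V(G) \setminus Z|$, repeatedly invoking Proposition~\ref{proposition:tw:onesimplicial} to peel off one simplicial vertex at a time until only the set $Z$ remains.

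First I would handle the base case where $V(G) = Z$. Since $|V(G)| = d$, the graph $G$ has at most $d$ vertices. The trivial tree decomposition consisting of a single bag containing all of $V(G)$ has width $d-1$, so $\tw(G) \leq d-1$, as required.

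For the inductive step, suppose $V(G) \setminus Z$ is nonempty and pick any $v \in V(G) \setminus Z$. By hypothesis $v$ is simplicial in $G$ and has degree at most $d-1$, so Proposition~\ref{proposition:tw:onesimplicial} reduces the statement $\tw(G) \leq d-1$ to $\tw(G - \{v\}) \leq d-1$. To apply induction to $G' := G - \{v\}$ with the same set $Z$, I need to verify that every vertex $u \in V(G') \setminus Z$ remains simplicial in $G'$ with degree at most $d-1$. For the degree bound this is immediate since $\deg_{G'}(u) \leq \deg_G(u) \leq d-1$. For simpliciality, observe that $N_{G'}(u) = N_G(u) \setminus \{v\}$, and any subset of a clique is a clique, so $N_{G'}(u)$ is clique in $G'$. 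Hence the inductive hypothesis applies to $G'$ and yields $\tw(G') \leq d-1$, which combined with Proposition~\ref{proposition:tw:onesimplicial} gives $\tw(G) \leq d-1$.

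There is no real obstacle here; the only subtlety is checking that simpliciality is preserved under vertex deletion outside $Z$, which follows because the closed neighborhood property defining simpliciality is hereditary with respect to vertex removal. The proof is essentially a direct iteration of the previous proposition together with the trivial treewidth bound on $d$-vertex graphs.
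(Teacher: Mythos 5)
Your proof is correct and is exactly the argument the paper intends: the paper derives this proposition in one sentence by iterating Proposition~\ref{proposition:tw:onesimplicial} and using the trivial bound $\tw(G[Z]) \leq d-1$ for the $d$-vertex remainder. Your write-up merely makes the induction and the preservation of the hypotheses (degrees and simpliciality under vertex deletion) explicit, which is fine.
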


Now we give the construction and prove its correctness.

\begin{lemma} \label{lemma:lowerbound:constructgraph}
For every fixed~$d$ there is a polynomial-time algorithm that, given a set~$U$ of size~$n$, an integer~$k$, and a $d$-uniform set family~$\F \subseteq \binom{U}{d}$, computes a graph~$G'$ with vertex cover number~$\Oh(k^2)$ and an integer~$k' \in \Oh(k^2)$, such that:
\begin{enumerate}
	\item If there is a set~$S' \subseteq V(G')$ of size at most~$k'$ such that~$G' - S'$ is~$K_{d+1}$-minor-free, then there is an exact set cover of~$U$ consisting of~$k$ sets from~$\F$.\label{ppt:modulator:into:cover}
	\item If there is an exact set cover of~$U$ consisting of~$k$ sets from~$\F$, then there is a set~$S' \subseteq V(G')$ of size at most~$k'$ such that~$G' - S'$ is~$K_{d+1}$-minor-free,~$P_{4d}$-minor-free, and has treewidth at most~$d-1$.\label{ppt:cover:into:modulator}
\end{enumerate}
\end{lemma}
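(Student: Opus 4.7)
The plan is to give an explicit polynomial-time reduction from the \XSC instance $(U, \F, k)$ to a graph $G'$ with the required properties, together with a target budget $k' \in \Oh(k^2)$. The intended construction puts, for each $F \in \F$, a $K_{d+1}$-clique gadget on $\{v_u : u \in F\} \cup \{s_F\}$, where $\{v_u : u \in U\}$ are element vertices and $\{s_F : F \in \F\}$ are selectors (so the element vertices of each $F$ become pairwise adjacent and the selector is joined to all of them). To reach the vertex cover bound and to prevent trivial modulators, I would augment this skeleton with $\Oh(k^2)$ auxiliary vertices, for instance $\Theta(k)$ parallel layered copies of each $v_u$ each carrying the same clique-gadget structure, plus any padding edges needed to inflate deletion costs. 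The selectors form an independent set, so taking the element-copies together with the auxiliary vertices yields a vertex cover of size $\Oh(k^2)$.

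For the forward direction (item~\ref{ppt:modulator:into:cover}), I would analyse any modulator $S' \subseteq V(G')$ of size at most $k'$ making $G' - S'$ $K_{d+1}$-minor-free. Define $\F^* := \{F \in \F : s_F \notin S'\}$. Every clique-gadget of $F \in \F^*$ survives its selector and must therefore be destroyed by element-copy deletions inside $S'$, which (by the layered construction) costs $\Omega(k)$ per set in $\F^*$; amortising against the tight budget $k'$ then forces $|\F^*| \leq k$. A second, structural, argument would show that $\F^*$ must in fact cover every element of $U$: otherwise one can locate a surviving $K_{d+1}$-minor in $G' - S'$ coming from an uncovered element's retained gadgets. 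Together with $|U| = kd$, $|F| = d$, and $|\F^*| \leq k$, this forces $\F^*$ to have exactly $k$ pairwise-disjoint sets, i.e.\ an exact cover of size $k$.

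For the backward direction (item~\ref{ppt:cover:into:modulator}), given an exact cover $\F^* = \{F_1,\dots,F_k\}$, I would construct $S'$ by deleting, for each $F_j \in \F^*$ and each layer, a designated element-copy of $F_j$, together with whatever auxiliary vertices are required to tidy up the padding; by design this has total size at most $k'$. After deletion, $G' - S'$ decomposes into a disjoint union of cliques of size at most $d$ (one component per $(F_j, \text{layer})$ pair, consisting of $s_{F_j}$ together with the $d-1$ undeleted element-copies of $F_j$; distinct cover sets are disjoint because $\F^*$ is an exact cover). Such a graph is trivially $K_{d+1}$-minor-free, is $P_{4d}$-minor-free since every component has at most $d+1 < 4d$ vertices, and has treewidth at most $d-1$ by Proposition~\ref{proposition:tw:simplicialset}, taking as the $d$-vertex anchor set $Z$ any single residual $K_d$-component and observing that every remaining vertex is simplicial of degree at most $d-1$.

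The main obstacle is the quantitative calibration of the auxiliary/layer structure and $k'$ needed to make the forward direction's counting argument go through. A naive modulator might just be a small hypergraph transversal of $\F$ (delete one element per set), which in general is much cheaper than any cover and carries no cover interpretation; the layered blow-up must multiply transversal-style costs to $\Omega(k^2)$ while leaving cover-encoded modulators within the budget $k' \in \Oh(k^2)$. Ensuring that no ``hybrid'' strategy — partially deleting selectors and partially deleting element copies — can slip below $k'$ without actually encoding an exact cover is the delicate part, and this is where the combinatorial balance of the construction (layer count, clique structure, and $k'$) must be fixed with care.
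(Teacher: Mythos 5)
Your construction has the selection logic inverted, and this breaks both directions. The backward direction fails outright: given an exact cover $\F^* = \{F_1,\dots,F_k\}$, deleting one designated element-copy per $F_j$ per layer removes only $k$ of the $n = kd$ element-copies in each layer, so the $K_{d+1}$ gadget of any set $F \in \F \setminus \F^*$ whose elements all avoid the designated copies survives intact in $G'-S'$; the claimed decomposition into components $\{s_{F_j}\}\cup(\text{$d{-}1$ copies})$ ignores every gadget of a non-cover set, and these gadgets also glue element vertices of different cover sets together. The forward direction cannot be calibrated either, and not merely for the ``delicate'' reasons you flag: $|\F|$ may be as large as $\binom{n}{d} = \Theta(k^d)$, so a modulator that deletes most selectors already costs $\Theta(k^d) \gg k'$, while a modulator that keeps most selectors must break $\Theta(k^d)$ gadgets via element-copies. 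Worse, any modulator consisting only of element-copies must hit \emph{every} member of $\F$ (each gadget is a $K_{d+1}$), so the surviving elements contain \emph{no} member of $\F$ --- which is the exact opposite of what ``the survivors encode a cover'' requires. No choice of layer count or budget in $\Oh(k^2)$ resolves this, because the defect is in the orientation of the gadgets, not in the counting.

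The paper's construction avoids this by making selection happen through \emph{survival} under a forced, tight deletion budget. It builds an $n\times k$ matrix $v_{i,j}$ whose $k$ columns are $n$-cliques, so any solution must delete at least $n-d$ vertices per column; setting $k' = k(n-d)$ makes this the entire budget and pins $S'\subseteq M$ with exactly $d$ survivors per column. Dummy $(d{-}1)$-cliques on each row force exactly one survivor per row (yielding disjointness of the chosen sets), and --- the key point you are missing --- enforcer vertices are attached to the size-$d$ subsets \emph{not} in $\F$, so that the $d$ survivors of each column form a $K_{d+1}$ with their enforcer unless they correspond to a member of $\F$. Encoding the complement of $\F$ rather than $\F$ itself is what lets surviving vertices certify membership in $\F$; your per-member gadgets certify the negation. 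I would recommend restarting from that template rather than trying to tune the layered blow-up.
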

\begin{proof}
Given~$U$ of size~$n$, the integer~$k$, and the $d$-uniform set family~$\F$ the algorithm proceeds as follows. If~$k \neq n/d$ then, since precisely~$n/d$ different $d$-size sets are needed to exactly cover~$U$, no exact set cover with~$k$ sets exists. We may then output~$G' := K_{d+1}$ and~$k' := 0$, so we focus on the case that~$k = n/d$. The main idea behind the construction is to create an~$n \times k$ matrix with one vertex per cell. Each one of the~$k$ columns contains~$n$ vertices that correspond to the~$n$ universe elements. By turning columns into cliques and adding small gadgets, we will ensure that solutions to the vertex deletion problem must take the following form: they delete all vertices of the matrix except for exactly~$d$ per column. By enforcing that from each row, all vertices but one are deleted, and that the~$d$ surviving vertices in a column form a subset in~$\F$, we relate the minor-free deletion sets to solutions of the exact covering problem.

The formal construction proceeds as follows. Without loss of generality we can assume that the universe~$U$ consists of~$[n] = \{1, 2, \ldots, n\}$, which simplifies the exposition.

\begin{enumerate}
	\item Initialize~$G'$ as the graph consisting of~$n \times k$ vertices~$v_{i,j}$ for~$i \in [n]$ and~$j \in [k]$. For each column index~$j \in [k]$ turn the vertex set~$\{v_{i,j} \mid i \in [n]\}$ into a clique. We refer to~$M := \{v_{i,j} \mid i \in [n], j \in [k]\}$ as the \emph{matrix vertices}.
	\item For every row index~$i \in [n]$ add a dummy clique~$D_i$ consisting of~$d-1$ vertices to~$G'$. Make all vertices in~$D_i$ adjacent to vertices~$\{v_{i,j} \mid j \in [k]\}$ of the $i$-th row.
	\item As the last step we encode the set family~$\F$ into the graph. For every set~$X \in \binom{U}{d} \setminus \F$, which is a size-$d$ subset of~$[n]$ that is not in the set family~$\F$, we do the following. For each column index~$j \in [k]$, we create an \emph{enforcer vertex~$f_{j,X}$} for the set~$X$ into column~$j$. The neighborhood of~$f_{j,X}$ consists of the~$d$ vertices~$\{v_{i,j} \mid i \in X\}$, i.e., the vertices in column~$j$ corresponding to set~$X$.
\end{enumerate}

\begin{observation} \label{obs:simplicialoutsideM}
All vertices of~$V(G') \setminus M$ are simplicial in~$G'$: their neighborhood is a clique.
\end{observation}

\begin{observation} \label{obs:smallVC}
The set~$M \cup (\bigcup_{i \in n} D_i)$ is a vertex cover of~$G'$ of size~$n(k + d) \in \Oh(k^2)$.
\end{observation}

This concludes the construction of~$G'$. It is easy to see that it can be performed in polynomial time for fixed~$d$, since~$G'$ has~$\Oh(n^{d+1})$ vertices. Define~$k' := k(n-d)$. Since~$d$ is fixed we may absorb it into the $\Oh$-notation. As~$n = kd$ this implies~$k' \in \Oh(k^2)$. We prove that this choice of~$G'$ and~$k'$ satisfies the two statements in the lemma.

\textbf{(\ref{ppt:modulator:into:cover})} Suppose that there is a set~$S' \subseteq V(G')$ of size at most~$k'$ such that~$G' - S'$ is~$K_{d+1}$-minor-free. The following claim shows that~$S'$ intersects~$M$ in a very specific way.

\begin{claim} \label{claim:solution:selects:matrix}
The set~$S'$ is a subset of~$M$ that contains exactly~$k-1$ vertices from each row and~$n-d$ vertices from each column.
\end{claim}
\begin{claimproof}
Since each of the~$k$ columns of~$M$ induces a clique in~$G$,~$S'$ avoids at most~$d$ vertices in each column. Since each column contains~$n$ vertices, this implies that~$S'$ contains at least~$n - d$ vertices from each column, so at least~$k(n-d) = k'$ vertices from~$M$. Hence the set~$S'$ of size~$k'$ cannot contain any other vertex and must be a subset of~$M$. If there is a column from which~$S'$ contains more than~$n - d$ vertices, then together with the~$n - d$ vertices from each of the remaining~$k - 1$ columns the size of~$S'$ is at least~$n - d + 1 + (k-1)(n - d) = k' + 1$, contradicting the choice of~$S'$. Hence~$S'$ contains exactly~$n-d$ vertices from each column of~$M$.

The argumentation for rows is similar, but here we also use the dummy cliques~$D_i$ for~$i \in [n]$. If~$S'$ contains less than~$k-1$ vertices from the $i$-th row of~$M$ (i.e., of~$\{v_{i,j} \mid j \in [k]$), then two remaining vertices in row~$i$ together with~$D_i$ of size~$d-1$ form a clique of size~$d+1$, contradicting our choice of~$S'$. Suppose there is a row from which~$S'$ contains more than~$k-1$ vertices. Since~$S'$ also contains at least~$k-1$ vertices from each of the other~$n - 1$ rows, this implies that the total size of~$S'$ is at least~$k + (n-1)(k-1) = k + nk - n - k + 1 = nk - dk + 1 = k' + 1$, where the last step uses the fact that~$dk = n$ as observed at the beginning of the proof. Hence the set~$S'$ of size~$k'$ contains exactly one vertex from each row.
\end{claimproof}

\begin{claim}
Let~$j \in [k]$ and let~$X_j := \{i \mid i \in [n] \wedge v_{i,j} \not \in S'\}$. Then~$X_j \in \F$.
\end{claim}
\begin{claimproof}
By the previous claim, the set~$X_j$ has size exactly~$d$. To see that the set family~$\F$ indeed has a set containing the universe elements corresponding to the vertices in column~$j$ that are avoided by the deletion set~$S'$, observe the following. If~$X_j \not \in \F$ then during the construction we created an enforcer vertex~$f_{j,X_j}$ for set~$X_j$ into column~$j$. But then vertex~$f_{j,X_j}$, which is not contained in~$M$ and therefore not in~$S'$, forms a clique together with its~$d$ neighbors in column~$j$. As the size of this clique is~$d+1$, this contradicts the choice of~$S'$.
\end{claimproof}

Using the two claims it is easy to finish the proof. For each~$j \in [k]$, define the set~$X_j$ as in the second claim. It follows that the subsets~$X_1, \ldots, X_k$ are contained in~$\F$. Since~$S'$ avoids exactly one element in each row by Claim~\ref{claim:solution:selects:matrix}, no universe element is contained in two different sets~$X_j, X_{j'}$. To see that every universe element is contained in at least one set~$X_j$, note that the previous argument shows that the~$k$ size-$d$ sets~$X_1, \ldots, X_k$ are pairwise disjoint. Their union therefore has size~$dk = n$, which proves that all universe elements are covered. 

\textbf{(\ref{ppt:cover:into:modulator})} It remains to prove the second statement in the lemma. Suppose that~$\F' \subseteq \F$ is an exact set cover of~$U$. As observed above,~$\F'$ consists of~$k$ distinct subsets~$X_1, \ldots, X_k \subseteq [n]$, each of size~$d$. We construct a deletion set~$S' \subseteq V(G')$ as follows. For each column index~$j \in [k]$, add the vertices~$\{v_{i,j} \mid i \not \in X_j\}$ to~$S'$. Clearly the resulting set~$S'$ has size exactly~$k' = (n-d)k$. Since a graph of treewidth at most~$d - 1$ does not contain~$K_{d+1}$ as a minor~\cite{Bodlaender98}, it suffices to prove that~$G' - S'$ has treewidth at most~$d-1$ and avoids~$P_{4d}$ as a minor. Before proving these two claims, we consider the structure of the connected components of~$G' - S'$. For each column~$j \in [k]$ define~$Z_j := \{v_{i,j} \mid i \in X_j\}$, which are precisely the vertices in column~$j$ not contained in~$S'$. By the construction of~$G'$ they induce a clique in~$G'$, and are therefore contained in a single connected component of~$G' - S'$.

\begin{claim} \label{claim:ppt:componentcapm}
Let~$C_j$ be the connected component of~$G' - S'$ containing~$Z_j$. Then~$C_j \cap M = Z_j$.
\end{claim}
\begin{claimproof}
Assume for a contradiction that~$C_j$ contains a vertex~$v \in M \setminus Z_j$. Let~$P$ be a shortest path from a member of~$Z_j$ to~$v$ through~$C_j$. Then~$P$ is an induced path. Suppose that~$P$ contains a vertex of~$V(G') \setminus M$. Then~$P$ has at least three vertices and contains a vertex~$u$ of~$V(G') \setminus M$ as an interior vertex. But by Observation~\ref{obs:simplicialoutsideM} vertex~$u$ is simplicial in~$G'$ and therefore in~$G' - S'$; this contradicts the fact that~$P$ is an induced path. We may conclude that~$P$ consists entirely of vertices of~$M$. 

Observe that by construction of~$G'$, the vertices in~$N_{G'}(Z_j) \cap M$ are those in~$M \setminus Z_j$ that share a row or column with a member of~$Z_j$. By our choice of~$Z_j$, all vertices of~$M \setminus Z_j$ that are in column~$j$ together with~$Z_j$, are contained in~$S'$ and therefore do not occur in the component~$C_j$ of~$G' - S'$. Now consider vertices that share a row with~$Z_j$. Since we constructed~$S'$ from an exact set cover of~$U$, every element of~$U$ is contained in exactly one of the sets~$X_1, \ldots, X_k$. This implies that for each~$i \in [n]$ such that vertex~$v_{i,j} \in Z_j$, we have~$v_{i, j'} \in S'$ for~$j' \neq j$. Hence all vertices not in~$Z_j$ that share a row with a member of~$Z_j$, are contained in~$S'$ and do not occur in the connected component~$C_j$ of~$G' - S'$. It follows that~$N_{G' - S'}(Z_i) \cap M = \emptyset$ and therefore that~$P$ cannot be an induced path in~$(G' - S')[M]$ connecting a vertex of~$Z_i$ and a vertex of~$M \setminus Z_i$. Hence~$C_j \cap M = Z_j$.
\end{claimproof}

\begin{claim} \label{claim:ppt:simplicialsmalldegree}
Let~$C_j$ be the connected component of~$G' - S'$ containing~$Z_j$. Then all vertices of~$C_j \setminus Z_j$ are simplicial in~$G$ (and therefore in~$C_j$) and have degree less than~$d$ in~$C_j$.
\end{claim}
\begin{claimproof}
By the previous claim, the component~$C_j$ consists of~$Z_j$ together with vertices in~$V(G') \setminus M$. By Observation~\ref{obs:simplicialoutsideM}, all vertices of~$C_j \setminus M$ are simplicial in~$G'$ and therefore in~$C_j$. Consider a simplicial vertex~$v \in C_j \setminus M$. If~$v$ is a vertex in the dummy clique for a row~$i$ ($i \in [n]$), then the neighborhood of~$v$ in~$C_j$ consists of~$D_i \setminus \{v\}$ together with the single vertex in row~$i$ that is not in~$S'$. Since~$D_i$ has size~$d-1$, the degree of~$v$ is~$((d-1) - 1) + 1 < d$. If~$v$ is not in a dummy clique, then~$v$ is an enforcer vertex for some set~$X \in \binom{U}{d} \setminus \F$ into column~$j$. Recall that~$Z_j$ contains the vertices of the $j$-th column corresponding to the set~$X_j \in \F$ in the exact cover. Since~$X_j \in \F$ while~$X \not \in \F$, it follows that at least one neighbor of~$v$ in~$G$ is not contained in~$Z_j$. Hence the degree of~$v$ in~$C_j$ is strictly less than~$d$.
\end{claimproof}

The following claim summarizes our insights into the structure of~$G' - S'$.

\begin{claim} \label{claim:ppt:resultingcomponents}
Every connected component of~$G' - S'$ is either a singleton enforcer vertex, or a component~$C_j$ for~$j \in [k]$ consisting of the vertices~$Z_j$, the dummy vertices~$\bigcup_{i:v_{i,j} \in Z_j} D_i$, and the enforcer vertices into column~$j$ for sets~$X$ that intersect~$X_j$.
\end{claim}
\begin{claimproof}
Consider a connected component~$C$ of~$G' - S'$ that is not a single enforcer vertex. Then~$C$ contains at least one vertex from~$M$. If~$C$ contains an enforcer vertex then this follows from the fact that the only neighbors of enforcer vertices are in~$M$. If~$C$ contains a dummy vertex of clique~$D_i$, then let~$j$ be the index of the set covering~$i$ (such that~$i \in X_j)$; it follows that~$v_{i,j} \not \in S'$ is a neighbor of the dummy vertex that is contained in~$M$ and belongs to the same connected component. Hence every~$C$ contains at least one vertex from~$M$; let~$j$ be a column containing such a vertex. Then all of~$Z_j$ is contained in~$C$, since~$Z_j$ forms a clique. Each vertex from~$Z_j$ is adjacent to the dummy clique in its row. Since no vertices from dummy cliques are contained in~$S'$, this shows that~$\bigcup _{i: v_{i,j} \in Z_j} D_i$ is also contained in the component~$C$ of~$G' - S'$. If~$f_{j,X}$ is an enforcer vertex for a set~$X$ with~$X \cap X_j \neq \emptyset$, then~$f_{j,X}$ is a neighbor of any vertex representing a vertex in~$X \cap X_j$, showing that~$f_{j,X}$ is in component~$C$. Hence~$C$ includes all the vertices mentioned in the claim. To see that~$C$ cannot include any other vertex, observe that by Claim~\ref{claim:ppt:componentcapm} component~$C$ contains no other vertex of~$M$. Since the dummy vertices for the remaining rows are only adjacent to vertices of~$M \setminus Z_j$, they are not contained in component~$C$. The same holds for enforcer vertices into columns that are not~$j$. Finally, it is easy to verify that enforcer vertices into column~$j$ for sets~$X$ that are disjoint from~$X_j$ form singleton connected components of~$G' - S'$ and are not included in~$C$. This proves the claim.
\end{claimproof}

Let us prove that~$G' - S'$ has treewidth at most~$d - 1$. Since singleton graphs have treewidth zero, by the previous claim it suffices to bound the treewidth of components~$C_j$ of the form described in the claim. By Claim~\ref{claim:ppt:simplicialsmalldegree} all vertices of~$C_j \setminus Z_j$ are simplicial in~$C_j$ and have degree less than~$d$ in~$C_j$. Since the set~$Z_j$ has size~$d$, by Proposition~\ref{proposition:tw:simplicialset} we now find that~$\tw(C_j) \leq d - 1$. As this bounds the treewidth of all nontrivial components of~$G' - S'$, this proves that~$\tw(G' - S') \leq d-1$ and therefore that~$G' - S'$ is~$K_{d+1}$-minor-free. It remains to prove that~$G'$ has no path minor on~$4d$ vertices or more.

\begin{claim} \label{claim:ppt:nopathsubgraph}
No connected component of~$G' - S'$ contains a simple path of~$4d$ vertices.
\end{claim}
\begin{claimproof}
As the claim trivially holds for singleton components, it suffices to consider components~$C_j$ for~$j \in [k]$ that contain one of the sets~$Z_j$, as described in Claim~\ref{claim:ppt:resultingcomponents}. Consider a simple path~$P$ in~$C_j$. Since the only neighbors of the enforcer vertices in~$C_j$ are the~$d$ vertices in~$Z_j$, path~$P$ must visit a vertex of~$Z_j$ between visiting different enforcer vertices. Since no vertex of~$Z_j$ can be visited twice,~$P$ contains at most~$d+1$ enforcer vertices. To see that~$P$ contains at most~$2(d-1)$ dummy vertices, we will prove that~$P$ contains vertices from at most two different dummy cliques~$D_i, D_{i'}$. To see this, observe that the vertex~$v_{i,j}$ is a cut-vertex in~$C_j$ that separates the dummy clique~$D_i$ from the rest of the component~$C_j$. Hence the path cannot enter a dummy clique, visit some vertices, and then exit to the rest of the component. Hence at most two dummy cliques can be visited by~$P$, one containing the starting point of~$P$ and one containing the endpoint of~$P$. So indeed~$P$ contains at most~$2(d-1)$ dummy vertices. By Claim~\ref{claim:ppt:resultingcomponents} the only vertices in~$C_j$ are those of dummy cliques, the enforcer vertices, and the set~$Z_j$. Since~$P$ contains at most~$2(d-1)$ dummies,~$d+1$ enforcers, and the set~$Z_j$ has size~$d$, it follows that the simple path~$P$ contains at most~$2(d-1) + (d+1) + d = 4d - 1$ vertices.
\end{claimproof}

Since~$G$ has a path on~$d$ vertices as a minor if and only if it contains a path on~$d$ vertices as a subgraph, Claim~\ref{claim:ppt:nopathsubgraph} proves that~$G' - S'$ is~$P_{4d}$-minor-free. This concludes the proof of Lemma~\ref{lemma:lowerbound:constructgraph}.
\end{proof}

\subsection{Kernelization lower bounds}
By combining the construction of Lemma~\ref{lemma:lowerbound:constructgraph} with the tools of Section~\ref{section:lowerbound:tools} we now derive several kernelization lower bounds for \FDeletion problems. Concretely, we prove Theorem~\ref{theorem:cliqueminordeletion:lowerbound}.

\begin{untheorem}
Let~$d \geq 3$ be a fixed integer and~$\epsilon > 0$. If the parameterization by solution size~$k$ of one of the problems
\begin{enumerate}
	\item \KdplusoneDeletion, 
	\item \CliqueorPathDeletion, and
	\item \TreewidthDminoneDeletion 
\end{enumerate}
admits a compression of bitsize~$\Oh(k^{\frac{d}{2} - \epsilon})$, or a kernel with $\Oh(k^{\frac{d}{4} - \epsilon})$ vertices, then \containment. In fact, even if the parameterization by the size~$x$ of a vertex cover of the input graph admits a compression of bitsize~$\Oh(x^{\frac{d}{2} - \epsilon})$ or a kernel with $\Oh(x^{\frac{d}{4} - \epsilon})$ vertices, then \containment.
\end{untheorem}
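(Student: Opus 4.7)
The plan is to treat Lemma~\ref{lemma:lowerbound:constructgraph} as a degree-$2$ polynomial-parameter transformation from \XSC to each of the three target problems, and then to compose it with Theorem~\ref{theorem:xsc:nocompression} via Proposition~\ref{proposition:compressionbyppt}. Recall that the parameter of \XSC is the universe size~$n$, and that the construction produces a graph~$G'$ together with a budget~$k'$ where both~$k'$ and the vertex cover number of~$G'$ lie in~$\Oh(k^2) \subseteq \Oh(n^2)$ (using~$k = n/d$ in every nontrivial instance, so~$k$ and~$n$ are polynomially equivalent for fixed~$d$). The key point is that the \emph{same} construction, with the \emph{same} polynomial parameter bound, works simultaneously for the solution-size parameterization and for the vertex-cover parameterization.

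First I would verify the yes/no equivalence required of a polynomial-parameter transformation, separately for each target problem. For \KdplusoneDeletion, items~(1) and~(2) of Lemma~\ref{lemma:lowerbound:constructgraph} give precisely the two implications needed. For \CliqueorPathDeletion, item~(2) already produces a deletion set~$S'$ for which~$G' - S'$ is simultaneously~$K_{d+1}$-minor-free and~$P_{4d}$-minor-free, while for the converse direction any~$\{K_{d+1},P_{4d}\}$-minor-free deletion set is in particular a~$K_{d+1}$-minor-free deletion set, so item~(1) supplies the exact cover. For \TreewidthDminoneDeletion, item~(2) directly guarantees~$\tw(G' - S') \leq d-1$, and for the converse I invoke the classical fact that treewidth at most~$d-1$ forbids~$K_{d+1}$ as a minor in order to reduce back to item~(1).

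Once the transformation is in place for all three problems and both parameterizations, the lower bound is obtained by standard composition. Suppose that one of the target problems, parameterized by~$k$ or by~$x$, admits a compression of bitsize~$\Oh(k^{d/2 - \epsilon})$ for some~$\epsilon > 0$. Combining with the degree-$2$ transformation via Proposition~\ref{proposition:compressionbyppt} yields a compression of \XSC of bitsize~$\Oh((n^2)^{d/2 - \epsilon}) = \Oh(n^{d - 2\epsilon})$, contradicting Theorem~\ref{theorem:xsc:nocompression}. For the vertex-count variant, I observe that any kernel whose output graph has~$v$ vertices can be encoded in~$\Oh(v^2)$ bits using its adjacency matrix (plus an~$\Oh(\log k)$ term for the budget), so a kernel with~$\Oh(k^{d/4 - \epsilon})$ vertices supplies a compression of bitsize~$\Oh(k^{d/2 - 2\epsilon})$, to which the previous argument applies. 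The real substance of the proof is engineering a single construction that serves all three target problems at once while keeping both the solution budget and the vertex cover number polynomial in the \XSC parameter; this is exactly what Lemma~\ref{lemma:lowerbound:constructgraph} accomplishes, after which the theorem reduces to a routine invocation of the polynomial-parameter transformation framework.
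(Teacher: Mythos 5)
Your proposal is correct and follows essentially the same route as the paper: Lemma~\ref{lemma:lowerbound:constructgraph} is used as a degree-two polynomial-parameter transformation to all three problems (and to their vertex-cover parameterizations, via the $\Oh(k^2)$ vertex cover bound), composed with Theorem~\ref{theorem:xsc:nocompression} through Proposition~\ref{proposition:compressionbyppt}, with the adjacency-matrix encoding handling the vertex-count variant. The per-problem equivalence arguments (forward via item~(2), backward via reduction to $K_{d+1}$-minor-freeness and item~(1)) also match the paper's.
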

\begin{proof}
We first prove the statement about \KdplusoneDeletion. Observe that for every fixed~$d$, the transformation of Lemma~\ref{lemma:lowerbound:constructgraph} forms a degree-two polynomial-parameter transformation from \XSC to \KdplusoneDeletion: it is a polynomial-time algorithm that maps an instance~$(U,\F,k)$ of \XSC to an instance~$(G',k')$ of \KdplusoneDeletion with~$k' \in \Oh(k^2)$, and the two statements in the lemma ensure that~$(U,\F,k)$ is a \yes-instance if and only if~$(G',k')$ is a \yes-instance.

Now assume that \KdplusoneDeletion parameterized by~$k$ has a compression of size~$\Oh(k^{\frac{d}{2} - \epsilon})$ for some~$d \geq 3$ and~$\epsilon > 0$. By Proposition~\ref{proposition:compressionbyppt}, it follows that \XSC has a compression of size~$\Oh(k^{d - 2\epsilon})$. By Theorem~\ref{theorem:xsc:nocompression} this implies \containment. Observe that, since the graph constructed in Lemma~\ref{lemma:lowerbound:constructgraph} has a vertex cover of size~$\Oh(k^2)$, the construction also serves as a degree-two polynomial parameter transformation from \XSC to the parameterization of \KdplusoneDeletion by vertex cover. Hence the existence of a compression with bitsize~$\Oh(x^{\frac{d}{2} - \epsilon})$ implies \containment by the same argument as above.

Concerning the existence of kernels with few vertices, observe that a kernelized instance with~$\Oh(k^{\frac{d}{4} - \epsilon})$ vertices can be encoded in~$\Oh(k^{\frac{d}{2} - 2\epsilon})$ bits, by writing down the adjacency matrix of the graph and target value (which does not exceed the order of the graph) in binary. Hence a kernel with~$\Oh(k^{\frac{d}{4} - \epsilon})$ or $\Oh(x^{\frac{d}{4} - \epsilon})$ vertices yields a compression that implies \containment.

Now consider the other two problems mentioned in the theorem. By exactly the same argumentation, it suffices to argue that the construction of Lemma~\ref{lemma:lowerbound:constructgraph} is a valid degree-two polynomial parameter transformation of \XSC into these problems. Let~$(U,\F,k)$ be an instance of \XSC and consider the pair~$(G',k')$ constructed in the lemma. If~$(U,\F,k)$ is a \yes-instance, then by the second statement of Lemma~\ref{lemma:lowerbound:constructgraph} we can delete~$k'$ vertices from~$G'$ to make it both~$K_{d+1}$ and~$P_{4d}$ minor-free, hence~$(G',k')$ is a \yes-instance of \CliqueorPathDeletion. For the reverse direction, if~$G'$ can be made both~$K_{d+1}$ and~$P_{4d}$ minor-free by~$k'$ vertex deletions, then in particular it can be made~$K_{d+1}$-minor-free by~$k'$ deletions so the first item of Lemma~\ref{lemma:lowerbound:constructgraph} proves that~$(U,\F,k)$ is a \yes-instance. So the construction is a degree-two polynomial parameter transformation to \CliqueorPathDeletion.

Finally, consider the \TreewidthDminoneDeletion problem. If~$(U,\F,k)$ is a \yes-instance then, by the second item of Lemma~\ref{lemma:lowerbound:constructgraph}, the treewidth of the constructed graph~$G'$ can be reduced to at most~$d-1$ by~$k'$ deletions, implying that~$(G',k')$ is a \yes-instance of \TreewidthDminoneDeletion. For the reverse direction, if the treewidth of~$G'$ can be reduced to at most~$d-1$ by~$k'$ deletions, then since a graph of treewidth at most~$d-1$ does not contain~$K_{d+1}$ as a minor (cf.~\cite{Bodlaender98}), it follows that~$G'$ can be made~$K_{d+1}$-minor-free by~$k'$ deletions. By the first item of Lemma~\ref{lemma:lowerbound:constructgraph}, this implies that~$(U, \F, k)$ is a \yes-instance. The construction is therefore also a degree-two polynomial parameter transformation to \TreewidthDminoneDeletion, which proves the theorem.
\end{proof}
\section{Structural results about treedepth}
In this section we derive several properties of treedepth decompositions that will be needed to analyze the effect of the graph reduction steps. We start by proving some general facts about treedepth in Section~\ref{section:treedepth:properties}. In Section~\ref{section:nearly:clique:isolated} we start analyzing properties of instances of \TreedepthEtaDeletion, introducing the notion of nearly clique separated sets to show that minimum solutions intersect certain parts of the graph in only few vertices. Finally, in Section~\ref{section:treedepth:transformations} we present the lemmata discussed in the introduction concerning three types of graph transformations (edge additions, edge removals, and vertex removals) and derive conditions under which these do not change the answer to an instance of \TreedepthEtaDeletion.

\subsection{Properties of treedepth} \label{section:treedepth:properties}

The following lemma shows that either the reach of a vertex is large, or the height of the decomposition is large. It will be used to argue that, in treedepth-$\eta$ decompositions of a reduced graph, the reach of a vertex is large enough to allow a deleted component of the graph to be embedded below it without increasing the total decomposition height.

\begin{lemma}\label{lem:neighheght}
Let~$d$ and~$t$ be positive integers,~$G$ be a connected graph, and let~$H_{1},H_{2},\dots, H_{t}$ be vertex-disjoint connected subgraphs of~$G$ with~$\td(H_{i})\geq d$. Let~$T$ be a treedepth decomposition of~$G$. If~$v \in V(G)$ such that~$v\in N_{G}(H_{i})$ for every $i\in [t]$, and~$\reach(v,T) < d$, then $\height(T)\geq t+1$.
\end{lemma}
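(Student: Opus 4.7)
The plan is to show, for each subgraph $H_i$, that some vertex of $V(H_i)$ must be a proper ancestor of $v$ in $T$; since the $H_i$ are vertex-disjoint, this yields $t$ distinct proper ancestors of $v$, so $\depth(v,T)\geq t+1$ and thus $\height(T) \geq t+1$.

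First, using Observation~\ref{observation:connectedsubgraph:ancestorpath} applied to the connected subgraph $H_i$, I would argue that $H_i$ has a unique shallowest vertex $r_i$ in $T$ and that every vertex of $V(H_i)$ is a descendant of $r_i$, so $V(H_i) \subseteq V(T_{r_i})$. The restriction of $T$ to $V(H_i)$ (where each vertex takes its nearest $T$-ancestor in $V(H_i)$ as parent) is then a valid treedepth decomposition of $H_i$ of height at most $\height(T_{r_i})$, so in particular $\td(H_i) \leq \height(T_{r_i})$.

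Next comes the main case analysis. Since $v\in N_G(H_i)$, there exists some $u_i \in V(H_i)$ with $\{v,u_i\}\in E(G)$, so $v$ and $u_i$ are in ancestor-descendant relation in $T$; also $v \neq r_i$ because $v\notin V(H_i)$. If $u_i$ is an ancestor of $v$, then $x_i := u_i \in V(H_i)$ is already a proper ancestor of $v$. Otherwise $u_i$ is a proper descendant of $v$, so $v$ and $r_i$ are both ancestors of $u_i$ and therefore comparable in $T$. The key step here is to rule out that $r_i$ is a proper descendant of $v$: in that case $\depth(r_i,T)\geq \depth(v,T)+1$ forces $\height(T_{r_i}) \leq \height(T) - \depth(r_i,T) + 1 \leq \reach(v,T) < d$, whence $\td(H_i) < d$, contradicting the hypothesis. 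Therefore $r_i$ is a proper ancestor of $v$, and I take $x_i := r_i$.

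Finally, the vertices $x_1,\dots,x_t$ are pairwise distinct (the $H_i$ are vertex-disjoint) and all lie on the root-to-$v$ path in $T$, so this path contains at least $t+1$ vertices when $v$ is included, giving $\depth(v,T)\geq t+1$ and hence $\height(T)\geq t+1$. I expect the delicate step to be the height-of-subtree calculation that rules out placing $r_i$ strictly below $v$; everything else is bookkeeping once each $H_i$ has been identified with its subtree $T_{r_i}$ via the uniqueness of its shallowest vertex.
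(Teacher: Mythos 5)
Your proof is correct and follows essentially the same strategy as the paper's: use $\td(H_i)\geq d$ together with $\reach(v,T)<d$ to show that $H_i$ cannot sit entirely in the subtree below $v$, conclude that each $H_i$ contributes a proper ancestor of $v$, and count $t$ distinct such ancestors by vertex-disjointness. Your packaging via the unique shallowest vertex $r_i$ of $H_i$ is a clean reorganization of the paper's two claims (``some vertex of $H_i$ lies outside $T_v$'' and ``some vertex of $H_i$ is an ancestor of $v$''), and the height calculation ruling out $r_i$ strictly below $v$ is the same computation the paper performs by deleting $v$ from $T_v$.
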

\begin{proof}
Suppose that~$\reach(v,T) = \height(T) - \depth(v,T) < d$. It follows that the subtree~$T_v$ rooted at~$v$ has height at most~$d$, otherwise the path from the root of~$T$ to~$v$, and then to a deepest leaf in~$T_v$, would contain at least~$\depth(v,T) + (d + 1) - 1$ vertices (we subtract one because~$v$ is counted twice). Since~$\height(T) = \reach(v,T) + \depth(v,T) < d + \depth(v,T)$, this would give a contradiction.

\begin{claim}
For every~$i \in [t]$ some vertex of~$H_i$ is not in~$T_v$.
\end{claim}
\begin{claimproof}
If~$H_i \subseteq T_v$ then the rooted subtree~$T_v$ is a treedepth decomposition of~$G[T_v]$, a supergraph of~$H_i$, of height~$d$. Since vertex~$v$ is not in~$H_i$, the rooted forest obtained by removing~$v$ is a treedepth decomposition of~$G[T_v \setminus \{v\}]$ of height less than~$d$. But then the supergraph~$G[T_v \setminus \{v\}]$ of~$H_i$ has treedepth less than~$d$, contradicting~$\td(H_i) \geq d$.
\end{claimproof}

\begin{claim}
For every~$i \in [t]$ some vertex of~$H_i$ is an ancestor of~$v$.
\end{claim}
\begin{claimproof}
Consider some~$i \in [t]$. Since~$v \in N_G(H_i)$, there is a vertex~$u$ in~$H_i$ that is adjacent to~$v$ and therefore~$u$ is an ancestor of~$v$ in~$T$, or~$u$ is in~$T_v$. In the first case we are done. In the second case, let~$w$ be a vertex of~$H_i$ that is not in~$T_v$, which exists by the previous claim. If~$w$ is an ancestor of~$v$ we are again done. If not, then~$u$ and~$v$ are vertices that are not in ancestor-descendant relation that belong to the same connected subgraph~$H_i$ of~$G$. Hence, by Observation~\ref{observation:connectedsubgraph:ancestorpath} there is a vertex in~$H_i$ that is a common ancestor of~$u$ and~$v$. Since all ancestors of~$u \in T_v$ are ancestors of~$v$, the claim follows.
\end{claimproof}

The claim shows that~$v$ has~$t$ ancestors unequal to~$v$ itself. Hence~$\height(T) \geq t + 1$.
\end{proof}

The next lemma will be used to argue that an edge must be represented in sufficiently shallow decompositions.

\begin{lemma}\label{lem:bgflw}
Let $q$ be a positive integer and $G$ be a graph. If $u,v\in V(G)$ are joined by $q$ internally vertex-disjoint paths and $F$ is a treedepth decomposition of $G$ in which~$u$ and~$v$ are not in ancestor-descendant relation, then $\height(F)>q$.
\end{lemma}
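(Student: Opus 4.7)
The plan is to prove that the existence of $q$ internally vertex-disjoint $uv$-paths forces $q$ distinct common ancestors of $u$ and $v$ to appear in $F$, which then pushes the depth of $u$ (and hence the height of $F$) above $q$.

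First I would handle a trivial edge case: the edge $uv$ cannot itself be one of the $q$ paths, because otherwise the treedepth decomposition $F$ would have to represent $\{u,v\}$, forcing $u$ and $v$ into ancestor--descendant relation, contrary to hypothesis. Therefore each of the $q$ paths $P_1,\dots,P_q$ contains at least one internal vertex. Moreover, since $u$ and $v$ are connected by these paths, they lie in the same tree $T$ of $F$, so the notion of common ancestor is well defined.

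Next, I would apply Observation~\ref{observation:connectedsubgraph:ancestorpath} to each path $P_i$ viewed as a connected subgraph of $G$ containing $u$ and $v$. Since $u$ and $v$ are not in ancestor--descendant relation in $T$, the observation yields a vertex $a_i \in V(P_i)$ that is a common ancestor of $u$ and $v$. Because $u$ and $v$ are not in ancestor--descendant relation, neither of them is a common ancestor of the pair, so $a_i$ must be an internal vertex of $P_i$. The paths being internally vertex-disjoint therefore forces $a_1,\dots,a_q$ to be pairwise distinct.

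The key observation is that any common ancestor of $u$ and $v$ in $T$ must lie on the root-to-$w$ path, where $w$ is the (lowest) common ancestor of $u$ and $v$ in $T$. Thus the root-to-$w$ path contains all $q$ distinct vertices $a_1,\dots,a_q$, giving $\depth(w,T) \geq q$. Since $u$ is not an ancestor of $v$ and vice versa, $w$ is a proper ancestor of $u$, so $\depth(u,T) \geq \depth(w,T)+1 \geq q+1$. The height of $F$ is at least the depth of any vertex, so $\height(F) \geq q+1 > q$, as required. The only delicate point is the handling of the trivial edge case and making sure the $a_i$ are genuinely internal to the paths so that disjointness of the paths yields distinctness of the $a_i$; everything else follows directly from Observation~\ref{observation:connectedsubgraph:ancestorpath}.
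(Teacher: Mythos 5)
Your proof is correct, and it arrives at the same quantitative conclusion as the paper ($q$ distinct proper common ancestors of $u$ and $v$, hence $\depth(u,F)\geq q+1$), but by a somewhat different route. The paper argues globally: it takes $P$ to be the set of common ancestors of $u$ and $v$ in $F$, asserts that $V(P)$ separates $u$ from $v$ in $G$, and then invokes Menger's theorem to conclude $|V(P)|\geq q$ from the existence of $q$ internally vertex-disjoint paths. You instead work path by path, applying Observation~\ref{observation:connectedsubgraph:ancestorpath} to each $P_i$ to extract a common ancestor $a_i\in V(P_i)$, observing that $a_i$ must be internal to $P_i$ (since neither $u$ nor $v$ can be a common ancestor of the pair when they are not in ancestor--descendant relation), and using internal disjointness to make the $a_i$ pairwise distinct. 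Your version is essentially an elementary unpacking of the easy direction of Menger that the paper uses, and it has two small advantages: it supplies the justification for the separation claim that the paper states without proof (that claim is exactly what Observation~\ref{observation:connectedsubgraph:ancestorpath} delivers), and it explicitly disposes of the degenerate case where the edge $uv$ is one of the paths, which the paper handles only implicitly. The paper's formulation is shorter and makes the separator structure explicit, which is the viewpoint reused elsewhere; your formulation is more self-contained. Both are fine.
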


\begin{proof}
Let~$P_u$ and~$P_v$ be the paths from~$u$ and~$v$ to the root, and let~$P$ be the intersection of the two paths. Since~$u$ and~$v$ are not in ancestor-descendant relation we have~$u,v \not \in P$. As $P$ contains all common ancestors of $u$ and $v$ in $F$, in graph~$G$ the vertices~$u$ and~$v$ are separated by $V(P)$. From Menger's Theorem, as $u$ and $v$ are connected by $q$ internally vertex-disjoint paths, it follows that $|V(P)|\geq q$. This implies that $\height(T)>q$.
\end{proof}

The following technical lemma gives conditions under which a treedepth decomposition can be modified to ensure that a vertex set~$V(H)$ is embedded in the subtree below a distinguished vertex~$v$. The updated decomposition therefore represents all possible edges between~$v$ and~$V(H)$. The lemma will be crucial in the correctness proof of Lemma~\ref{lemma:remove:edges}, which gives conditions under which edges can safely be deleted from an instance.

\begin{figure}[t]
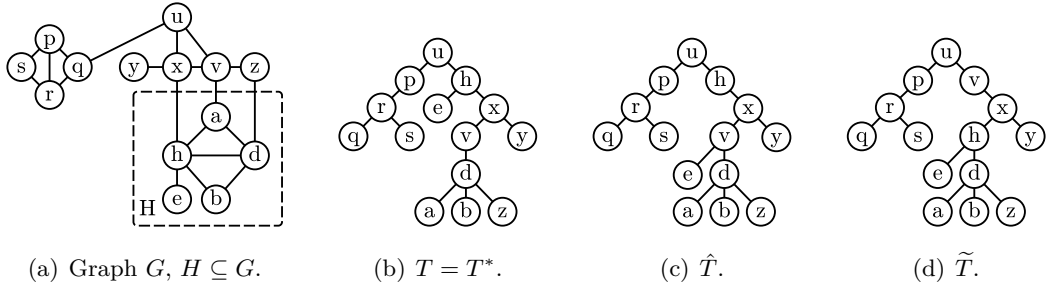

\begin{center}
\subfigure[Graph~$G$, $H \subseteq G$.]{\label{fig:switch:graph}
\image{Switch1}
}
\subfigure[$T = T^*$.]{\label{fig:switch:nicedec}
\image{Switch2}
}
\subfigure[$\hat{T}$.]{\label{fig:switch:loweredtrees}
\image{Switch3}
}
\subfigure[$\widetilde{T}$.]{\label{fig:switch:switched}
\image{Switch4}
}
\caption{Illustration for the proof of Lemma~\ref{lemma:switch}. \ref{fig:switch:graph} The graph~$G$ and connected subgraph~$H$ consisting of~$\{a,b,h,d,e\}$ are shown. As~$N_G(H) = \{x,v,z\} \subseteq N_G[v] = \{u,v,x,z,a\}$, the lemma applies. \ref{fig:switch:nicedec} A treedepth decomposition~$T$ for~$G$. As it is a nice decomposition, it is also used as~$T^*$. Vertex~$h$ is the first member of~$H$ on the path from~$v$ to the root. Vertex~$z \in N_G(H)$ is a neighbor of~$v$ that is not an ancestor of~$v$. The path~$P$ consists of~$(x,h)$. \ref{fig:switch:loweredtrees} The result of cutting off the (singleton) subtree~$T^*_e$ and attaching an minimum-height decomposition~$\hat{T}^e$ for that subgraph at~$v$. \ref{fig:switch:switched} The final decomposition~$T'$ used to invoke the induction hypothesis.}
\end{center}
\end{figure}

\begin{lemma} \label{lemma:switch}
Let~$G$ be a connected graph, let~$H \subseteq G$ be a connected subgraph of~$G$, and let~$v \in V(G) \setminus V(H)$ be a vertex such that~$N_G(H) \subseteq N_G[v]$. For any treedepth decomposition~$T$ of~$G$, there exists a treedepth decomposition~$T'$ of~$G$ such that:
\begin{enumerate}
	\item $\height(T') \leq \max (\height(T), \depth(v, T) + \td(G[V(H)]))$.
	\item All vertices of~$V(H)$ belong to~$T'_v$, the subtree of~$T'$ rooted at~$v$.
\end{enumerate}
\end{lemma}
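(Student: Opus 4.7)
The plan is a proof by induction. First, by Proposition~\ref{proposition:nicedec:nothigher}, I replace $T$ with a nice treedepth decomposition $T^*$ of $G$ satisfying $\height(T^*) \leq \height(T)$ and $\depth(u, T^*) \leq \depth(u, T)$ for every $u \in V(G)$; since the target bound is monotone in these quantities, it suffices to produce $T'$ from $T^*$. I then induct on the quantity $|V(H) \setminus V(T^*_v)|$. The base case $V(H) \subseteq V(T^*_v)$ is immediate with $T' := T^*$.

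For the inductive step, my goal will be to find an $H$-vertex $e \notin V(T^*_v)$ whose rooted subtree satisfies (i)~$V(T^*_e) \subseteq V(H)$ and (ii)~$N_G(V(T^*_e)) \subseteq \{v\} \cup \anc_{T^*}(v)$. With such an $e$, I detach $T^*_e$ from $T^*$ and attach, as a new child of $v$, a minimum-height treedepth decomposition $\hat T^e$ of $G[V(T^*_e)]$; let $\hat T$ denote the result. Validity of $\hat T$ follows from Observation~\ref{observation:neighbors:of:subtree}: every edge of $G$ leaving $V(T^*_e)$ goes to an ancestor of $e$ in $T^*$, and by (ii) such an ancestor lies in $\{v\} \cup \anc_{T^*}(v)$, which remains an ancestor of every vertex in the attached $\hat T^e$. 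Edges internal to $V(T^*_e)$ are represented by $\hat T^e$ and the rest are inherited from $T^*$. The height bound reads $\height(\hat T) \leq \max\bigl(\height(T^*), \depth(v, T^*) + \td(G[V(T^*_e)])\bigr) \leq \max\bigl(\height(T), \depth(v, T) + \td(G[V(H)])\bigr)$, where the last inequality uses~(i) and monotonicity of treedepth under induced subgraphs. Since $V(T^*_e) \supseteq \{e\}$ now lies in $V(\hat T_v)$, the induction measure strictly decreases, and applying the induction hypothesis to $\hat T$ yields the desired $T'$.

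The hard part will be establishing the existence of a suitable $e$. My plan is to apply Observation~\ref{observation:connectedsubgraph:ancestorpath} to the connected subgraph $H$ to extract a unique highest $H$-vertex $h^* \in V(H)$ (of minimum depth in $T^*$) with $V(H) \subseteq V(T^*_{h^*})$, and then to descend into $T^*_{h^*}$ to choose $e$. The analysis splits on whether $h^*$ is a proper ancestor of $v$ or lies in a branch disjoint from $V(T^*_v)$. In the disjoint case, the two subtrees $V(T^*_{h^*})$ and $V(T^*_v)$ share no vertices and are not in ancestor-descendant relation, so the hypothesis $N_G(V(H)) \subseteq N_G[v]$ combined with the fact that edges in $T^*$ go along ancestor-descendant paths forces the external neighborhood of a candidate subtree into $\{v\} \cup \anc_{T^*}(v)$, making condition~(ii) automatic once condition~(i) is secured by a suitably extremal choice. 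In the ancestor-of-$v$ case I expect a more delicate argument: one should take $e$ to be an $H$-vertex inside a ``sibling'' branch of the path from $v$ to $h^*$, and a preliminary cut-and-reattach on such a sibling branch may be required to ensure that the external neighborhood of the chosen $T^*_e$ lands above $v$ rather than straying into a conflicting branch.
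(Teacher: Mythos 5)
Your overall architecture matches the paper's: normalize to a nice decomposition, then repeatedly detach pieces of $H$ and re-attach minimum-height decompositions of them below~$v$, with the height bound controlled exactly as you compute it. Your ``disjoint branch'' case is essentially the paper's easy case (there one shows $N_G(H)$ consists entirely of ancestors of~$v$, so the whole of $H$ can be moved at once). The problem is the other case, which you leave as a sketch, and it is precisely where the real content of the lemma lies.

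The gap is this: your inductive step can only act on a vertex $e \in V(H)$ with $V(T^*_e) \subseteq V(H)$, and no such $e$ can be a proper ancestor of $v$, since the subtree rooted at an ancestor of $v$ contains $v \notin V(H)$. But $H$ may well contain proper ancestors of $v$ --- in fact the paper proves that whenever some $z \in N_G(H)$ lies inside $T^*_v$ (the case you call ``ancestor-of-$v$''), the root path of $v$ \emph{must} contain a vertex of $H$. A minimal witness: let $G$ be a triangle on $\{h,v,z\}$ with $H = \{h\}$, and let $T^*$ be the path decomposition $h \to v \to z$ (which is nice and optimal). Here $V(H)\setminus V(T^*_v) = \{h\}$, yet $V(T^*_h) = V(G) \not\subseteq V(H)$, so your step has no valid $e$ and your measure cannot decrease; there are also no ``sibling branches'' to cut and reattach. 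The missing idea is the paper's \emph{swap}: after relocating (below $v$) the subtrees hanging off the $v$-to-$h$ path that meet $H$ --- one shows these lie entirely inside $V(H)$ and have no $N_G(H)$-vertices --- one exchanges the positions of $v$ and its nearest $H$-ancestor $h$ in the tree. This is what actually brings ancestor-vertices of $H$ into $T_v$, and it strictly decreases $\depth(v,\cdot)$, which is why the paper inducts on $\depth(v,T)$ rather than on $|V(H)\setminus V(T^*_v)|$. (A secondary issue with your measure: re-normalizing to a nice decomposition via Proposition~\ref{proposition:nicedec:nothigher} moves subtrees \emph{upward} and can pull previously-placed pieces of $H$ back out of $T_v$ when they are not adjacent to $v$ itself, so $|V(H)\setminus V(T^*_v)|$ is not guaranteed to be monotone under your operations; the depth of $v$ is.)
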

\begin{proof}
Let~$G, H$, and~$v$ be as stated. We use induction on~$\depth(v,T)$. If~$\depth(v,T) = 1$ then~$v$ is the root of~$T$. Since~$G$ is connected all its vertices belong to the same decomposition tree and are therefore contained in~$T_v$. So~$T' = T$ trivially satisfies the requirements. For the induction step, assume that~$\depth(v,T) > 1$.

By Proposition~\ref{proposition:nicedec:nothigher} there exists a nice treedepth decomposition~$T^*$ of~$G$ whose height does not exceed the height of~$T$, such that no vertex has greater depth in~$T^*$ than in~$T$. If~$V(H) \subseteq T^*_v$ then the lemma holds, as we may take~$T'$ equal to~$T^*$. Assume for the remainder that~$V(H) \setminus T^*_v \neq \emptyset$. Since~$G$ is connected and~$V(H) \subsetneq V(G)$ (as~$v \in V(G) \setminus V(H)$), the set~$N_G(H)$ is not empty.

First consider the case that all vertices of~$N_G(H)$ lie on the path from~$v$ to the root in~$T^*$. Then it is easy to find a decomposition as described in the lemma: we form~$T'$ by restricting the decomposition~$T^*$ to the vertices of~$V(G) \setminus V(H)$, then we take a minimum-height treedepth decomposition~$T^H$ of the graph~$G[V(H)]$ and attach the root of~$T^H$ to vertex~$v$ to ensure that~$H \subseteq T'_v$. From this construction it easily follows that~$\height(T') \leq \max (\height(T), \depth(v,T) + \td(G[V(H)]))$. To see that all edges are represented in the model, observe that (i) all edges of~$G - V(H)$ are represented because~$T'$ contains the restriction of~$T^*$ to~$V(G) \setminus V(H)$, (ii) all edges of~$G[V(H)]$ are represented because a valid decomposition of~$G[V(H)]$ is inserted into~$T'$, while finally all edges between~$V(H)$ and~$V(G) \setminus V(H)$ are represented because all vertices of~$N_G(H)$ are ancestors of~$v$ and therefore ancestors of every vertex in~$H \subseteq T'_v$. 

In the remainder we therefore focus on the case that some vertex~$z \in N_G(H)$ is not an ancestor of~$v$. Since~$v$ is an ancestor of itself, we have~$z \neq v$. Observe that~$z$ is adjacent in~$G$ to~$v$, since~$z \in N_G(H) \subseteq N_G[v]$ and~$z \neq v$. Since~$T^*$ is a valid treedepth decomposition, if~$z$ is not an ancestor of~$v$, then~$z \in T^*_v$.

\begin{claim}
The path in~$T^*$ from~$v$ to the root contains a vertex of~$H$.
\end{claim}
\begin{claimproof}
Assume for a contradiction that the path from~$v$ to the root contains no vertex of~$H$, i.e., no ancestor of~$v$ is contained in~$H$. Let~$h \in V(H) \setminus T^*_v$, which exists by our assumption above. Let~$h' \in V(H)$ be adjacent in~$G$ to~$z$; such a vertex exists since~$z \in N_G(H)$. If one of~$h$ or~$h'$ is an ancestor of~$v$ in~$T^*$ then we are done. If this is not the case, then observe that~$h' \in T^*_v$: to realize its edge to~$z \in T^*_v$ without being an ancestor of~$v$, it must lie in~$T^*_v$. Since~$h \not \in T^*_v$ but~$h' \in T^*_v$, the only common ancestors of~$h$ and~$h'$ are ancestors of~$v$, which are not contained in~$H$ by assumption. But by Observation~\ref{observation:connectedsubgraph:ancestorpath}, the common ancestors of~$h$ and~$h'$ separate~$h$ and~$h'$ in~$G$. But then these common ancestors are a vertex subset of~$V(G) \setminus V(H)$ that separate~$h$ and~$h'$ in~$G$, contradicting the fact that~$H$ is a connected subgraph of~$G$. The claim follows.
\end{claimproof}

Let~$h \in V(H)$ be the first vertex from~$H$ on the path from~$v$ to the root of~$T^*$. Let~$P$ be the path from~$\pi(v)$ to~$h$ in~$T^*$. This choice of~$h$ and the fact that~$T^*$ is nice has a useful consequence. Let~$c_1, \ldots, c_t$ be vertices of~$T^*$ that are unequal to~$v$, whose parent belongs to~$P$, and for which the subtree rooted there contains at least one vertex of~$V(H)$. (It may be that there are no such vertices.)

\begin{claim} \label{claim:subtrees:no:nh}
For each vertex~$c_i$ with~$i \in [t]$, the subtree~$T^*_{c_i}$ contains no vertex of~$N_G(H)$.
\end{claim}
\begin{claimproof}
Suppose that~$T^*_{c_i}$ contains a vertex~$x \in N_G(H)$. By our definition of~$c_i$ we have~$x \neq v$ and vertex~$c_i$ is not in an ancestor-descendant relation with~$v$ in~$T^*$. Hence no descendant of~$c_i$ is in ancestor-descendant relation with~$v$ either. Since~$N_G(H) \subseteq N_G[v]$ and~$x \neq v$ we have~$\{x,v\} \in E(G)$. However, since~$x \in T^*_{c_i}$ is not in ancestor-descendant relation with~$v$, this edge is not realized in the decomposition~$T^*$; a contradiction.
\end{claimproof}

\begin{claim} \label{claim:move:noinbetween}
For each vertex~$c_i$ with~$i \in [t]$ we have~$T^*_{c_i} \subseteq V(H)$.
\end{claim}
\begin{claimproof}
Assume for a contradiction that~$T^*_{c_i} \not \subseteq V(H)$. Since~$T^*_{c_i}$ contains at least one vertex of~$V(H)$ by definition of~$c_i$, it follows that~$T^*_{c_i}$ contains both a vertex~$h' \in V(H)$ and a vertex~$z \not \in V(H)$. By the definition of a nice treedepth decomposition, the graph~$G[T^*_{c_i}]$ is connected and contains a path~$P'$ from~$h'$ to~$z$. Since~$h'$ is in~$V(H)$ but~$z$ is not, it follows that~$P' \subseteq T^*_{c_i}$ contains a vertex of~$N_G(H)$; a contradiction to Claim~\ref{claim:subtrees:no:nh}.
\end{claimproof}

For each~$i \in [t]$ let~$\hat{T}^i$ be a minimum-height treedepth decomposition of the graph~$G[T^*_{c_i}]$. Since~$T^*_{c_i} \subseteq V(H)$ for all~$i \in [t]$ it follows that~$\height(\hat{T}^i) \leq \td(G[V(H)])$ for all~$i \in [t]$. We now obtain a new decomposition tree~$\hat{T}$ from~$T^*$ as follows. Remove the subtrees~$T^*_{c_1}, \ldots, T^*_{c_t}$ from~$T^*$. Then add the trees~$\hat{T}^1, \ldots, \hat{T}^t$ and connect the root of each of these trees to~$v$. This results in a valid decomposition of~$G$ of height at most~$\max(\height(T^*), \depth(v, T^*) + \td(G[V(H)]))$. To see that the decomposition is valid, observe that all edges of~$G[T^*_{c_i}]$ for~$i \in [t]$ are represented in the subtrees that we inserted. Since we attach the replacement trees to the ancestor~$v$ of the vertex of the path~$P$ that they were originally attached to, the edges to the rest of the graph are represented as well. 

As the next step, we swap the labels of~$h$ and~$v$ in~$\hat{T}$ and use the resulting decomposition as~$\widetilde{T}$. It is easy to see that moving~$v$ to the location of its ancestor~$h$ maintains the fact that all edges incident on~$v$ are represented. It remains to prove that all edges incident on~$h$ are still represented after the swap. For this it suffices to observe that the only vertices that were in ancestor-descendant relation with~$h$ in~$\hat{T}$, but are not in ancestor-descendant relation with~$h$ in~$\widetilde{T}$, are those contained in subtrees of~$\hat{T}$ attached to the path~$P$ at vertices other than~$c_1, \ldots, c_t$. But by our choice of~$c_1, \ldots, c_t$, such subtrees contain no vertices of~$H$. By the same argumentation as in Claim~\ref{claim:subtrees:no:nh}, such subtrees contain no vertex of~$N_G(H)$ either. So the vertices to which ancestor-descendant relation is lost are not neighbors of~$h$ in~$G$, which implies that~$\widetilde{T}$ is indeed a valid decomposition of~$G$.

We finish the proof by applying the induction hypothesis. Since~$\widetilde{T}$ was obtained from~$\hat{T}$ by swapping the labels of~$v$ and~$h$ in the tree, while~$\depth(v,\hat{T}) > \depth(h,\hat{T})$ since~$h$ is an ancestor of~$v$ in~$\hat{T}$, it follows that~$\depth(v,\widetilde{T}) < \depth(v,\hat{T}) \leq \depth(v,T^*)\leq \depth(v,T)$. This implies that we may apply induction to~$G,H,v$, and the decomposition~$\widetilde{T}$, to conclude that there is a treedepth decomposition~$T'$ such that~$V(H) \subseteq T'_v$ and~$\height(T')$ is bounded by $$\max (\height(\widetilde{T}), \depth(v, \widetilde{T}) + \td(G[V(H)])) \leq \max(\height(T), \depth(v, T) + \td(G[V(H)])).$$ This concludes the proof.
\end{proof}

\subsection{Nearly clique-separated sets} \label{section:nearly:clique:isolated}
The purpose of this section is to introduce the following notion.

\begin{definition} \label{definition:nearly:clique:sep}
Let~$G$ be a graph, let~$S \subseteq V(G)$, and let~$\ell$ be an integer. The set~$S$ is \emph{$\ell$-nearly clique separated} if there is a set~$Q \subseteq N_G(S)$ of size at most~$\ell$ such that~$N_G(S) \setminus Q$ is a clique.
\end{definition}

Nearly clique separated sets are important because minimum treedepth modulators contain only few of their vertices, implying that the structure of a nearly clique separated subgraph does not change too much when a minimum modulator is removed.

\begin{lemma} \label{lem:boundingsolnintersection:updated}
Let~$G$ be a graph, let~$\ell$ be an integer, and let~$S\subseteq V(G)$ with~$\td(G[S])\leq \eta$. If~$S$ is $\ell$-nearly clique separated, then~$|Z \cap S| \leq \eta + \ell$ for any minimum treedepth-$\eta$ modulator~$Z$ of~$G$.
\end{lemma}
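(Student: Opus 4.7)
The plan is to prove the contrapositive: suppose for contradiction that $|Z \cap S| > \eta + \ell$, and I construct a treedepth-$\eta$ modulator $Z^{\star}$ of strictly smaller size, contradicting the minimality of $Z$. The candidate replacement is $Z^{\star} := (Z \setminus S) \cup Q \cup W$, where $W \subseteq S$ is the vertex set of a root-to-leaf path in an optimal treedepth decomposition $F_S$ of $G[S]$. Since $\td(G[S]) \le \eta$ we have $|W| \le \eta$, and since $|Q| \le \ell$ with $Q \cap S = \emptyset$, the bookkeeping yields
\[
|Z^{\star}| \le (|Z| - |Z \cap S|) + \ell + \eta < |Z|,
\]
so the task reduces to establishing $\td(G - Z^{\star}) \le \eta$.

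Let $V_A := V(G) \setminus S \setminus Z \setminus Q$ and $C := N_G(S) \setminus Q \setminus Z \subseteq V_A$, so that $V(G - Z^{\star}) = V_A \cup (S \setminus W)$. Since $G[V_A]$ is an induced subgraph of $G - Z$, its treedepth is at most $\eta$; take an optimal decomposition $F_A$. The set $C$ is a clique in $G[V_A]$, so by Observation~\ref{obs:clque} it lies on a single root-to-leaf path of some tree in $F_A$. The forest $F_S - W$ decomposes $G[S \setminus W]$, and each of its subtrees $T_i$ hanging from a vertex $w_i \in W$ at depth $d_i$ in $F_S$ has height at most $\eta - d_i$. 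Moreover, every neighbor in $V_A$ of a vertex of $V(T_i)$ lies inside the clique $C$, since the only bridge between $S$ and $V_A$ in $G - Z^{\star}$ is through $N_G(S) \setminus Q \setminus Z = C$.

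I would then build the desired decomposition of $G - Z^{\star}$ by taking $F_A$ and attaching each subtree $T_i$ below the deepest vertex of $C \cap N_G(V(T_i))$ in $F_A$; this faithfully represents every cross-edge because the attach point is an ancestor of all vertices in $T_i$ and is adjacent (through the clique) to every vertex of $C \cap N_G(V(T_i))$. If necessary, Lemma~\ref{lemma:switch} with $v$ chosen inside $C$ can be invoked to push the connected components of $G[S - W]$ below the correct member of $C$ without increasing the overall height.

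The main obstacle is the height bound. A naive attachment of the whole $F_S - W$ below the single deepest vertex $v^{\star}$ of $C$ would give a decomposition of height up to $\eta + (\eta - 1)$, which exceeds $\eta$. Overcoming this requires combining two ingredients: first, choosing $F_A$ so that $C$ appears as a prefix of its containing root-to-leaf path, which makes the depth of the attach point for subtree $T_i$ equal to $|C \cap N_G(V(T_i))|$; second, exploiting $\height(T_i) \le \eta - d_i$ so that it suffices to show $|C \cap N_G(V(T_i))| \le d_i$ for each $i$. Establishing this last inequality by a careful choice of the root-to-leaf path $W$ inside $F_S$ relative to the structure of $F_A$ around the clique $C$, and invoking Lemma~\ref{lemma:switch} to absorb any residual bad placements, is the heart of the argument and where I expect the technical work to concentrate.
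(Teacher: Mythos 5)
Your overall strategy---an exchange argument against a hypothetical minimum modulator $Z$ with $|Z\cap S|>\eta+\ell$---is the same as the paper's, and your size bookkeeping is correct, but the replacement set you chose leaves a gap that you flag yourself and that I do not believe can be closed. The deferred step, namely $|C\cap N_G(V(T_i))|\le d_i$ for every subtree $T_i$ of $F_S-W$, has no reason to hold: the adjacency pattern between $S$ and the surviving clique $C=K\setminus Z$ (where $K:=N_G(S)\setminus Q$) is completely unconstrained by the hypotheses, so a subtree $T_i$ hanging at depth $d_i=1$ in $F_S$ may collectively see all of $C$ (up to $\eta$ vertices), and no choice of the root-to-leaf path $W$ changes which clique vertices the vertices of $S$ are adjacent to. The fallback appeal to Lemma~\ref{lemma:switch} does not rescue this either: that lemma only guarantees height $\max(\height(T),\depth(v,T)+\td(G[V(H)]))$, which is exactly the $\approx 2\eta$ bound you are trying to beat. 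In short, $Z^{\star}=(Z\setminus S)\cup Q\cup W$ need not be a treedepth-$\eta$ modulator at all, and the ``heart of the argument'' you postpone is not a technicality but the missing idea.

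The fix is to spend the $\eta$-sized part of your budget differently. Since $K$ is a clique and $\td(G-Z)\le\eta$, at most $\eta$ vertices of $K$ survive in $G-Z$, i.e.\ $|K\setminus Z|\le\eta$. The paper's replacement is $Z^{\star}:=(Z\setminus S)\cup(K\setminus Z)\cup Q$, which satisfies the same bound $|Z^{\star}|\le |Z|-|Z\cap S|+\eta+\ell<|Z|$ but now contains \emph{all} of $N_G(S)=K\cup Q$. This completely severs $S$ from the rest of $G-Z^{\star}$: every component meeting $S$ lies inside $S$ and has treedepth at most $\td(G[S])\le\eta$, and every other component is a subgraph of a component of $G-Z$ and hence also has treedepth at most $\eta$. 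No path $W$, no decomposition surgery, and no height analysis is needed.
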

\begin{proof}
Assume for a contradiction that~$Z'$ is a minimum treedepth-$\eta$ modulator of~$G$ such that $|Z \cap S|> \eta + \ell$. Let~$Q \subseteq N_G(S)$ be a set of size at most~$\ell$ such that~$N_G(S) \setminus Q$ is a clique, which exists by Definition~\ref{definition:nearly:clique:sep}. Let~$K := N_G(S) \setminus Q$. Then~$G[K]$ is a clique, and therefore~$|K \setminus Z'| \leq \eta$ (otherwise the~$\eta + 1$ remaining vertices of the clique would cause the treedepth of~$G - Z'$ to exceed~$\eta$). We claim that $Z=(Z' \setminus S) \cup (K \setminus Z') \cup Q$ is a treedepth-$\eta$ modulator of $G$ of size less than~$|Z'|$.

The fact that the treedepth of $G-Z$ is at most~$\eta$ can be verified as follows. Observe that $\td(G[S])\leq \eta$ and~$N_G(S) \subseteq Z$, implying that no vertex of~$S$ is adjacent to a vertex outside of~$S$ in~$G - Z$. Hence all connected components of~$G-Z$ that contain a vertex of~$S$, have treedepth at most~$\eta$. On the other hand, for every connected component $H$ of $G-Z$ with $V(H)\cap S=\emptyset$ there exists a connected component~$H'$ of~$G-Z'$ such that $H \subseteq H'$ and thus, $\td(H)\leq \td(H')\leq \eta$. To see that~$Z$ is smaller than~$Z'$, observe that we remove~$|Z' \cap S| > \eta + \ell$ vertices from~$Z'$, while we add~$|K \setminus Z'| \leq \eta$ plus~$|Q| \leq \ell$ vertices. Hence~$Z$ is a smaller treedepth-$\eta$ modulator than~$Z'$, contradicting optimality of~$Z'$.
\end{proof}

\subsection{Safe Transformations} \label{section:treedepth:transformations}

In this section we analyze three different operations in a graph: removing vertices, removing edges, and adding edges. For each type of operation we give conditions under which the transformation provably does not change the answer to an instance~$(G,k)$ of \TreedepthEtaDeletion. We say that instances~$(G,k)$ and~$(G',k)$ are \emph{equivalent} if one is a \yes-instance if and only if the other is. When proving that two instances are equivalent, we frequently use the fact that if~$G'$ is a minor of~$G$ and~$(G,k)$ is a \yes-instance of \TreedepthEtaDeletion, then~$(G',k)$ is a \yes-instance as well. This follows from the fact that treedepth does not increase when taking minors, so that if~$\td(G - Z) \leq \eta$ we must have~$\td(G' - Z) \leq \eta$ since~$G' - Z$ is a minor of~$G - Z$.

\begin{figure}[t]
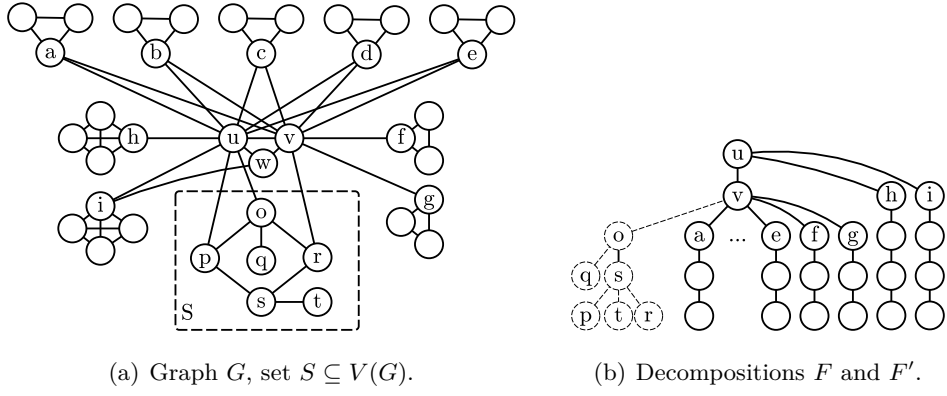

\begin{center}
\subfigure[Graph~$G$, set~$S \subseteq V(G)$.]{\label{fig:remove:simplicial:graph}
\image{VertexDeletion1}
}
\subfigure[Decompositions~$F$ and~$F'$.]{\label{fig:remove:simplicial:decompositions}
\image{VertexDeletion23}
}
\caption{Illustration for the proof of Lemma~\ref{lemma:remove:simplicial:component}. \ref{fig:remove:simplicial:graph} Graph~$G$ with vertex set~$S \subseteq V(G)$ whose neighborhood~$N_G(S) = \{u,v\}$ is a clique. This is a \yes-instance for treedepth-5 transversal with~$k=1$, since~$\{w\}$ is a solution. Lemma~\ref{lemma:remove:simplicial:component} is applicable with~$\ell=2$ by choosing~$X^v_1, \ldots, X^v_7$ to be the triangles containing~$a,\ldots,e,f,g$ respectively. The sets~$X^u_1,\ldots,X^u_7$ are the triangles containing~$a,\ldots,e$ and the 4-cliques containing~$h$ and~$i$, respectively; note that sets~$X^u_i$ may intersect sets~$X^v_j$, but that, e.g.,~$X^u_i$ and~$X^u_j$ must be disjoint. \ref{fig:remove:simplicial:decompositions} The solid lines show a treedepth-5 decomposition for the graph~$(G - S) - \{z\}$, where~$\{z\}$ is a solution to the reduced instance~$G - S$. The dotted lines show how to augment to a decomposition for the graph~$G - \{w\}$ by attaching a minimum-width decomposition for~$G[S]$ to the lowest neighbor of~$S$.}
\end{center}
\end{figure}

\begin{lemma}[Vertex removal] \label{lemma:remove:simplicial:component}
Let~$(G,k)$ be an instance of \TreedepthEtaDeletion and let~$\ell$ be an integer. Let~$S \subseteq V(G)$ be such that~$N_G(S)$ is a clique and~$\td(G[S]) \leq \eta$. For every~$v \in N_G(S)$, let~$X^v_{1}, \ldots, X^v_{\ell + \eta} \subseteq V(G)$ induce connected subgraphs of~$G$ such that:
\begin{enumerate}
	\item $\forall v \in N_G(S), \forall i \in [\ell + \eta]\colon \td(G[X^v_{i}]) \geq \td(G[S])$ and~$v \in N_G(X^v_{i})$,\label{prop:remove:simplicial:highdepth}
	\item $\forall v \in N_G(S)$, the sets~$X^v_{1}, \ldots, X^v_{\ell + \eta}$ are pairwise disjoint and disjoint from~$S$, and
	\item $G-S$~has a minimum treedepth-$\eta$ modulator containing~$\leq \ell$ vertices of~$\X$, 
\end{enumerate}
where~$\X := \bigcup _{v \in N_G(S)} \bigcup _{i \in [\ell + \eta]} X^v_{i}$. Then~$(G,k)$ is equivalent to the instance~$(G - S, k)$.
\end{lemma}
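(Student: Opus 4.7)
The equivalence splits into two implications. The forward direction is immediate: if $Z \subseteq V(G)$ with $|Z| \leq k$ satisfies $\td(G-Z) \leq \eta$, then $Z \setminus S$ is a modulator of size at most $k$ for $G - S$, since $(G-S)-(Z\setminus S)$ is a subgraph of $G-Z$ and treedepth is monotone under taking subgraphs.

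For the reverse direction, by assumption~(3) pick a minimum treedepth-$\eta$ modulator $Z'$ of $G-S$ with $|Z'\cap\X| \leq \ell$; since $(G-S,k)$ is a \yes-instance, $|Z'|\leq k$. I claim that $Z'$ is also a treedepth-$\eta$ modulator of $G$. The plan is to take an optimal treedepth-$\eta$ decomposition $F$ of $(G-S)-Z'$ and attach a minimum-height decomposition of $G[S]$ below a suitable vertex of $N_G(S) \setminus Z'$ without exceeding total height $\eta$; the edges between $S$ and $N_G(S) \setminus Z'$ will then automatically be represented.

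The key step is to show that every $v \in N_G(S) \setminus Z'$ satisfies $\reach(v,F) \geq \td(G[S])$. Fix such a $v$. Because $X^v_1,\dots,X^v_{\ell+\eta}$ are pairwise disjoint subsets of $\X$ and $|Z' \cap \X| \leq \ell$, at most $\ell$ of them intersect $Z'$, so at least $\eta$ are disjoint from $Z'$ (and from $S$ by property~2). These $\eta$ sets remain pairwise disjoint connected subgraphs of $(G-S)-Z'$, each of treedepth at least $\td(G[S])$ and each neighboring $v$. Applying Lemma~\ref{lem:neighheght} within the connected component of $(G-S)-Z'$ containing $v$, with $d = \td(G[S])$ and $t=\eta$, we see that $\reach(v, F) < \td(G[S])$ would force $\height(F) \geq \eta + 1$, contradicting optimality of $F$.

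Since $N_G(S)$ is a clique, so is $N_G(S) \setminus Z'$ in $(G-S)-Z'$, and Observation~\ref{obs:clque} places its vertices on one root-to-leaf path of $F$. Let $v^*$ be the deepest of these (if $N_G(S) \setminus Z' = \emptyset$, simply add the decomposition below as a separate tree). Attach a minimum-height treedepth decomposition of $G[S]$, of height $\td(G[S]) \leq \eta$, as a new subtree rooted at $v^*$. In the resulting forest $F'$, edges of $(G-S)-Z'$ are represented by $F$, edges of $G[S]$ by the attached subtree, and every edge between $S$ and $N_G(S) \setminus Z'$ is represented because each vertex of $S$ is a descendant of $v^*$, which itself is a descendant of every vertex of $N_G(S) \setminus Z'$. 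The height of $F'$ is at most $\depth(v^*, F) + \td(G[S]) \leq \depth(v^*, F) + \reach(v^*, F) = \height(F) \leq \eta$, so $Z'$ is a treedepth-$\eta$ modulator of $G$ of size at most $k$. The main obstacle is obtaining the reach bound uniformly for all of $N_G(S) \setminus Z'$; this is precisely why the hypothesis provides $\ell+\eta$ witness subgraphs per neighbor of $S$ and why property~(3) bounds $|Z'\cap\X|$ by $\ell$.
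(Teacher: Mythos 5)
Your proposal is correct and follows essentially the same route as the paper's proof: pick the minimum modulator $Z'$ of $G-S$ promised by condition~(3), use the $\ell+\eta$ disjoint witness subgraphs together with Lemma~\ref{lem:neighheght} to show $\reach(v,F)\geq \td(G[S])$ for the relevant neighbor(s) of $S$, and attach a minimum-height decomposition of $G[S]$ below the deepest surviving vertex of the clique $N_G(S)\setminus Z'$ (handling $N_G(S)\setminus Z'=\emptyset$ separately). The only cosmetic difference is that you establish the reach bound for every vertex of $N_G(S)\setminus Z'$ while the paper does so only for the deepest one, which changes nothing.
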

\begin{proof}
If~$(G,k)$ is a \yes-instance then~$(G - S, k)$ trivially is a \yes-instance as well. For the reverse direction, let~$Z$ be a minimum treedepth-$\eta$ modulator of~$G - S$ with~$|Z \cap \X| \leq \ell$, which exists by assumption and has size at most~$k$. Let~$F$ be a minimum-height treedepth decomposition of~$G - Z$, which has height at most~$\eta$. If~$N_G(S) \setminus Z = \emptyset$ then~$S$ forms an isolated component of treedepth at most~$\eta$ in the graph~$G - Z$, implying that~$\td(G - Z) = \max(\td((G - S) - Z), \td(G[S])) \leq \eta$. Assume then that~$N_G(S) \setminus Z$ is not empty. Since the set~$N_G(S)$ is a clique in~$G$, it follows that~$N_G(S) \setminus Z$ is a clique in~$G - Z$. Hence all vertices of~$N_G(S) \setminus Z$ appear on one root-to-leaf path in~$F$. Let~$v$ be the vertex of~$N_G(S) \setminus Z$ of greatest depth in~$F$.

\begin{claim} \label{claim:height:remove:simplicial}
We have~$\reach(v, F) \geq \td(G[S])$.
\end{claim}
\begin{claimproof}
Since the sets~$X^v_{1}, \ldots, X^v_{\ell + \eta} \subseteq \X$ are pairwise disjoint, each of the at most~$\ell$ vertices in~$Z \cap \X$ intersects at most one such subset. Hence there are at least~$\eta$ such subsets~$X^v_{j_1}, \ldots, X^v_{j_{\eta}}$ that are not intersected by~$Z$. Each~$X^v_{j_t}$ therefore induces a connected subgraph of~$(G - S) - Z$ of treedepth at least~$\td(G[S])$ by~(\ref{prop:remove:simplicial:highdepth}) that is adjacent in~$(G - S) - Z$ to~$v$. Now apply Lemma~\ref{lem:neighheght} to the connected component of~$(G - S) - Z$ containing~$v$ and the tree~$T$ in~$F$ representing that component, using~$X^v_{j_1}, \ldots, X^v_{j_{\eta}}$ as connected subgraphs of treedepth at least~$\td(G[S])$. The lemma shows that if~$\reach(v, T) < d$, then~$\height(T) \geq \eta + 1$, contradicting the fact that~$\height(T) \leq \height(F) \leq \eta$. Hence~$\reach(v,T) \geq d$. Since~$F$ is at least as high as~$T$, this implies~$\reach(v,F) \geq \td(G[S])$.
\end{claimproof}

Starting from the decomposition~$F$ of~$(G - S) - Z$, we now obtain a decomposition~$F'$ of~$G - S$ as follows. Take a minimum-height decomposition forest of~$G[S]$, add it to~$F$ and connect the roots of all vertices in the decomposition forest to~$v$. Since~$N_G(S) \setminus Z$ appears on the path from~$v$ to the root of its tree, this results in a valid treedepth decomposition of~$G - S$. Since~$\reach(v,F) \geq \td(G[S])$, the height of~$F'$ equals the height of~$F$. Hence~$G - Z$ has treedepth at most~$\eta$, showing that~$(G,k)$ is a \yes-instance.
\end{proof}

Lemma~\ref{lemma:remove:simplicial:component} was inspired by earlier work~\cite[Rule 6]{BodlaenderJK12a} on \textsc{Pathwidth}. The next lemma concerns edge addition.

\begin{lemma}[Edge addition] \label{lemma:add:edge}
Let~$(G,k)$ be an instance of \TreedepthEtaDeletion and let~$\ell$ be an integer. Let~$X \subseteq V(G)$ and let~$\{u,v\} \in \binom{V(G)}{2} \setminus E(G)$. If the following conditions hold:
\begin{enumerate}
	\item the graph~$G[X \cup \{u,v\}]$ contains at least~$\ell + \eta$ internally vertex-disjoint paths between~$u$ and~$v$, and
	\item $G$ has a minimum treedepth-$\eta$ modulator containing~$\leq \ell$ vertices of~$X$,
\end{enumerate}
then~$(G,k)$ is equivalent to the instance~$(G + \{u,v\}, k)$ obtained by adding the edge~$\{u,v\}$.
\end{lemma}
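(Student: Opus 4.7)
The plan is to prove both directions of equivalence between $(G,k)$ and $(G+\{u,v\},k)$. The direction $(G+\{u,v\},k) \Rightarrow (G,k)$ is immediate: since $G$ is a subgraph (and hence a minor) of $G+\{u,v\}$, any treedepth-$\eta$ modulator of $G+\{u,v\}$ is also one of $G$, using the general principle noted at the start of Section~\ref{section:treedepth:transformations} that treedepth does not increase when taking minors.

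For the reverse direction, I would invoke hypothesis~(2) to pick a minimum treedepth-$\eta$ modulator $Z$ of $G$ with $|Z \cap X| \leq \ell$; since $(G,k)$ is a \yes-instance we have $|Z| \leq k$. The goal is to show that $Z$ is a treedepth-$\eta$ modulator of $G+\{u,v\}$ as well. If $u \in Z$ or $v \in Z$, then $(G+\{u,v\}) - Z = G - Z$ and there is nothing to prove. Otherwise, let $F$ be a minimum-height treedepth decomposition of $G - Z$, so $\height(F) \leq \eta$. The key claim is that $u$ and $v$ are in ancestor-descendant relation in $F$; once this is shown, $F$ already represents the edge $\{u,v\}$ and is therefore a valid treedepth decomposition of $(G+\{u,v\}) - Z$ of height at most $\eta$, completing the proof.

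To prove the key claim, I would count surviving internally vertex-disjoint $uv$-paths after deleting $Z$. By hypothesis~(1), $G[X \cup \{u,v\}]$ contains $\ell + \eta$ pairwise internally vertex-disjoint $uv$-paths; their internal vertices lie in $X$, and since $|Z \cap X| \leq \ell$, at most $\ell$ of these paths contain an internal vertex in $Z$. Hence at least $\eta$ of them survive intact in $(G-Z)[X \cup \{u,v\}] \subseteq G-Z$, giving $\eta$ internally vertex-disjoint $uv$-paths in $G-Z$. If $u$ and $v$ were not in ancestor-descendant relation in $F$, Lemma~\ref{lem:bgflw} (applied with $q = \eta$) would force $\height(F) > \eta$, contradicting $\height(F) \leq \eta$.

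There is no real obstacle here once Lemma~\ref{lem:bgflw} is available; the proof is essentially a Menger-style counting argument. The only subtlety is that hypothesis~(2) is essential: an arbitrary minimum modulator might destroy many of the $\ell+\eta$ paths, so we must commit from the outset to a modulator that respects the bound $|Z \cap X| \leq \ell$, leaving at least $\eta$ disjoint paths available to drive the contradiction with Lemma~\ref{lem:bgflw}.
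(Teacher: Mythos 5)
Your proposal is correct and follows essentially the same argument as the paper: fix a minimum modulator $Z$ with $|Z\cap X|\le\ell$, dispose of the case $Z\cap\{u,v\}\neq\emptyset$, observe that at least $\eta$ of the $\ell+\eta$ internally vertex-disjoint $uv$-paths survive in $G-Z$, and apply Lemma~\ref{lem:bgflw} to conclude that $u$ and $v$ are in ancestor-descendant relation in any height-$\eta$ decomposition of $G-Z$, so the decomposition already represents the new edge. No gaps.
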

\begin{proof}
If~$(G + \{u,v\}, k)$ is a \yes-instance then~$(G,k)$ is as well. In the other direction, suppose that~$(G,k)$ is a \yes-instance and let~$Z$ be a minimum treedepth-$\eta$ modulator of~$G$ with~$|Z \cap X| \leq \ell$, which exists by assumption. Let~$F$ be a minimum-height treedepth decomposition of~$G - Z$. If~$Z \cap \{u,v\} \neq \emptyset$ then~$G - Z = (G + \{u,v\}) - Z$ and so~$Z$ is a solution for~$G + \{u,v\}$, proving it to be a \yes-instance.

Assume then that~$Z \cap \{u,v\} = \emptyset$. Since there are~$\ell + \eta$ internally vertex-disjoint paths between~$u$ and~$v$ in~$G[X \cup \{u,v\}]$, while~$Z$ intersects at most~$\ell$ of them as~$|Z \cap X| \leq \ell$, it follows that there are~$\eta$ internally vertex-disjoint paths between~$u$ and~$v$ in~$G - Z$. By Lemma~\ref{lem:bgflw}, these~$\eta$ paths prove that if~$u$ and~$v$ are not in ancestor-descendant relation in~$F$ then~$\height(F) > \eta$, a contradiction. So~$u$ and~$v$ are in ancestor-descendant relation which shows that~$F$ is also a valid treedepth decomposition of~$(G + \{u,v\}) - Z$, proving~$(G + \{u,v\}, k)$ to be a \yes-instance.
\end{proof}

Finally, we consider edge removal.

\begin{lemma}[Edge removal] \label{lemma:remove:edges}
Let~$(G,k)$ be an instance of \TreedepthEtaDeletion and let~$\ell$ be an integer. Let~$S \subseteq V(G)$ and let~$v \in V(G) \setminus S$ such that~$N_G(S) \subseteq N_G[v]$. Let~$X_1, \ldots, X_{\ell + \eta} \subseteq V(G)$ be connected subgraphs of~$G$ such that:
\begin{enumerate}
	\item $\forall i \in [\ell + \eta]\colon \td(G[X_{i}]) \geq \td(G[S])$ and~$v \in N_G(X_{i})$,
	\item the sets~$X_{1}, \ldots, X_{\ell + \eta}$ are pairwise disjoint and disjoint from~$S$, and
	\item any graph obtained from~$G$ by removing edges between~$v$ and~$S$ has a minimum treedepth-$\eta$ modulator containing~$\leq \ell$ vertices of~$\X$,\label{prop:remove:edges:smallint}
\end{enumerate}
where~$\X := \bigcup _{i \in [\ell + \eta]} X_i$. Then~$(G,k)$ is equivalent to the instance~$(G', k)$, where~$G'$ is obtained from~$G$ by removing all edges between~$v$ and~$S$.
\end{lemma}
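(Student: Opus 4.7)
For the trivial direction, if $(G,k)$ is a \yes-instance then so is $(G',k)$, since $G'$ is a spanning subgraph of $G$ and treedepth is monotone under taking subgraphs, so any modulator that works for $G$ also works for $G'$. For the reverse implication, suppose $(G',k)$ is a \yes-instance. By condition~(\ref{prop:remove:edges:smallint}) applied to $G'$, there is a minimum treedepth-$\eta$ modulator $Z$ of $G'$ with $|Z|\le k$ and $|Z\cap\X|\le\ell$. If $v\in Z$ then every removed edge has an endpoint in $Z$, so $G-Z=G'-Z$ and we are done. Assume henceforth $v\notin Z$, and let $F$ be a minimum-height treedepth decomposition of $G'-Z$ of height at most $\eta$.

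The first observation is that $\reach(v,F)\ge\td(G[S])$. Since the subgraphs $X_1,\ldots,X_{\ell+\eta}$ are pairwise disjoint and $|Z\cap\X|\le\ell$, at least $\eta$ of them, say $X_{i_1},\ldots,X_{i_\eta}$, avoid $Z$ entirely. Each is disjoint from $S$, so removing the $v$-to-$S$ edges does not affect it, and it remains a connected subgraph of $G'-Z$ of treedepth at least $\td(G[S])$ with $v$ as a neighbor. Applying Lemma~\ref{lem:neighheght} to the tree of $F$ containing $v$ with these $\eta$ witnesses then yields the desired inequality.

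To construct a treedepth-$\eta$ decomposition of $G-Z$, I plan to reintroduce the removed edges one component of $G[S\setminus Z]$ at a time. Let $C_1,\ldots,C_p$ enumerate the connected components of $G[S\setminus Z]=G'[S\setminus Z]$ and set $G^{(0)}:=G'$ and $G^{(i)}:=G^{(i-1)}\cup\{\{v,c\}:c\in C_i\cap N_G(v)\}$, so that $G^{(p)}=G$. Starting from $F^{(0)}:=F$, I build $F^{(i)}$, a decomposition of $G^{(i)}-Z$ of height at most $\eta$, maintaining the invariant $\depth(v,F^{(i)})\le\depth(v,F)$. If $C_i\cap N_G(v)=\emptyset$ then no edges are added and I set $F^{(i)}:=F^{(i-1)}$. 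Otherwise there are two subcases. If $C_i$ already lies in the component of $v$ in $G^{(i-1)}-Z$, I apply Lemma~\ref{lemma:switch} to that component with $H:=G[C_i]$, using the tree of $v$ in $F^{(i-1)}$ as the input decomposition; the conclusion places $V(C_i)$ below $v$, thereby representing the newly added edges $\{v,c\}$, and bounds the height by $\max(\eta,\,\depth(v,F^{(i-1)})+\td(G[C_i]))\le\eta$ since $\td(G[C_i])\le\td(G[S])$ and the invariant gives $\depth(v,F^{(i-1)})+\td(G[S])\le\eta$. If instead $C_i$ is its own component in $G^{(i-1)}-Z$, I discard its old decomposition and attach a minimum-height decomposition of $G[C_i]$ as children of $v$, keeping $\depth(v)$ unchanged.

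The main obstacle will be verifying the hypotheses of Lemma~\ref{lemma:switch} at each step. Connectedness of $C_i$ and the fact that $v\notin C_i$ are immediate, but the inclusion $N_{G^{(i-1)}-Z}(C_i)\subseteq N_{G^{(i-1)}-Z}[v]$ requires a small argument: any neighbor of $C_i$ in $G$ lies in $N_G(S)\subseteq N_G[v]$, and a neighbor in $S\setminus Z$ would have to lie in the same component of $G[S\setminus Z]$ as $C_i$, hence in $C_i$ itself; thus any external neighbor $u$ satisfies $u\in N_G(v)\setminus S$, for which the edge $\{u,v\}$ is never removed in any $G^{(j)}$, so $u\in N_{G^{(i-1)}-Z}(v)$. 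Finally, the construction in the proof of Lemma~\ref{lemma:switch} only moves $v$ toward the root, so $\depth(v,F^{(i)})\le\depth(v,F^{(i-1)})$ and the invariant propagates; in the second subcase the depth of $v$ is manifestly preserved. After $p$ iterations, $F^{(p)}$ is a treedepth decomposition of $G-Z$ of height at most $\eta$, certifying that $(G,k)$ is a \yes-instance.
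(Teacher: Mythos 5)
Your proof is correct in substance but takes a genuinely different route from the paper's. The paper argues by minimal counterexample: it chooses a minimal subset $Y'$ of the removed edges such that $G-Y'$ is still a \yes-instance, reinstates a single edge $\{u,v\}$, and applies Lemma~\ref{lemma:switch} once (with $H$ the component $S_u$ of $G^*[S]-Z$ containing $u$) to contradict minimality. This is exactly why condition~(\ref{prop:remove:edges:smallint}) is phrased for \emph{every} intermediate graph obtained by deleting $v$--$S$ edges: the paper needs a well-behaved minimum modulator for the intermediate graph $G^*$, not just for $G'$. You instead fix one modulator $Z$ of $G'$ and directly rebuild a height-$\eta$ decomposition of $G-Z$ by iterating Lemma~\ref{lemma:switch} over the components of $G[S\setminus Z]$; this invokes condition~(\ref{prop:remove:edges:smallint}) only for $G'$ itself, so your argument is more constructive and shows the hypothesis can be weakened. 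The iteration is sound because each application of Lemma~\ref{lemma:switch} returns a valid decomposition of the \emph{entire} component of $v$ in $G^{(i-1)}-Z$, so edges added in earlier rounds stay represented even if their endpoints are no longer below $v$; your verifications of $N(C_i)\subseteq N[v]$ and of the height bound are the right ones. Two points deserve more care. First, take $F$ to be a \emph{nice} minimum-height decomposition (as the paper does), or restrict the tree containing $v$ to the component of $v$ before invoking Lemmas~\ref{lem:neighheght} and~\ref{lemma:switch}: both are stated for a decomposition of a connected graph, whereas a tree of an arbitrary decomposition may cover several components. Second, your invariant $\depth(v,F^{(i)})\le\depth(v,F)$ rests on a property of the \emph{proof} of Lemma~\ref{lemma:switch} (that $v$ only moves rootward) which is not part of its statement; it does hold under a strengthened induction there, but you can avoid relying on it altogether by re-deriving $\depth(v,F^{(i)})+\td(G[S])\le\eta$ at every step via Lemma~\ref{lem:neighheght}, since the $\eta$ witnesses $X_{i_1},\ldots,X_{i_\eta}$ remain connected, disjoint from $Z$, and adjacent to $v$ in every $G^{(j)}-Z$.
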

\begin{proof}
If~$(G,k)$ is a \yes-instance then its minor~$(G',k)$ is as well. In the other direction, assume that~$(G',k)$ is a \yes-instance. Assume additionally that~$(G,k)$ is a \no-instance; we shall argue for a contradiction. Consider the set of edges~$Y \subseteq E(G)$ that were removed from~$G$ to obtain~$G' = G - Y$, and observe that all edges in~$Y$ are incident on~$v$. Let~$Y' \subseteq Y$ be a \emph{minimal} set such that~$G^* := G - Y'$ is a \yes-instance. Clearly~$Y'$ is not empty, as~$G$ is a \no-instance. Let~$\{u,v\}$ be an arbitrary edge of~$Y'$, which must have~$v$ as an endpoint. By minimality of~$Y'$ we know that~$G^* + \{u,v\} = G - (Y' \setminus\{u,v\})$ is a \no-instance. We will derive a contradiction by proving that~$G^* + \{u,v\}$ is actually a \yes-instance. By (\ref{prop:remove:edges:smallint}), graph~$G^*$ has a minimum treedepth-$\eta$ modulator~$Z$ containing at most~$\ell$ vertices of~$\X$. Since~$(G^*, k)$ is a \yes-instance, the size of~$Z$ is at most~$k$. Let~$F$ be a minimum-height, nice treedepth decomposition of~$G^* - Z$, of height at most~$\eta$. If~$\{u,v\} \cap Z \neq \emptyset$ then the graphs~$G^* - Z$ and~$(G^* + \{u,v\}) - Z$ are identical, so~$\td(G^* + \{u,v\}) - Z \leq \eta$, proving that~$G^* + \{u,v\}$ is a \yes-instance. In the remainder we consider the case that~$\{u,v\} \cap Z = \emptyset$.

\begin{claim} \label{claim:height:remove:edges}
We have~$\reach(v, F) \geq \td(G[S])$.
\end{claim}
\begin{claimproof}
The proof is similar to that of Claim~\ref{claim:height:remove:simplicial}. Since~$X_1, \ldots, X_{\ell + \eta}$ are pairwise disjoint subsets of~$\X$, the set~$Z$ intersects at most~$\ell$ of them. Hence at least~$\eta$ of them, say~$X_{j_1}, \ldots, X_{j_\eta}$, are disjoint from~$Z$ and therefore induce connected subgraphs of~$G^* - Z$ of treedepth at least~$\td(G[S])$. Since these sets are disjoint from~$S$ and they were adjacent to~$v$ in~$G$, we have not removed their edges to~$v$ when constructing~$G^*$ and therefore~$v \in N_{G^* - Z}(X_{j_t})$ for all~$t \in [\eta]$. Applying Lemma~\ref{lem:neighheght} to the connected component of~$G^* - Z$ containing~$v$ and the tree~$T$ in~$F$ representing that component, we find that if~$\reach(v, T) < d$, then~$\height(T) \geq \eta + 1$, a contradiction. Hence~$\reach(v,F) \geq \reach(v,T) \geq d$.
\end{claimproof}

We use the lower bound on~$\reach(v,F)$ in the following arguments. Let~$S_u$ be the connected component of~$G^*[S] - Z$ that contains~$u$, the other endpoint of~$\{u,v\}$. We first deal with an easy case. 

\begin{claim}
If~$N_{G^* - Z}(S_u) = \emptyset$, then~$(G^* + \{u,v\}, k)$ is a \yes-instance.
\end{claim}
\begin{claimproof}
If~$N_{G^* - Z}(S_u) = \emptyset$, then~$S_u$ forms a connected component of the graph~$G^* - Z$, since this connected set has no neighbors. From the definition of a nice treedepth decomposition, there is a single tree~$T_u$ in~$F$ whose vertices are~$S_u$. Since~$S_u$ is a subgraph of~$G[S]$ we have~$\td(G[S_u]) \leq \td(G[S])$. Now obtain a treedepth decomposition~$F'$ of~$(G^* + \{u,v\}) - Z$ as follows. Remove the tree~$T_u$ from~$F$, let~$T'_u$ be a minimum-height treedepth decomposition of~$G[S_u]$, add this tree to~$F$ and make the root of~$T'_u$ a child of~$v$. Since the height of~$T'_u$ is at most~$\td(G[S])$, while~$\reach(v,F)$ is at least~$\td(G[S])$, it follows that~$F'$ is not higher than~$F$. Since~$u \in S_u$ is a descendant of~$v$ in~$F'$, it follows that~$F'$ represents the edge~$\{u,v\}$ and is therefore a valid treedepth decomposition of~$(G^* + \{u,v\}) - Z$ of height at most~$\eta$. Hence~$Z$ is a treedepth-$\eta$ modulator of~$G' + \{u,v\}$ of size at most~$k$, showing that~$(G^* + \{u,v\}, k)$ is a \yes-instance.
\end{claimproof}

The claim shows that if~$N_{G^* - Z}(S_u) = \emptyset$ then~$(G^* + \{u,v\},k)$ is a \yes-instance, a contradiction to our starting assumption. Hence in the remainder it suffices to deal with the case that~$N_{G^* - Z}(S_u) \neq \emptyset$. Let~$x$ be a vertex in~$N_{G^* - Z}(S_u)$ and let~$y$ be a neighbor of~$x$ in~$S_u$. Since~$x \in N_{G^* - Z}(S_u)$ and~$S_u$ is a connected component of~$G^*[S] - Z$, it follows that~$x \not \in S$. Hence~$x \in N_{G^*}(S)$.

\begin{claim}
Vertices~$u$ and~$v$ belong to the same connected component of~$G^* - Z$.
\end{claim}
\begin{claimproof}
Observe that there exists a path from~$u$ to~$y$ in~$S_u$, since~$S_u$ is a connected component of~$G^* - Z$. If~$x = v$ then combining this path from~$u$ to~$y$ in~$S_u$ with an edge from~$y$ to~$x=v$, we obtain a $uv$ path in~$G^* - Z$, proving the claim. Assume then that~$x \neq v$. Since~$x \in N_{G^*}(S)$ and~$G^* \subseteq G$, it follows that~$x \in N_G(S)$. Since we did not remove edges between~$v$ and vertices outside~$S$ when forming~$G^*$, while~$N_G(S) \subseteq N_G[v]$, it follows that~$x \in N_G(v)$. Now we obtain a path from~$u$ to~$v$ in~$G^* - Z$ as follows: start with the path from~$u$ to~$y$ in~$S_u$, follow the edge to~$x$, and finally follow the edge to~$v$.
\end{claimproof}

Using the claim we can finish the proof. Let~$G_{uv}$ be the connected component of~$G^* - Z$ containing~$u$ and~$v$. Let~$T$ be the tree in~$F$ representing~$G_{uv}$. Since~$N_G(S) \subseteq N_G[v]$, while~$S_u$ is a connected component of~$G^* - Z$, it follows that~$N_{G^* - Z}(S_u) \subseteq N_{G^* - Z}[v]$. We may therefore apply Lemma~\ref{lemma:switch} to the connected graph~$G_{uv}$, the vertex~$v$, and the connected subgraph~$H := S_u$ of~$G_{uv}$, along with the decomposition~$T$ of~$G_{uv}$. The lemma guarantees that there is a decomposition~$T'$ of~$G_{uv}$ such that all vertices of~$S_u$ are in the subtree of~$T'$ rooted at~$v$, and~$\height(T') \leq \max (\height(T), \depth(v,T) + \td(G_{uv}[S_u]) )$. It is easy to see that if we replace~$T$ by~$T'$ in the decomposition~$F$, we obtain a valid treedepth decomposition~$F'$ of~$G^* - Z$. Since~$u$ is in the subtree rooted at~$v$, the decomposition represents the edge~$\{u,v\}$ and is therefore also a decomposition of~$(G^* + \{u,v\}) - Z$. It remains to bound the height of~$F$, for which it suffices to bound the height of~$T'$.

By Claim~\ref{claim:height:remove:edges} we have~$\reach(v,F) \geq \td(G[S])$. Using the definition of reach this implies that~$\height(F) \geq \depth(v,F) + \td(G[S])$. Since~$\td(G_{uv}[S_u]) \leq \td(G[S])$ this implies that~$\height(T') \leq \max(\height(T), \height(F)) \leq \height(F)$ by the expression above. Hence~$T'$ does not increase the height of~$F'$ beyond~$\eta$, showing that~$F'$ is a treedepth decomposition of~$G^* + \{u,v\}$ of height at most~$\eta$. Hence~$Z$ is a solution for~$G^* + \{u,v\}$ of size at most~$k$, proving that~$(G^* + \{u,v\}, k)$ is a \yes-instance. As this contradicts our starting assumption, this concludes the proof of Lemma~\ref{lemma:remove:edges}.
\end{proof}
\section{Uniformly polynomial kernelization for Treedepth-$\eta$ Deletion}
In this section we develop the kernelization for \TreedepthEtaDeletion. As described in the introduction, the two main ingredients are a decomposition algorithm (Section~\ref{section:decompose}) and a reduction algorithm (Section~\ref{section:reduction:algorithm}) that will be applied to each piece of the decomposition. These will be combined into the final kernelization algorithm in Section~\ref{section:final:kernel}.

\subsection{Structural decomposition of the input graph} \label{section:decompose}

We present the algorithm that decomposes an instance~$(G,k)$ into a small number of pieces that each have a constant-size intersection with any minimum solution. The procedure is given as Algorithm 1. In the following lemma we analyze its behavior. Let us point out that the sets~$S$ and~$Y$ computed by the algorithm decompose the graph into $\eta$-nearly clique separated components~$C$ of~$G - (S \cup Y)$: the neighborhood of each component~$C$ in~$S$ is a clique, and its neighborhood in the rest of the graph is contained on one root-to-leaf path in the decomposition~$F$ and therefore has size at most~$\eta$. The intersection size of minimum treedepth-$\eta$ modulators with such components is therefore at most~$2\eta$ by Lemma~\ref{lem:boundingsolnintersection:updated}.

\begin{algorithm}[t]
\caption{Decompose(Graph~$G$, $\eta \in \mathbb{N}$, $k \in \mathbb{N}$)} \label{alg:decompose}
\begin{algorithmic}[1]

\vspace{0.2cm}

\WHILE{$\exists$ distinct~$p,q \in G$ such that~$\{p,q\} \not \in E(G)$ and~$\lambda_G(p,q) \geq k + \eta$}
	\STATE Add the edge~$\{p,q\}$ to~$G$\label{line:decompose:addedge}
\ENDWHILE

\COMMENT{All non-adjacent pairs~$\{p,q\}$ at this point satisfy~$\lambda_G(p,q) < k + \eta$}

\STATE Apply Lemma~\ref{lemma:tdmodapprox} on the current graph to compute an approximate treedepth-$\eta$ modulator~$S$\label{line:decompose:approximate}

\IF{$|S| > 2^\eta \cdot k$}
	\STATE Report that the original input graph does not have a treedepth-$\eta$ modulator of size~$\leq k$
\ENDIF

\STATE Initialize~$Y_0$ and~$Y_1$ as empty vertex sets

\FOREACH{$\{p, q\} \in \binom{S}{2} \setminus E(G)$}
	\STATE Let~$Y_{p,q} \subseteq V(G) \setminus \{p,q\}$ be a minimum $pq$-separator
	\COMMENT{Menger's theorem: $|Y_{p,q}| < k + \eta$}
	\STATE Add~$Y_{p,q}$ to~$Y_0$\label{line:add:separator}
\ENDFOR

\STATE Compute a minimum-height nice treedepth decomposition~$F$ of~$G - S$ using Lemma~\ref{lemma:opt:nice:decomposition}\label{line:decompose:computef}

\FOREACH{$v \in Y_0$}
	\STATE Add the proper ancestors~$\anc_F(v)$ of~$v$ in~$F$ to~$Y_1$ 
	\COMMENT{Since~$F$ has height~$\leq \eta$, $|\anc_F(v)| < \eta$}
\ENDFOR

\STATE Let~$Y$ be~$Y_0 \cup Y_1$
\STATE Define~$\T := \{u \in V(F) \setminus Y \mid u \mbox{ is a root or } \pi(u) \in Y\}$

\WHILE{there is a node~$u_0$ in~$\T$ such that: 
\begin{enumerate}
	\item $G[N_G(F_{u_0})]$ is a clique, and 
	\item for each~$v \in N_G(F_{u_0})$ there are distinct nodes~$u^v_1, \ldots, u^v_{\eta + k} \in \T \setminus \{u\}$ such that:$$\forall i \in [\eta + k] \colon v \in N_G(F_{u_i}) \wedge \td(G[F_{u_i}]) \geq \td(G[F_{u_0}])$$
\end{enumerate}
}\label{line:decompose:while:removecomponents}
	\STATE Remove the vertices of~$F_{u_0}$ from~$G$ and~$F$ and remove~$u$ from~$\T$\label{line:decompose:removecomponents}
	\COMMENT{For~$v \neq v' \in N_G(F_{u_0})$ we may have~$u^v_i = u^{v'}_j$}
\ENDWHILE

\STATE Output the updated graph and the decomposition~$F$, the modulator~$S$, and the separator~$Y$
\end{algorithmic}
\end{algorithm}

\begin{figure}[t]
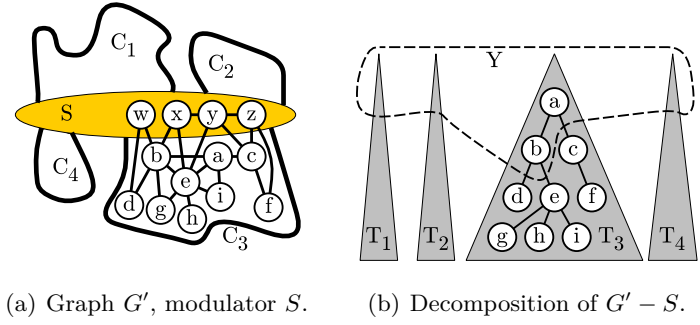

\begin{center}
\subfigure[Graph~$G'$, modulator~$S$.]{\label{fig:decomposition1}
\image{DecompositionL1}
}
\subfigure[Decomposition of~$G'-S$.]{\label{fig:decomposition2}
\image{DecompositionL2}
}
\caption{Schematic illustration of an instance that has been decomposed using Lemma~\ref{lemma:decomposition:algorithm}. \ref{fig:decomposition1} The resulting graph~$G'$ and the suboptimal treedepth-$4$ modulator~$S$ in~$G'$ used when decomposing. Graph~$G'-S$ has four connected components, of which the third is drawn in detail. \ref{fig:decomposition2} Illustration of the treedepth-$4$ decomposition~$F'$ of~$G'-S$. The forest~$F'$ contains four decomposition trees~$T_1, \ldots, T_4$, one for each component of~$G'-S$. By the properties of a treedepth decomposition, for any vertex~$v \in V(G') \setminus S$, each neighbor~$u \in N_{G'}(v)$ is an ancestor of~$v$ in~$F'$, descendant of~$v$ in~$F'$, or contained in~$S$. Lemma~\ref{lemma:decomposition:algorithm} ensures that for each connected component~$C$ of~$G - (S \cup Y)$, the set~$N_G(C) \cap S$ is a clique. This is illustrated for the connected component consisting of~$\{e,g,h,i\}$, whose neighbors among~$S$ are~$\{x,y\}$, a 2-clique. As the set~$Y$ is closed under taking ancestors, it consists of the top parts of decomposition trees in~$F'$.}
\end{center}
\end{figure}

\begin{lemma} \label{lemma:decomposition:algorithm}
Let $(G,k)$ be an instance of \probtdeta. Then in polynomial time we can either conclude that $(G,k)$ is a \noinstance, or find a graph~$G'$ with~$V(G') \subseteq V(G)$, a treedepth-$\eta$ modulator~$S$ of~$G'$, a treedepth decomposition~$F'$ of~$G' - S$ of height at most~$\eta$, and a set~$Y \subseteq V(G') \setminus S$ satisfying the following properties.
\begin{enumerate}[(\arabic*)]
	\item\label{dec:equivalent}$(G,k)$ is equivalent to~$(G',k)$.
	\item\label{dec:small:s} $|S| \leq 2^\eta \cdot k$.
	\item\label{dec:small:y} $|Y| \leq \eta (2^\eta \cdot k)^2 \cdot (k + \eta)$.
	\item\label{dec:connected:subgraphs} For every~$u \in V(F') \setminus Y$ the graph~$G'[F'_u]$ is connected.
	\item\label{dec:component:structure} Let~$\T := \{u \in F' - Y \mid u \mbox{ is a root or } \pi(u) \in Y\}$. The vertex sets of the connected components of~$G' - (S \cup Y)$ are exactly the vertex sets of the subtrees of~$F'$ rooted at members of~$\T$.
	\item\label{dec:component:clique} For every connected component~$C$ of~$G' - (S \cup Y)$, the set~$N_{G'}(C) \cap S$ is a clique.
	\item\label{dec:component:number} The number of connected components of~$G' - (S \cup Y)$ is at most $$(|S| + |Y| + |S|^2 + |S| \cdot |Y| + \eta \cdot |Y|) \cdot (\eta + k).$$
\end{enumerate}
\end{lemma}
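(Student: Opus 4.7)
The plan is to walk through Algorithm~\ref{alg:decompose} and verify the properties in the order (1)--(6), saving the component-count bound (7) for last as the main obstacle. Polynomial running time is immediate since every ingredient (max-flow, Menger separators, Lemmas~\ref{lemma:opt:nice:decomposition} and~\ref{lemma:tdmodapprox}) runs in polynomial time for fixed $\eta$.

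For equivalence (1) and the size bounds (2)--(3): the edge-saturation loop adds an edge $\{p,q\}$ only when $\lambda_G(p,q)\geq k+\eta$. I would invoke Lemma~\ref{lemma:add:edge} with $X:=V(G)$ and $\ell:=k$; its hypothesis holds vacuously whenever a modulator of size $\leq k$ exists, and otherwise both instances are \noinstance s, so equivalence is preserved either way. After saturation every non-edge has $\lambda<k+\eta$, hence Menger's theorem gives $|Y_{p,q}|<k+\eta$ and so $|Y_0|\leq\binom{|S|}{2}(k+\eta)$; closing $Y_0$ under ancestors in the height-$\eta$ forest $F$ multiplies by at most $\eta-1$, yielding property~(3). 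Property~(2) is Lemma~\ref{lemma:tdmodapprox}: if $|S|>2^\eta k$ the true optimum exceeds $k$ and the algorithm's rejection is correct.

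For properties (4)--(6): by construction every ancestor of a $Y_0$-vertex lies in $Y_1$, so $Y$ is ancestor-closed in $F$. Within each tree of $F$ either the root lies in $Y$ (and $\T$ collects the children of $Y$-vertices) or the whole tree avoids $Y$ (and $\T$ collects only the root); in both cases the subtrees $\{F_u : u\in\T\}$ are pairwise disjoint, avoid $Y$, and their vertex sets partition $V(F)\setminus Y$. Combined with the niceness of $F$ from Lemma~\ref{lemma:opt:nice:decomposition} this gives~(4). For the non-trivial half of (5) I would argue that any edge of $G-(S\cup Y)$ between two $\T$-subtrees would force, via the treedepth property and the ancestor-closure of $Y$, an ancestor of a $Y$-vertex to escape $Y$, a contradiction. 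Property~(6) follows from the separator construction: two non-adjacent $p,q\in N_{G'}(C)\cap S$ would yield a $pq$-path in $G-Y$ routed through $C$, contradicting $Y_{p,q}\subseteq Y$. The subtree-removal loop preserves (4)--(6) because it only deletes entire subtrees of $F$, and each removal invokes Lemma~\ref{lemma:remove:simplicial:component} with $\ell=k$: the loop conditions translate directly into its hypotheses (clique neighborhood; $k+\eta$ disjoint connected witnesses of higher-or-equal treedepth touching each boundary vertex), while $\td(G[F_{u_0}])\leq\eta$ holds because $F$ has height $\leq\eta$, and the ``$\leq k$ vertices of $\X$'' requirement is trivial when a modulator of size $\leq k$ exists (and otherwise both instances are no-instances).

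The main obstacle is the count in~(7). At termination every $u_0\in\T$ violates one of the loop conditions, so either (Case~A) $G[N_G(F_{u_0})]$ is not a clique, or (Case~B) some $v\in N_G(F_{u_0})$ has fewer than $k+\eta$ partner subtrees of treedepth $\geq\td(G[F_{u_0}])$. Since $N_G(F_{u_0})\subseteq S\cup\anc_F(u_0)$ and $|\anc_F(u_0)|\leq\eta-1$, while the separator construction forbids non-edges inside $S$, any Case-A witness non-edge lies in $S\times Y$ (at most $|S|\cdot|Y|$ choices) or between two $Y$-vertices on the root-to-$u_0$ path (at most $\eta\cdot|Y|$ choices, since each $Y$-vertex has $<\eta$ ancestors). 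For a fixed non-edge $\{v_1,v_2\}$, distinct $\T$-components with both vertices in their neighborhood yield internally vertex-disjoint $v_1v_2$-paths through the connected $F_u$'s, so the saturation invariant $\lambda_G(v_1,v_2)<k+\eta$ caps their number by $k+\eta$. For Case~B, pick for each $v$ the case-B component $u_0$ minimizing $\td(G[F_{u_0}])=:t^\star$; every other case-B component with witness $v$ lies in $\{u\in\T:v\in N_G(F_u),\,\td(G[F_u])\geq t^\star\}$, and the case-B condition applied to the minimizer forces this set to have size $\leq k+\eta$. Summing over the at most $|S|+|Y|$ choices of $v$ and absorbing a loose $|S|^2$ slack term yields the stated bound.
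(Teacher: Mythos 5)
Your proposal is correct and follows essentially the same route as the paper's proof: the same line-by-line walk through the Decompose algorithm, the same invocations of Lemmas~\ref{lemma:add:edge} and~\ref{lemma:remove:simplicial:component} with $X=V(G)$ and $\ell=k$, the same ancestor-closure argument for properties (4)--(6), and the same two charging arguments for (7) (your Case~A/Case~B split is exactly the paper's non-simplicial/simplicial split, with the disjoint-path saturation bound and the minimum-treedepth-representative trick respectively). The only cosmetic difference is that you discard the $|S|^2$ term as slack via property~(6), whereas the paper simply counts pairs inside $S$ as well.
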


\begin{proof}
We prove that the Decompose algorithm has the desired properties. Let us go through the algorithm line by line to analyze its effect. Along the way we will establish that the requirements from the lemma are satisfied. Consider an execution of Decompose$(G, \eta, k)$ and let us denote by~$G^0$ the state of the graph before the execution starts.

Let us first consider the effect of the edges that are added in Line~\ref{line:decompose:addedge}.

\begin{claim} \label{claim:decompose:equiv:addedges}
If the algorithm transforms~$G$ into~$G + \{p,q\}$ by adding an edge in Line~\ref{line:decompose:addedge}, then the \TreedepthEtaDeletion instance~$(G,k)$ is equivalent to instance~$(G + \{p,q\}, k)$.
\end{claim}
\begin{claimproof}
If~$(G + \{p,q\}, k)$ is a \yes-instance then clearly its minor~$(G,k)$ is as well. So assume that~$(G, k)$ is a \yes-instance, implying that minimum treedepth-$\eta$ modulators in~$G$ have size at most~$k$. Define~$\ell := k$ and let~$X := V(G)$. Since the algorithm ensures that~$\lambda_G(p,q) \geq k + \eta$, the graph~$G[X \cup \{p,q\}] = G$ contains at least~$k + \eta$ internally vertex-disjoint $pq$-paths. Since a minimum solution in~$G$ contains at most~$k$ vertices, trivially any minimum treedepth-$\eta$ modulator intersects~$X$ in at most~$k = \ell$ vertices. But then these choices of~$X$ and~$\ell$ satisfy the conditions of Lemma~\ref{lemma:add:edge}, which proves that~$(G,k)$ is equivalent to~$(G+\{p,q\}, k)$.
\end{claimproof}

The claim shows that every edge addition preserves the answer to the instance. Hence for the graph~$G$ obtained after finishing the first \textbf{while}-loop, instance~$(G,k)$ is equivalent to~$(G^0,k)$. Once the \textbf{while}-loop terminates, for each non-adjacent vertex pair~$\{p,q\}$ that remains, the number of internally disjoint $pq$-paths must be less than~$k + \eta$. Let us consider the set~$S$ that is computed, and denote by~$G^1$ the status of the graph in Line~\ref{line:decompose:approximate}. By Lemma~\ref{lemma:tdmodapprox} the treedepth of~$G^1 - S$ is at most~$\eta$. If~$|S| > 2^\eta \cdot k$, since Lemma~\ref{lemma:tdmodapprox} guarantees a factor~$2^\eta$-approximation, the minimum solution size for~$(G^1,k)$ exceeds~$k$. By the equivalence of the instance to~$(G^0,k)$ the algorithm is therefore correct if it outputs that the original input~$G^0$ does not have a size-$k$ solution. If a set~$S$ is returned it must therefore satisfy~\ref{dec:small:s}. We continue by analyzing the computed set~$Y$.

\begin{claim}
$|Y_0| \leq |S|^2 \cdot (k + \eta) \leq (2^\eta \cdot k)^2 \cdot (k + \eta)$.
\end{claim}
\begin{claimproof}
Since~$|S| \leq 2^\eta \cdot k$, there are at most~$(2^\eta \cdot k)^2$ pairs of nonadjacent vertices among~$S$. For every nonadjacent pair of vertices~$\{p,q\}$, the termination condition of the \textbf{while}-loop ensures that~$\lambda_{G^1}(p,q) \leq \eta + k$. By Menger's Theorem the value~$\lambda_{G^1}(p,q)$ equals the minimum size of a vertex $pq$-separator that avoids~$p$ and~$q$; such a set can be computed efficiently (cf.~\cite[Chapter 9]{Schrijver03}). Hence~$|Y_{pq}| < k + \eta$ for all considered pairs. The claim follows.
\end{claimproof}

\begin{claim}
$|Y| \leq \eta \cdot (2^\eta \cdot k)^2 \cdot (k + \eta)$.
\end{claim}
\begin{claimproof}
Since~$Y_1$ contains the proper ancestors of every member of~$Y_0$, while every vertex has at most~$\eta - 1$ proper ancestors in a treedepth decomposition of height at most~$\eta$, it follows that~$|Y_1| \leq (\eta-1) |Y_0|$. Using the previous claim, we find that~$Y = Y_0 \cup Y_1$ has size at most~$\eta \cdot (2^\eta \cdot k)^2 \cdot (k + \eta)$.
\end{claimproof}

The claim shows that requirement~\ref{dec:small:y} is satisfied. Before we analyze last \textbf{while}-loop, we consider the structure of the computed set~$\T$ of \emph{topmost} vertices in the forest that do not belong to~$Y$. 

\begin{claim} \label{claim:distinct:tops:distinct:subtrees}
If~$u, u'$ are distinct vertices in~$\T$ then the subtrees~$F_u, F_{u'}$ are disjoint.
\end{claim}
\begin{claimproof}
Assume for a contradiction that the claim is false. By symmetry, we may assume that~$u'$ is in the subtree of~$F$ rooted at~$u$. Then~$u'$ is not the root of a tree. By definition of the set~$\T$ this implies that~$\pi(u')$ is a vertex in~$Y$. But since we added the ancestor of every vertex in~$Y_0$ to~$Y_1$, this implies that the ancestor~$u$ of~$u'$ must be contained in~$Y$. By definition of~$\T$ this contradicts that~$u \in \T$.
\end{claimproof}

We observe that, by updating the set~$\T$ in Line~\ref{line:decompose:removecomponents}, the algorithm ensures that at any point of its execution of the last \textbf{while}-loop, even though the graph might have changed after the point that~$\T$ was defined and computed, the set~$\T$ still satisfies that definition.

Using Claim~\ref{claim:distinct:tops:distinct:subtrees} we analyze what happens in the \textbf{while}-loop of Line~\ref{line:decompose:while:removecomponents}. In Line~\ref{line:decompose:removecomponents} we consider the vertices contained in the subtree of~$F$ rooted at a node~$u_0$ that satisfies the conditions of the \textbf{while}-loop, and we remove them from the graph.

\begin{claim} \label{claim:decompose:equiv:removesubtrees}
If the algorithm transforms~$G$ into~$G - F_{u_0}$ in Line~\ref{line:decompose:removecomponents}, then the \TreedepthEtaDeletion instance~$(G,k)$ is equivalent to instance~$(G - F_{u_0}, k)$.
\end{claim}
\begin{claimproof}
Define~$A$ as the vertices in the subtree of~$F$ rooted at~$u_0$. If~$(G,k)$ is a \yes-instance then its minor~$(G - A, k)$ is as well. For the reverse direction, suppose that~$(G - A, k)$ is a \yes-instance. We aim at applying Lemma~\ref{lemma:remove:simplicial:component}. By the preconditions to the loop, we know that~$N_G(A)$ is a clique in~$G$. Define~$\ell := k$. Then, clearly, for any~$\X \subseteq V(G)$ the graph~$G - A$ has a minimum treedepth-$\eta$ modulator intersecting~$\X$ in at most~$\ell$ vertices, showing that the third requirement of Lemma~\ref{lemma:remove:simplicial:component} is satisfied. Let us verify the first two requirements are satisfied as well. For each~$v \in N_G(A)$ and~$i \in [\eta + k]$, define~$X^v_i$ as the vertices in the subtree of~$F$ rooted at the node~$u^v_i$ identified in the algorithm. Then the test in the algorithm ensures the first condition of Lemma~\ref{lemma:remove:simplicial:component} is satisfied. The fact that, for each~$v \in N_G(A)$, the vertex sets~$X^v_i$ are pairwise disjoint and disjoint from~$A$, follows from Claim~\ref{claim:distinct:tops:distinct:subtrees}. It remains to check that all sets~$X^v_i$ induce connected subgraphs of~$G$. If the treedepth decomposition~$F$ is still nice when the statement is executed, then this follows from the definition of a nice treedepth decomposition. While earlier removals may have caused~$F$ to no longer be a nice decomposition, since the sets~$X^v_i$ correspond to subtrees of the nice treedepth forest originally computed in Line~\ref{line:decompose:computef}, and other iterations of the loop do not affect the graphs they induce, all sets~$X^v_i$ indeed induce connected subgraphs of~$G$ at the time the statement is executed. Hence all requirements are met and Lemma~\ref{lemma:remove:simplicial:component} implies the claim.
\end{claimproof}

The combination of Claims~\ref{claim:decompose:equiv:addedges} and~\ref{claim:decompose:equiv:removesubtrees} proves that for the state~$G'$ of the graph upon termination, instance~$(G^0,k)$ is equivalent to~$(G',k)$. Hence~\ref{dec:equivalent} holds.

\begin{claim}
The \textbf{while}-loop of Line~\ref{line:decompose:while:removecomponents} can be evaluated in polynomial time for every fixed~$\eta$.
\end{claim}
\begin{claimproof}
As each iteration removes a vertex, the number of iterations is bounded by the order of the input graph. Let us prove that each iteration can be done in polynomial time. To test the loop condition, it suffices to do the following. For each~$u \in \T$ we consider the subtree~$F_u$ rooted at~$F$ and compute a minimum-height treedepth decomposition of~$G[F_u]$. Since~$F_u \subseteq V(G) \setminus S$ the treedepth is at most~$\eta$, this can be done in polynomial time for fixed~$\eta$ by Lemma~\ref{lemma:opt:nice:decomposition}. For each possible choice of~$u_0$ we can then test whether the conditions hold for~$u_0$ by checking whether, for each~$v \in N_G(F_u)$, there are sufficiently many components also adjacent to~$v$ whose treedepth is at least that of~$F_u$.
\end{claimproof}

Let~$G'$ and~$F'$ denote the graph and treedepth decomposition upon termination. The following claim proves~\ref{dec:connected:subgraphs}.

\begin{claim} \label{claim:subtrees:induce:connected}
For every~$u \in V(F') \setminus Y$, the graph~$G'[F'_u]$ is connected.
\end{claim}
\begin{claimproof}
Let~$G^2$ and~$F^2$ denote the status of the graph and decomposition after Line~\ref{line:decompose:computef}. By definition of a nice treedepth decomposition, for all~$u \in V(F^2) \setminus Y$, the graph~$G^2[F^2_u]$ is connected. To see that this still holds once the \textbf{while}-loop of Line~\ref{line:while:remove:subtree} has removed parts of the decomposition and the graph, it suffices to observe the following. For every node~$u \in V(F') \setminus Y$ that has survived, no removals were made in the subtree~$F^2_u$: if any removal would have been made, then since we remove entire subtrees rooted at topmost vertices in~$\T$, vertex~$u$ itself would have been removed.
\end{claimproof}

Let~$\T' := \{u \in F' - Y \mid u \mbox{ is a root or } \pi(u) \in Y\}$ as in the lemma statement. We now establish~\ref{dec:component:structure}.

\begin{claim} \label{claim:components:bijection:t}
The vertex sets of the connected components of~$G' - (S \cup Y)$ are exactly the vertex sets of the subtrees of~$F'$ rooted at members of~$\T'$.
\end{claim}
\begin{claimproof}
In one direction, let~$u \in \T'$ and consider the subtree~$F_u$ of~$F$ rooted at~$u$. No descendant of~$u$ is contained in~$Y$, otherwise~$u$ itself would have been included in~$Y_1$ and therefore in~$Y$. By Claim~\ref{claim:subtrees:induce:connected} the graph~$G'[F'_u]$ is connected. Assume that the set~$F'_u$ has a neighbor~$x$ in~$G$ that does not belong to~$S$. Since~$F'$ is a valid treedepth decomposition of~$G' - S$, vertex~$x$ is an ancestor or descendant of a member of~$F'_u$. Since all descendants of~$F'_u$ are contained in~$F'_u$, it follows that~$x$ is a proper ancestor of~$u$. But by definition of~$\T'$, either~$u$ is a root or~$\pi(u) \in Y$, implying that all proper ancestors of~$u$ are in~$Y$. So all vertices in~$N_{G'}(F'_u)$ belong to~$S$ or to~$Y$, proving that each member of~$\T'$ yields a connected component of~$G' - (S \cup Y)$.

For the reverse direction, consider some connected component~$C$ of~$G' - (S \cup Y)$. By Observation~\ref{observation:connectedsubgraph:ancestorpath}, all vertices of~$C$ belong to one tree~$T'$ of~$F'$. Consider the least common ancestor~$u$ of the vertices in~$C$ in tree~$T'$. If~$u \in \T'$ then we are done, since by Claim~\ref{claim:subtrees:induce:connected} the graph~$G'[T'_u]$ is connected and is disjoint from~$S$ and~$Y$; therefore~$C$ must equal~$G'[T'_u]$. Assume for a contradiction that~$u \not \in \T'$. 

If~$u \not \in Y$, then since~$u \not \in \T'$ it follows that~$u$ is not the root of~$T$ and the parent of~$u$ does not belong to~$Y$. But by Claim~\ref{claim:subtrees:induce:connected} the graph~$G'[T'_{\pi(u)}]$ is connected. It is disjoint from~$S$ and disjoint from~$Y$, since all ancestors of~$Y$ are in~$Y$. Hence~$C$ is not a connected component of~$G' - (S \cup Y)$ because there is a connected supergraph of~$C$ in~$G' - (S \cup Y)$.

Finally, consider the case that~$u \in Y$. Since~$u \not \in C$ is the least common ancestor of vertices of~$C$, at least two different children~$c_1,c_2$ of~$u$ contain members~$x_1,x_2$ of~$C$. But by Observation~\ref{observation:connectedsubgraph:ancestorpath}, a common ancestor of~$x_1$ and~$x_2$ is contained in~$H$. But then this must be an ancestor of~$u$. However, all ancestors of~$u$ (including~$u$ itself) are contained in~$Y$, proving that~$C$ intersects~$Y$ and is not a connected component of~$G' - (S \cup Y)$.
\end{claimproof}

The following claim proves~\ref{dec:component:clique}.

\begin{claim}
For every connected component~$C$ of~$G' - (S \cup Y)$, the set~$N_{G'}(C) \cap S$ is a clique in~$G'$.
\end{claim}
\begin{claimproof}
Let~$C$ be a connected component of~$G' - (S \cup Y)$ and assume for a contradiction that~$N_{G'}(C) \cap S$ is not a clique. Let~$\{p,q\} \in N_{G'}(C) \cap S$ be non-adjacent in~$G'$. Then we added a $pq$-separator~$Y_{p,q}$ disjoint from~$p$ and~$q$ to the set~$Y_0$, and it was even a separator in the supergraph of~$G'$ that we considered during Line~\ref{line:add:separator}. Consequently, no connected component of~$G' - Y_0$ can be simultaneously adjacent to both~$p$ and~$q$. Since~$Y \supseteq Y_{p,q}$, it follows that no connected component of~$G' - Y$ can be adjacent to both~$p$ and~$q$; a contradiction.
\end{claimproof}

Finally, we bound the number of connected components of~$G' - (S \cup Y)$ to establish~\ref{dec:component:number}. By Claim~\ref{claim:components:bijection:t} it suffices to bound~$|\T'|$. We partition~$\T'$ into two sets. Let~$\T'_S$ contain the nodes~$u \in \T'$ such that~$N_{G'}(F'_u)$ is a clique; we call these the \emph{simplicial components}. Let~$\T'_N$ be the remaining nodes in~$\T'$, corresponding to \emph{non-simplicial components}.

\begin{claim} \label{claim:simplicial:comp}
$|\T'_S| \leq (|S| + |Y|) (\eta + k)$.
\end{claim}
\begin{claimproof}
Consider a node~$u_0 \in \T'_S$. Since~$N_{G'}(F'_{u_0})$ is a clique, it satisfies the first requirement of the \textbf{while}-loop in Line~\ref{line:decompose:while:removecomponents}. Hence if it was not removed by the algorithm, the second requirement cannot be met. So there is some~$v \in N_{G'}(F_{u_0})$ for which there are no~$\eta + k$ other nodes~$u_i$ in~$\T$ with~$v \in N_{G'}(F_{u_i})$ and~$\td(G'[F'_{u_i}]) \geq \td(G'[F'_{u_0}])$. Charge~$u_0$ to such a neighbor~$v$.

Since~$F'$ is a treedepth decomposition of~$G' - S$, all neighbors of~$F'_{u_0}$ in~$G$ are either contained in~$S$ or are proper ancestors of~$u_0$. Hence all neighbors of~$F'_{u_0}$ are contained in~$S \cup Y$. Now assume for a contradiction that we charge more than~$\eta + k$ nodes of~$\T'_S$ to the same member~$x$ of~$S \cup Y$. Letting~$u_0$ be a node charged to~$x$ that minimizes~$\td(G'[F'_{u_0}])$, we now find that the other~$\eta + k$ nodes charged to~$x$ also have subtrees adjacent to~$x$ that have treedepth at least that of~$G'[F'_{u_0}]$; but then~$u_0$ cannot be charged to~$x$. It follows that we charge at most~$k + \eta$ times to each member of~$S \cup Y$, proving the size bound.
\end{claimproof}

Finally, we bound the number of non-simplicial components.

\begin{claim} \label{claim:nonsimplicial:comp}
$|\T'_N| \leq (|S|^2 + |S| \cdot |Y| + \eta \cdot |Y|)\cdot (\eta + k)$.
\end{claim}
\begin{claimproof}
Consider some~$u \in \T'_N$. By definition of the non-simplicial nodes, there is a pair of vertices~$\{p,q\} \subseteq N_{G'}(F'_u)$ that is not adjacent in~$G$. As observed above, all vertices in~$N_{G'}(F'_u)$ are members of~$S$ or proper ancestors of~$u$ in~$F'$, and are therefore contained in~$Y$. Note that the connected subgraph~$F'_u$ contains the interior vertices of a path between~$p$ and~$q$. By Claim~\ref{claim:distinct:tops:distinct:subtrees}, these paths are pairwise internally vertex-disjoint for different members of~$\T'_N$. Charge every~$u \in \T'_N$ to a pair of non-adjacent vertices in~$N_{G'}(F'_u)$. Since the \textbf{while}-loop of Line~\ref{line:while:add:edge} adds edges between pairs that are connected by~$\eta + k$ pairwise internally vertex-disjoint paths, we can charge at most~$\eta + k$ times to each pair. To prove the claim, it suffices to bound the number of possible pairs. Now observe that every pair~$\{p,q\}$ to which we charge consists of vertices of~$S \cup Y$. The number of pairs where both ends are from~$S$, or exactly one end is from~$S$, is clearly at most~$|S|^2$ and~$|S| \cdot |Y|$, respectively. Finally, observe that for pairs where both members are from~$Y$, these members are in ancestor-descendant relation in~$F'$ since both endpoints are ancestors of the nodes~$u$ that charge to them. Since the height of~$F'$ is at most~$\eta$, each node in~$F'$ has less than~$\eta$ ancestors. If we thus count, for each node in~$Y$, the number of pairs where the other node is higher in the forest, we count at most~$\eta$ incident pairs per vertex of~$Y$, for a total of at most~$\eta \cdot |Y|$. Hence the total number of pairs to which we charge is at most~$|S|^2 + |S| \cdot |Y| + \eta \cdot |Y|$. As we charge at most~$\eta + k$ nodes of~$\T'_N$ to each pair, the claim follows.
\end{claimproof}

Since~$|\T| = |\T_N| + |\T_S|$, by combining Claims~\ref{claim:simplicial:comp},~\ref{claim:nonsimplicial:comp} and~\ref{claim:components:bijection:t} we establish~\ref{dec:component:number}. This concludes the proof of Lemma~\ref{lemma:decomposition:algorithm}.
\end{proof}
\subsection{Reduction algorithm} \label{section:reduction:algorithm}

\begin{algorithm}[t]
\caption{Reduce(Graph~$G$, treedepth-$\eta$ modulator~$S$, treedepth-$\eta$ decomposition~$F$ of~$G - S$, node~$v$ of~$F$, $k \in \mathbb{N}$)} \label{alg:reduce}
\begin{algorithmic}[1]

\vspace{0.2cm}

\STATE Let~$T$ be the tree in~$F$ containing~$v$

\WHILE{$\exists$ distinct~$p, q \in N_G(T_v) \cup \{v\}$ with~$\{p,q\} \not \in E(G)$ and~$\lambda_{G[\{p,q\} \cup T_v]}(p,q) \geq 3\eta$}\label{line:while:add:edge}
	\STATE Add the edge~$\{p,q\}$ to~$G$\label{line:add:edge}
\ENDWHILE

\WHILE{$\exists$ distinct children~$c_0, c_1, \ldots, c_{3\eta}$ of~$v$ such that~$c_0$ has a neighbor~$s \in S$,~$N_G(T_{c_0}) \subseteq N_G[s]$, and for~$i \in [3\eta]$ we have~$\td(G[T_{c_i}]) \geq \td(G[T_{c_0}])$ and~$s \in N_G(T_{c_i})$}\label{line:while:remove:edge}
	\STATE Remove the edges between~$s$ and members of~$T_{c_0}$ from graph~$G$\label{line:remove:edges}
\ENDWHILE

\WHILE{$\exists$ a child~$c^*$ of~$v$ such that~$N_G(T_{c^*})$ is a clique, and for every~$w \in N_G(T_{c^*})$ there are~$3\eta$ distinct children~$c^w_1, \ldots, c^w_{3\eta} \neq c^*$ of~$v$ such that for all~$i \in [3\eta]$ we have~$\td(G[T_{c^w_i}]) \geq \td(G[T_{c^*}])$ and~$w \in N_G(T_{c^w_i})$}\label{line:while:remove:subtree}
	\STATE Remove the vertices in~$T_{c^*}$ from~$F$ and from~$G$\label{line:remove:subtree}
\ENDWHILE

\FOREACH{remaining child~$c$ of~$v$ in~$T$}\label{line:recurse}
	\STATE Reduce($G$, $S$, $F$, $c$, $k$)
\ENDFOR
\end{algorithmic}
\end{algorithm}

\noindent The reduction algorithm that will be applied to each piece of the decomposition is given as Algorithm~\ref{alg:reduce}. To prove that it works correctly, we will prove that it maintains a set of concrete invariants.

\begin{definition}[Invariants] Consider an execution of Reduce($G, S, F, v, k)$. Let~$T$ be the tree of~$F$ containing~$v$ and let~$G^0,F^0,T^0$ be the status of~$G,F$ and~$T$ at the start of the iteration. We define the following invariants of Algorithm~\ref{alg:reduce}.
\begin{enumerate}[(I)]
	\item $F$ is a treedepth decomposition of~$G - S$ of height at most~$\eta$.\label{invar:treedec}
	\item For every vertex~$u \in T_v$ the graph~$G[T_u]$ is connected.\label{invar:nice}
	\item The set~$N_G(T_v) \cap S = N_G(T_v) \setminus \anc_T(v)$ is a clique in~$G$.\label{invar:almostclique}
	\item The graph~$G$ can be obtained from~$G^0$ by
	\begin{itemize}
		\item adding edges whose endpoints belong to~$N_{G^0}(T^0_v) \cup \{v\}$, 
		\item removing edges between~$N_{G^0}(T^0_v) \cap S$ and proper descendants of~$v$ in~$T^0$, 
		\item removing vertex sets of subtrees rooted at children of~$v$.
	\end{itemize}
	\item $F$ is a rooted subforest of~$F^0$.
	\item For every~$u \in T_v$ we have~$N_G(T_u) \cap S \subseteq N_{G^0}(T^0_u)$.\label{invar:neighbors:t}
	\item\label{invar:equivalent} The instance~$(G,k)$ is equivalent to the instance~$(G^0, k)$.
\end{enumerate}
\end{definition}

\begin{lemma} \label{lemma:invariants}
The Reduce algorithm preserves its invariants.
\end{lemma}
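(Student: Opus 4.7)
The plan is to check that each of the three basic modifications performed by Reduce—edge addition, edge removal, and subtree removal—preserves every invariant, and that the invariants carry over to each recursive call. When Reduce is first entered, invariants (\ref{invar:treedec})–(\ref{invar:almostclique}) hold by the precondition of the call while (\ref{invar:nice})–(\ref{invar:equivalent}) are vacuous since $G = G^0$ and $F = F^0$. The main observation, used throughout, is that invariant (\ref{invar:almostclique}) says $N_G(T_v) \setminus \anc_T(v)$ is a clique; with $Q := \anc_T(v)$ of size at most $\eta - 1$ this makes $T_v$ an $(\eta-1)$-nearly clique separated set of treedepth at most $\eta$, so Lemma~\ref{lem:boundingsolnintersection:updated} yields $|Z \cap T_v| \le 2\eta$ for every minimum treedepth-$\eta$ modulator $Z$. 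The three algorithmic thresholds of $3\eta$ are chosen precisely to equal $\ell + \eta$ with $\ell := 2\eta$, exactly matching the three safe-transformation lemmas.

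For the edge-addition step at Line~\ref{line:add:edge}, the endpoints $p,q \in N_G(T_v) \cup \{v\}$ lie either on the root-to-$v$ path of $T$ (so they are in ancestor-descendant relation in $F$) or inside $S$, so the new edge does not violate the fact that $F$ is a decomposition of $G - S$; this takes care of (\ref{invar:treedec})–(\ref{invar:neighbors:t}), with (\ref{invar:almostclique}) only strengthening and (\ref{invar:nice}) preserved because adding edges cannot break connectivity. Invariant (\ref{invar:equivalent}) is obtained by applying Lemma~\ref{lemma:add:edge} with $X := T_v$ and $\ell := 2\eta$: the bound on $|Z \cap X|$ comes from the main observation above, and the loop guard $\lambda_{G[\{p,q\}\cup T_v]}(p,q) \ge 3\eta = \ell + \eta$ supplies the remaining hypothesis.

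For the edge-removal step at Line~\ref{line:remove:edges} we apply Lemma~\ref{lemma:remove:edges} with $S_{\mathrm{lem}} := T_{c_0}$, $v_{\mathrm{lem}} := s$, $X_i := T_{c_i}$ for $i \in [3\eta]$, and $\ell := 2\eta$. Disjointness of the $T_{c_i}$'s and their connectedness (invariant (\ref{invar:nice})), the treedepth lower bounds and the adjacencies to $s$ are guaranteed by the loop condition, and $N_G(T_{c_0}) \subseteq N_G[s]$ is given. The third hypothesis of Lemma~\ref{lemma:remove:edges} holds because after removing any subset of $s$-to-$T_{c_0}$ edges the set $T_v$ remains $(\eta-1)$-nearly clique separated (the clique $N(T_v)\cap S$ is untouched as the removed edges sit between $S$ and $V(G)\setminus S$), so Lemma~\ref{lem:boundingsolnintersection:updated} still bounds the intersection by $2\eta$. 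Invariants (\ref{invar:treedec})–(\ref{invar:neighbors:t}) are immediate: no edge of any $G[T_u]$ with $u\in T_v$ is removed since one endpoint lies in $S$, invariant (\ref{invar:almostclique}) only shrinks, and the removed edges fit the template in (\ref{invar:treedec}) because invariant (\ref{invar:neighbors:t}) certifies $s \in N_{G^0}(T^0_v)$.

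For the subtree-removal step at Line~\ref{line:remove:subtree} we apply Lemma~\ref{lemma:remove:simplicial:component} with $S_{\mathrm{lem}} := T_{c^*}$, $\ell := 2\eta$, and $X^w_i := T_{c^w_i}$ for each $w \in N_G(T_{c^*})$, in the same way. The only genuine subtlety is invariant (\ref{invar:nice}) at $u = v$: by the treedepth property any edge leaving $T_{c^*}$ within $T_v$ must go to $v$, and by invariant (\ref{invar:nice}) applied to $u = v$ before the removal, each remaining child's subtree already contained a vertex adjacent to $v$, so $G[T_v \setminus T_{c^*}]$ remains connected through $v$; for $u$ in a surviving subtree $T_u$ is unchanged. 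The remaining invariants shrink monotonically, so they are preserved. Finally, when Reduce recurses at a remaining child $c$ of $v$, invariants (\ref{invar:treedec})–(\ref{invar:almostclique}) for the inner call follow from the outer ones restricted to $T_c \subseteq T_v$ (using $N_G(T_c)\cap S \subseteq N_G(T_v)\cap S$ for (\ref{invar:almostclique})), and (\ref{invar:nice})–(\ref{invar:equivalent}) of the inner call are vacuous at its start, completing the induction. The main obstacle throughout is keeping track of the interaction between the three operations and the nearly-clique-separated certificate, which is handled uniformly by invariant (\ref{invar:almostclique}) together with Lemma~\ref{lem:boundingsolnintersection:updated}.
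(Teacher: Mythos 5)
Your proof is correct and follows essentially the same route as the paper: a case analysis over the three operations, invoking Lemma~\ref{lemma:add:edge}, Lemma~\ref{lemma:remove:edges}, and Lemma~\ref{lemma:remove:simplicial:component} respectively with $\ell = 2\eta$, with the intersection bound supplied by Invariant~(\ref{invar:almostclique}) plus Lemma~\ref{lem:boundingsolnintersection:updated}, and induction on the height of $T_v$ for the recursive calls. (You even cite the correct lemma for the edge-addition case, where the paper's text mistakenly points to the edge-removal lemma.)
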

\begin{proof}
We will prove that if the invariants hold, then any step taken by the algorithm preserves the invariants. For concreteness, we denote by~$G^0$ and~$F^0$ the state of~$G$ and~$F$ at the time the procedure is called. During the execution of the algorithm, the structures~$G$ and~$F$ change. Let~$T^0$ be the tree in~$F^0$ containing~$v$. The proof is by induction on the height of~$T^0_v$, which is at least one. Assume that the invariants hold before some step of the algorithm and let~$G, F, T$ denote the status of the structures before the step. We use~$G', F', T'$ for the status after the step. We make a distinction based on the action taken by the algorithm.

\textbf{Adding an edge.} Suppose that the algorithm adds an edge~$\{p,q\}$ in Line~\ref{line:add:edge} so that~$G' := G + \{p,q\}$. To see that~$F$ is still a valid treedepth decomposition of~$G' - S$, it suffices to observe that the added edge either has an endpoint in~$S$, or both its endpoints are ancestors of~$v$, implying that the edge is represented by the decomposition. Hence Invariant~\ref{invar:treedec} is preserved. To see that invariant~\ref{invar:neighbors:t} is preserved, note that both endpoints of the added edge are contained in~$N_G(T_v) \cup \{v\}$, and so the only vertex in~$T_v$ that can be incident on the added edge is~$v$ itself. If an edge was added from~$v$ to a vertex~$s \in S$ then some member of~$T_v$ was already adjacent to~$s$. The only other invariant that is not trivially maintained is Invariant~\ref{invar:equivalent}. To see that it is maintained as well, observe the following.

By Invariant~\ref{invar:almostclique}, the set~$N_{G}(T_v) \setminus \anc_{T}(v)$ is a clique. Since the edge~$\{p,q\}$ we add either has an endpoint in~$\anc_{T}(v)$, or is an edge between~$v$ and a member of~$N_{G}(T_v) \setminus \anc_{T}(v) = N_{G}(T_v) \cap S$, it follows that~$N_{G'}(T'_v) \setminus \anc_{T'}(v)$ is also a clique (the decomposition tree does not change). Hence the set~$T'_v$ is $\eta$-nearly clique separated in~$G'$, since~$v$ has at most~$\eta$ ancestors. By Lemma~\ref{lem:boundingsolnintersection:updated}, any minimum treedepth-$\eta$ modulator of~$G'$ intersects~$T'_v$ in at most~$2\eta$ vertices. We may therefore apply Lemma~\ref{lemma:remove:edges} where the set~$T'_v$ is used as~$X$ and~$\ell = 2\eta$, to establish that~$(G' = G + \{p,q\}, k)$ is equivalent to~$(G, k)$ and therefore, using the invariant applied to~$G$, to~$(G^0,k)$. This proves that Invariant~\ref{invar:equivalent} is maintained. 

\textbf{Removing a set of edges.} Now consider what happens when the algorithm removes a set of edges in Line~\ref{line:remove:edges}. Since the edges we remove have exactly one endpoint in~$S$, all invariants except Invariant~\ref{invar:equivalent} are easily seen to be preserved. To prove that~\ref{invar:equivalent} is also preserved, we will apply Lemma~\ref{lemma:remove:edges}. Let us consider the requirements for the lemma. Define~$\ell := 2\eta$ and let~$X_1, \ldots, X_{\ell + \eta}$ be the vertex sets of~$T_{c_1}, \ldots, T_{c_{3\eta}}$ identified in the algorithm. By Invariant~\ref{invar:nice}, for every~$i \in [3\eta]$ the graph~$G[T_{c_i}]$ is connected. The condition in the \textbf{while} loop ensures that~$\td(G[T_{c_i}]) \geq \td(G[T_{c_0}])$ for all~$i \in [3\eta]$. Let~$S$ be the vertices of~$T_{c_0}$. It follows that our choice of~$S$ and the~$X_i$ satisfy the first two conditions of Lemma~\ref{lemma:remove:edges}, when using the vertex~$s$ in the algorithm as~$v$ in the lemma statement. To see that the third condition is also valid, observe that~$\X := \bigcup _{i \in [3\eta]} X_i$ is contained in~$T_v$ and that in any graph obtained from~$G$ by removing edges between~$s$ and~$T_{c_0}$, the set~$T_v$ is $\eta$-nearly clique separated by Invariant~\ref{invar:almostclique}. Hence, by Lemma~\ref{lem:boundingsolnintersection:updated}, any minimum treedepth modulator in a graph obtained from~$G$ by removing edges between~$s$ and~$T_{c_0}$ contains at most~$2\eta = \ell$ vertices from~$T_v$. Together with the fact that~$N_G(T_{c_0}) \subseteq N_G[s]$ we find that all conditions of Lemma~\ref{lemma:remove:edges} are satisfied, which proves that instance~$(G',k)$ is equivalent to~$(G,k)$. By Invariant~\ref{invar:equivalent} and transitivity, instance~$(G',k)$ is equivalent to~$(G^0,k)$ and therefore said invariant is preserved.

\textbf{Removing the vertices of a child subtree.} As the next operation, suppose that the Reduce algorithm removes the vertices in a subtree rooted at a child~$c^*$ of~$v$, in Line~\ref{line:remove:subtree}. Then~$N_G(T_{c^*})$ is a clique and for every~$w \in N_G(T_{c^*})$ there are~$3\eta$ distinct children~$c^w_1, \ldots, c^w_{3\eta}$ unequal to~$c^*$ such that the treedepth of the subgraphs they represent is at least~$\td(G[T_{c^*}])$, and they each contain a neighbor of~$w$. Observe that by Invariant~\ref{invar:nice}, for any~$i \in [3\eta]$ and~$w \in N_G(T_{c^*})$ the graph~$G[T_{c^w_i}]$ is connected. Set~$\ell := 2\eta$. We will prove that the conditions of Lemma~\ref{lemma:remove:simplicial:component} are satisfied for this choice of~$\ell$, using~$T_{c^*}$ as~$S$. Observe that the height of~$T_{c^*}$ is at most~$\eta$ by Invariant~\ref{invar:treedec}. The previous observations ensure that, when choosing~$X^w_i$ as~$T_{c^w_i}$ for all~$w \in N_G(T_{c^*})$ and~$i \in [3\eta]$, the first condition of Lemma~\ref{lemma:remove:simplicial:component} is satisfied. The fact that for each choice of~$w$, for all~$i$ the defined sets~$X^w_i$ are pairwise disjoint (note that sets for different choices of~$w$ may overlap) follows from the fact that the~$X^w_i$ come from different children of~$v$. Hence the second condition of the lemma is also satisfied. To see that the last condition is satisfied, note that~$N_G(T_v) \setminus \anc_T(v)$ is a clique by Invariant~\ref{invar:almostclique}, and hence~$T_v$ is $\eta$-nearly clique separated in~$G$. Therefore, using Lemma~\ref{lem:boundingsolnintersection:updated}, any minimum treedepth-$\eta$ modulator in~$G$ intersects~$T_v$ in at most~$\ell = 2\eta$ vertices. Hence the final condition of Lemma~\ref{lemma:remove:simplicial:component} is satisfied, proving that~$(G',k)$ is equivalent to~$(G,k)$ and therefore to~$(G^0,k)$. This implies that Invariant~\ref{invar:equivalent} is satisfied.

Let us now consider the other invariants. The only remaining invariant that is not trivially maintained is Invariant~\ref{invar:nice}, which says that the graph~$G[T_u]$ is connected for any vertex~$u$ in~$T_v$. Let~$T'$ denote the status of~$T$ after the deletion of~$T_{c^*}$. Observe that since we removed an entire subtree rooted at a child of~$v$, the only vertex~$u$ in~$T'_v$ for which~$T'_u$ differs from~$T_u$, is vertex~$v$ itself. To see that~$G'[T'_v]$ is connected, observe that if~$T'_v$ is not a single vertex, then by the properties of treedepth decompositions,~$v$ is a cutvertex in~$G[T_v]$ that separates the vertices in~$T_{c^*}$ from the remaining vertices. Hence any simple path that existed between two vertices of~$T_v \setminus T_{c^*}$ in the graph~$G[T_v]$, still exists in~$G[T'_v]$. It follows that Invariant~\ref{invar:nice} is maintained.

\textbf{Executing a recursive call.} The last case is when the operation that the algorithm performs is making a recursive call. This is where we use induction. If the algorithm makes a recursive call, then this is for children of~$v$ and therefore the height of~$T_v$ is larger than one. Since the height of~$T_u$ is smaller than the height of~$T_v$ for all children~$u$ of~$v$, by induction we find that the recursive call maintains the invariants. This concludes the proof of Lemma~\ref{lemma:invariants}.
\end{proof}

Having established the invariants of the algorithm, we know that it preserves the answer to an instance of \TreedepthEtaDeletion. For the purposes of obtaining a kernel, we also need to prove that it achieves a provable size reduction. We do this in the following lemma.

\begin{lemma}\label{lemma:countleaves}
Let Reduce be called for the input~$(G^0,S^0,F^0,v,k)$, and let~$T^0$ be the tree in~$F^0$ containing~$v$. Let~$G',F'$ be the graph and decomposition once the procedure has finished. Let 
$$\phi(u) := (2 \cdot 3 \eta \cdot 2^\eta)^{\height(T^0_v)} \cdot (|N_{G^0}(T^0_v) \cap S| + 1),$$
for any vertex~$u \in T^0_v$. Then the number of leaves in~$F'_v$ is at most~$\phi(v)$. 
\end{lemma}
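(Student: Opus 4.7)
The plan is to prove the bound by induction on $h := \height(T^0_v)$. The base case $h = 1$ is immediate: $T^0_v = \{v\}$, the procedure never removes $v$ itself, so $F'_v$ consists of the single leaf $v$, which is certainly at most $\phi(v)$.

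For the inductive step, let $c_1, \ldots, c_{m'}$ be the children of $v$ that remain after the three \textbf{while}-loops, at which point each $c_i$ triggers a recursive call. By Lemma~\ref{lemma:invariants}, the decomposition still has height at most $\eta$, each $G[T_{c_i}]$ is connected, and $N_G(T_v) \cap S$ is a clique throughout. In the recursive call on $c_i$, the initial subtree has height at most $h - 1$, and the $S$-neighborhood $K_i := |N_G(T_{c_i}) \cap S|$ at the moment of recursion is at most $K := |N_{G^0}(T^0_v) \cap S|$ (no operation in the current call adds edges between $S$ and vertices strictly below $v$, and earlier recursive calls on siblings $c_j$ cannot touch $T_{c_i}$). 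The inductive hypothesis then gives at most $(2 \cdot 3\eta \cdot 2^\eta)^{h-1} (K_i + 1)$ leaves in $F'_{c_i}$, so it suffices to prove the \emph{key claim}: $\sum_{i=1}^{m'} (K_i + 1) \leq (2 \cdot 3\eta \cdot 2^\eta)(K + 1)$.

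For the key claim I classify each surviving child as \emph{Type A} if $N_G(T_{c_i})$ is not a clique in $G$, and \emph{Type B} otherwise, and bound $m'$ and $\sum_i K_i$ separately. For Type A, pick a non-adjacent pair $\{p,q\} \subseteq N_G(T_{c_i})$; since $N_G(T_v) \cap S$ is a clique, at least one of $p,q$ lies in $\anc_T(v) \cup \{v\}$, so there are at most $\binom{\eta}{2} + \eta K$ candidate witness pairs. The first loop's termination forces $\lambda_{G[\{p,q\} \cup T_v]}(p,q) < 3\eta$, and since the connected subgraphs $G[T_{c_i}]$ are pairwise disjoint and each contributes an internally disjoint $pq$-path, each witness pair is shared by at most $3\eta - 1$ Type A children. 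For Type B, the failure of the third loop's condition exhibits a witness $w_i \in N_G(T_{c_i})$ with fewer than $3\eta$ other children of at least equal treedepth having $w_i$ in their neighborhood; ordering the Type B children charged to a common $w$ by treedepth, the smallest one's witnessing inequality caps the whole group at $3\eta$, and since $|N_G(T_v) \cup \{v\}| \leq K + \eta$ this gives at most $3\eta(K+\eta)$ Type B children. Writing $\sum_i K_i = \sum_{s \in N_G(T_v) \cap S} f(s)$ with $f(s) := |\{i : s \in N_G(T_{c_i})\}|$, I further bound $f(s)$ by splitting children with $s \in N_G(T_{c_i})$ into those dominated by $s$ (i.e.\ $N_G(T_{c_i}) \subseteq N_G[s]$, at most $3\eta$ via the pigeonhole above applied to the termination of the \emph{second} loop) and those with some $w \in N_G(T_{c_i}) \setminus N_G[s]$, for which the clique property forces $w \in \anc_T(v) \cup \{v\}$ and the first-loop bound yields at most $\eta(3\eta-1)$ such children; hence $f(s) \leq 3\eta^2 + 2\eta$ and $\sum_i K_i \leq K(3\eta^2 + 2\eta)$. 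Adding all contributions gives a bound of the form $\alpha \eta^3 + \beta \eta^2 K$ for absolute constants $\alpha,\beta$, which is comfortably dominated by $6\eta \cdot 2^\eta (K+1)$ for every $\eta \geq 1$, since $2^\eta$ outgrows any fixed polynomial in $\eta$.

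The main technical obstacle is the simultaneous analysis of the two charging schemes—one for $m'$ itself and one for $\sum_i K_i$—each relying on a different loop's termination condition to cap the number of children per witness. The role of the \emph{second} \textbf{while}-loop is the least obvious but essential: it is precisely what keeps $f(s)$ bounded by a polynomial in $\eta$ independent of $K$, preventing a quadratic-in-$K$ blow-up that would propagate through every level of the induction.
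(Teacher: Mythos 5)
Your proof is correct, but it reaches the key inequality by a genuinely different route than the paper. The paper also inducts on $\height(T^0_v)$ and sums $\phi$ over the surviving children, but it controls the children by partitioning them according to their \emph{exact} set of ancestor-neighbours $Y'\subseteq \anc_T(v)\cup\{v\}$: within each of the $2^{\depth(v,T^0)}$ groups, the first loop turns the common neighbourhood into a clique and the third (resp.\ second) loop then caps the group at $3\eta$ members (resp.\ caps at $3\eta$ the number of members adjacent to a fixed $s\in S$). This is where the $2^\eta$ in $\phi$ is actually consumed. You instead avoid the $2^{\depth(v)}$ grouping entirely and charge directly: non-clique children to a non-adjacent pair in $N_G(T_v)\cup\{v\}$ (capped at $3\eta$ per pair by the first loop), clique children to the witness $w$ that blocks the third loop (capped at $3\eta$ per witness by a minimum-treedepth argument), and the $S$-degree sum via the domination/non-domination split that exploits the second loop's termination. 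This yields bounds polynomial in $\eta$ rather than $2^\eta$ per level, so your argument is in fact tighter than the paper's and would support a smaller base in $\phi$; the paper's grouping is blunter but requires only one uniform mechanism. Two loose ends you should patch, though neither is a real gap: (i) when $\height(T^0_v)>1$ but \emph{all} children of $v$ get deleted, your sum over surviving children is empty yet $v$ itself is a leaf, so you need the (trivial) observation that $\phi(v)\geq 1$; (ii) your closing step ``$2^\eta$ outgrows any fixed polynomial'' only covers large $\eta$, whereas the lemma is claimed for every $\eta\geq 1$ --- with your explicit constants the inequality $\alpha\eta^3+\beta\eta^2 K\leq 6\eta\cdot 2^\eta(K+1)$ does hold for all $\eta\geq 1$, but this should be checked numerically for small $\eta$ rather than asserted asymptotically.
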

\begin{proof}
We will prove the lemma by induction on~$\height(T^0_v)$. Before doing so, however, we establish the structure of the graph once the Reduce algorithm has reached Line~\ref{line:recurse}. Let~$G$ be the state of the graph once reaching Line~\ref{line:recurse}, let~$F$ be the state of the forest, and let~$T$ be the tree of~$F$ containing~$v$. 

\begin{claim} \label{claim:paths:from:subtrees}
Let~$u$ be a child of~$v$ in~$T$. Then for every pair of distinct vertices~$\{p,q\} \subseteq N_G(T_u)$ there is a $pq$-path in the graph~$G[\{p,q\} \cup T_u]$.
\end{claim}
\begin{claimproof}
If~$p,q \in N_G(T_u)$ then there is a neighbor~$p'$ of~$p$ in~$T_u$, and a neighbor~$q'$ of~$q$ in~$T_u$. By Invariant~\ref{invar:nice}, the graph~$G[T_u]$ is connected and contains a path~$P$ connecting~$p'$ and~$q'$. By adding the edges to~$p$ and~$q$ we obtain the desired~$pq$ path in~$G[\{p,q\} \cup T_u]$.
\end{claimproof}

We now derive bounds on the number of children of~$v$ once the execution has reached Line~\ref{line:recurse}. Let~$\C^+$ denote the set of children~$u$ of~$v$ for which~$N_G(T_u) \cap S \neq \emptyset$. Let~$\C^-$ denote the remaining children of~$v$.

\begin{claim} \label{claim:bound:cminus}
$|\C^-| \leq 3\eta \cdot 2^{\depth(v,T^0)}$.
\end{claim}
\begin{claimproof}
Consider a child~$u$ of~$v$ such that~$N_G(T_u) \cap S = \emptyset$. By Observation~\ref{observation:neighbors:of:subtree}, the only possible vertices of~$N_G(T_u)$ are proper ancestors of~$u$, which are exactly the ancestors~$Y$ of~$v$ (since~$v$ is its own ancestor). There are exactly~$\depth(v,T) = \depth(v,T^0)$ of them. For~$Y' \subseteq Y$ let~$\C^-_{Y'}$ contain the children~$u$ of~$v$ for which~$N_G(T_u) = Y'$. Since there are~$2^{|Y|} = 2^{\depth(v,T^0)}$ possible groups, to establish the claim it suffices to bound the size of each group by~$3 \eta$.

Fix some~$Y' \subseteq Y$ and the group~$\C^-_{Y'}$. Assume for a contradiction that~$|\C^-_{Y'}| > 3\eta$. For any pair of distinct vertices~$\{p,q\} \in Y'$ there are at least~$3\eta$ internally vertex-disjoint $pq$-paths in the graph~$G[\{p,q\} \cup T_v]$, since we get one such path through each set~$T_{c}$ with~$c \in \C^-_{Y'}$, by Claim~\ref{claim:paths:from:subtrees}. If these paths exist in~$G$, then surely they must have existed in the state of the graph when the \textbf{while}-loop of Line~\ref{line:while:add:edge} terminated, since we only delete vertices and edges afterward. So the paths were detected in that loop, causing the edge~$\{p,q\}$ to be added to~$G$. Since such edges are not removed in the \textbf{while}-loop of Line~\ref{line:while:remove:edge}, during the \textbf{while}-loop of Line~\ref{line:while:remove:subtree} for each~$c \in \C^-_{Y'}$ the set~$N_G(T_c)$ is a clique. But if~$|\C^-_{Y'}| > 3\eta$, then letting~$c^*$ be a member of~$\C^-_{Y'}$ minimizing~$\td(G[T_{c^*}])$ and letting~$c_1, \ldots, c_{3\eta}$ be~$3\eta$ arbitrary other members of~$\C^-_{Y'}$, since all subtrees rooted at~$\C^-_{Y'}$ have the same $G$-neighborhood we now find that this choice of~$c^*$ and using the same~$c_1, \ldots, c_{3\eta}$ for all~$w \in N_G(T_c)$, the conditions of the \textbf{while}-loop are satisfied, causing~$c^*$ to be deleted. Hence if the algorithm was correctly executed,~$|\C^-_{Y'}| \leq 3\eta$. The claim follows.
\end{claimproof}

\begin{claim} \label{claim:sum:s:degrees}
$\sum _{u \in \C^+} |N_G(T_u) \cap S| \leq 3 \eta \cdot 2^{\depth(v,T^0)} \cdot |N_{G^0}(T^0_v) \cap S|.$
\end{claim}
\begin{claimproof}
For each child~$u$ of~$v$, by Observation~\ref{observation:neighbors:of:subtree} we know that~$N_G(T_u) \setminus S$ consists of ancestors of~$v$. As in the proof of the previous claim, let~$Y$ be the ancestors of~$v$ and partition the set~$\C^+$ into subsets~$\C^+_{Y'}$ for~$Y' \subseteq Y$ such that for all~$u \in \C^+_{Y'}$ we have~$N_G(T_u) \setminus S = Y'$.

Fix an arbitrary~$Y' \subseteq Y$. We start by showing that for every vertex~$s$ of~$S$, there are at most~$3\eta$ members~$u$ of~$\C^+_{Y'}$ such that~$s \in N_G(T_u)$. Assume for a contradiction that some~$s \in S$ is adjacent to more than~$3\eta$ of the subtrees rooted at~$\C^+_{Y'}$. For each subtree~$u \in \C^+_{Y'}$, by Claim~\ref{claim:paths:from:subtrees} for any vertex~$y \in Y'$ there is an $su$-path in subgraph~$G[\{s,y\} \cup T_u]$. Since all subtrees~$T_u$ are contained in~$T_v$, this implies that~$\lambda_{G[\{s,y\} \cup T_v]}(s,y) \geq 3\eta$ for every~$y \in Y'$. Hence the edge addition rule triggered in Line~\ref{line:add:edge} for~$s$ and every member of~$Y'$, showing that~$s$ is adjacent to all members of~$Y'$. Now let~$c_0 \in \C^+_{Y'}$ minimize~$\td(G[T_{c_0}])$ among all members of~$\C^+_{Y'}$ for which~$s \in N_G(T_{c_0})$, and let~$c_1, \ldots, c_{3\eta}$ be~$3\eta$ members~$u$ of~$\C^+_{Y'}$ for which~$s \in N_G(T_u)$. Then~$s \in N_G(T_{c_0})$ and for all~$i \in [3\eta]$ we have~$s \in N_G(T_{c_i})$. By choice of~$c_0$ we know~$\td(G[T_{c_i}]) \geq \td(G[T_{c_0}])$ for all~$i \in [3\eta]$. Finally, since~$s$ is adjacent to all members of~$Y'$ and~$N_G(T_{c_0}) \setminus Y' \subseteq S$ is a clique containing~$s$, it follows that~$N_G(T_{c_0}) \subseteq N_G[s]$. But then all conditions of the \textbf{while}-loop of Line~\ref{line:while:remove:edge} are satisfied, showing that the algorithm would have removed the edges from~$s$ to~$T_{c_0}$; a contradiction. Hence we established that for every~$s \in S$ there are at most~$3\eta$ members~$u$ in~$\C^+_{Y'}$ with~$s \in N_G(T_u)$.
Let us finish the proof of the claim with this knowledge. Observe the following double-counting equality:
$$\sum _{u \in \C^+_{Y'}} |N_G(T_u) \cap S| = \sum _{s \in S} \left|\{u \in \C^+_{Y'} \mid s \in N_G(T_u)\}\right|.$$
By the previous argument, every~$s \in S$ is a neighbor of at most~$3\eta$ subtrees rooted in~$\C^+_{Y'}$. In addition, observe that every~$s \in S$ that has a neighbor in~$T_u$ for some~$u \in \C^+_{Y'}$ trivially also has a neighbor in~$T_v \supset T_u$. Hence vertices~$s \in S \setminus N_G(T_v)$ have no neighbors among such subtrees~$T_u$. This proves that:
$$\sum _{s \in S} \left |\{u \in \C^+_{Y'} \mid s \in N_G(T_u)\} \right | = \sum _{s \in N_G(T_v) \cap S} \left |\{u \in \C^+_{Y'} \mid s \in N_G(T_u)\} \right | \leq |N_G(T_v) \cap S| \cdot 3\eta.$$
Combining these two inequalities, and summing over all possible~$Y'$, we find that:
\begin{align*}
\sum _{u \in \C^+} |N_G(T_u) \cap S| & \leq \sum _{Y' \subseteq Y} \sum _{u \in \C^+_{Y'}} |N_G(T_u) \cap S| \\
&= \sum _{Y' \subseteq Y} \sum _{s \in N_G(T_v) \cap S} \left|\{u \in \C^+_{Y'} \mid s \in N_G(T_u)\} \right| \\
&\leq 2^{|Y|} \cdot |N_G(T_v) \cap S| \cdot 3\eta
\end{align*}

Observe that~$|Y| = \depth(v,T^0)$. To establish the claim, observe that by Invariant~\ref{invar:neighbors:t} we have~$|N_G(T_v) \cap S| \leq |N_{G^0}(T^0_v) \cap S|$.
\end{claimproof}

Using the previous claims we now bound the number of leaves that remain in the subtree~$T_v$ after the algorithm has terminated. Observe that once the algorithm recurses on a child~$u$, the height of~$T_u$ is less than the height of~$T_v$ and therefore we may apply induction to bound the number of leaves in the subtrees resulting from recursive calls.

If there are no children of~$v$ left to recurse on, then~$v$ is the only leaf in~$T_v$ that remains. It is easy to see that the bound of Lemma~\ref{lemma:countleaves} ensures that~$\phi(u) \geq 1$ and therefore the base case of the induction holds. If~$v$ has at least one child left, then~$v$ is not a leaf. The number of leaves in~$T_v$ is obtained by summing the bounds for its children. Let~$f(v)$ denote the number of leaves of the subtree~$T_v$ after the procedure terminated and observe the following derivation. We count the leaves in children rooted at~$\C^+$ and the leaves in children rooted at~$\C^-$ separately.
\begin{align*}
& \sum _{u \in \C^-} f(u) \leq \sum _{u \in \C^-} \phi(u) & \mbox{By induction.} \\
& \leq \sum _{u \in \C^-} (3 \eta \cdot 2^{\eta + 1})^{\height(T^0_u)} \cdot (|N_{G^0}(T^0_u) \cap S| + 1) \\
& \leq \sum _{u \in \C^-} (3 \eta \cdot 2^{\eta + 1})^{\height(T^0_u)} \cdot (0 + 1) &  \mbox{Definition of~$\C^-$.}\\
& \leq \sum _{u \in \C^-} (3 \eta \cdot 2^{\eta + 1})^{\height(T^0_v) - 1} & \mbox{$u$ is below~$v$.}\\
& \leq |\C^-| \cdot  (3 \eta \cdot 2^{\eta + 1})^{\height(T^0_v) - 1} \\
& \leq 3\eta \cdot 2^{\depth(v,T^0)} \cdot (3 \eta \cdot 2^{\eta + 1})^{\height(T^0_v) - 1} & \mbox{Claim~\ref{claim:bound:cminus}.} \\
&\leq (3 \eta \cdot 2^{\eta + 1})^{\height(T^0_v)} & \mbox{$\depth(v,T^0) \leq \eta$}.
\end{align*}
Now we consider~$\C^+$.
\begin{align*}
& \sum _{u \in \C^+} f(u) \leq \sum _{u \in \C^+} \phi(u) & \mbox{By induction.} \\
& \leq \sum _{u \in \C^+} (3 \eta \cdot 2^{\eta + 1})^{\height(T^0_u)} \cdot (|N_{G^0}(T^0_u) \cap S| + 1) \\
& \leq \sum _{u \in \C^+} (3 \eta \cdot 2^{\eta + 1})^{\height(T^0_v) - 1} \cdot (|N_{G^0}(T^0_u) \cap S| + 1) & \mbox{$u$ is below~$v$.}\\
& \leq (3 \eta \cdot 2^{\eta + 1})^{\height(T^0_v) - 1} \cdot \sum _{u \in \C^+} (|N_{G^0}(T^0_u) \cap S| + 1) & \mbox{Rearranging.}\\
& \leq (3 \eta \cdot 2^{\eta + 1})^{\height(T^0_v) - 1} \cdot 2 \sum _{u \in \C^+} |N_{G^0}(T^0_u) \cap S| & \mbox{$N_{G^0}(T^0_u) \cap S| > 0$ by def.~$\C^+$.}\\
& \leq (3 \eta \cdot 2^{\eta + 1})^{\height(T^0_v) - 1} \cdot 2 \cdot 3 \eta \cdot 2^{\depth(v,T^0)} \cdot |N_{G^0}(T^0_v) \cap S| & \mbox{Claim~\ref{claim:sum:s:degrees}.}\\
& \leq (3 \eta \cdot 2^{\eta + 1})^{\height(T^0_v)} \cdot |N_{G^0}(T^0_v) \cap S| & \mbox{$\depth(v,T^0) \leq \eta$.}
\end{align*}
Since~$f(v) = (\sum _{u \in \C^+}f(u)) + (\sum_{u \in \C^-}f(u))$, combining the bounds for~$\C^+$ and~$\C^-$ to bound~$f(v)$ proves Lemma~\ref{lemma:countleaves} by simple formula manipulation.
\end{proof}

Finally, since a kernelization is an \emph{efficient} preprocessing algorithm, we have to show that the Reduce algorithm can be implemented efficiently.

\begin{lemma}
The Reduce algorithm can be implemented to run in polynomial time for every fixed~$\eta$.
\end{lemma}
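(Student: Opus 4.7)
The plan is to show each step of the Reduce algorithm takes polynomial time and that there are polynomially many steps overall. I would start by analyzing a single call to Reduce($G, S, F, v, k$), then use induction on the height of~$T_v$ (or simply observe that each vertex of the original forest is processed as the ``current'' node~$v$ at most once across the entire recursion).

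First, I would argue that each iteration of the three \textbf{while}-loops is polynomial-time checkable. For the loop at Line~\ref{line:while:add:edge}, the condition $\lambda_{G[\{p,q\} \cup T_v]}(p,q) \geq 3\eta$ can be decided in polynomial time for each candidate pair via at most $3\eta$ augmenting-path iterations of standard Menger/max-flow (since $3\eta$ is a constant, we only need to certify that many internally vertex-disjoint paths exist). There are $\Oh(n^2)$ pairs $\{p,q\}$ and $\Oh(n^2)$ edge additions in total, since each iteration strictly enlarges~$E(G)$ inside the fixed set $N_G(T_v)\cup\{v\}$. For the loops at Lines~\ref{line:while:remove:edge} and~\ref{line:while:remove:subtree}, the conditions depend on $\td(G[T_{c_i}])$ and $N_G(T_{c_i})$, and since $G[T_{c_i}]$ is a subgraph of $G - S$ of treedepth at most~$\eta$, the exact treedepth can be computed in polynomial time by Lemma~\ref{lemma:opt:nice:decomposition}. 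Finding a witness $(c_0,c_1,\dots,c_{3\eta})$ or $(c^*,\{c_1^w,\dots,c_{3\eta}^w\}_{w})$ is then a matter of iterating over the children of~$v$, precomputing for each child its treedepth and its neighborhood, and, for each candidate $s$ or $c^*$, scanning the other children. Since $v$ has at most $n$ children and $3\eta$ is constant, each iteration takes polynomial time.

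Next I would bound the number of iterations within a single activation of Reduce at~$v$. The Line~\ref{line:while:add:edge} loop only adds edges, so it performs at most $\binom{|N_G(T_v)\cup\{v\}|}{2} \leq \Oh(n^2)$ iterations. The Line~\ref{line:while:remove:edge} loop removes at least one edge per iteration (all edges between~$s$ and $T_{c_0}$, which is a nonempty set by the connectedness of $G[T_{c_0}]$ and the fact that $s \in N_G(T_{c_0})$), so it terminates in $\Oh(n^2)$ iterations. The Line~\ref{line:while:remove:subtree} loop removes at least one vertex per iteration and hence terminates in $\Oh(n)$ iterations. Hence the non-recursive work of a single Reduce invocation is polynomial.

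Finally, the recursion visits each surviving vertex $u$ of the initial forest at most once in the role of~``$v$'': the \textbf{foreach} at Line~\ref{line:recurse} ranges over the children of the current $v$ in~$T$, each such child is itself then processed exactly once, and vertices removed in Line~\ref{line:remove:subtree} are never recursed upon. Thus the total number of recursive calls is at most~$|V(F^0)| \leq n$, and the overall running time is polynomial for every fixed~$\eta$. The main (very mild) obstacle is simply verifying that checking the loop guards reduces to computing bounded max-flows and bounded-treedepth computations inside subtrees, both of which are explicitly polynomial for constant~$\eta$; once this is in place the size-monotonicity arguments above close out the bound.
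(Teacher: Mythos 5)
Your proposal is correct and follows essentially the same route as the paper's proof: both reduce the loop guards to bounded treedepth computations (via the algorithm of Lemma~\ref{lemma:opt:nice:decomposition}) and internally vertex-disjoint path computations via flow, and both note that the recursion visits each node at most once. Your version is simply more explicit about the monotone progress measures bounding the number of iterations of each \textbf{while}-loop, which the paper leaves implicit.
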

\begin{proof}
The nontrivial algorithm tasks that have to be done for one iteration of the algorithm are determining the treedepth of the subgraph induced by a subtree of~$F$, and finding the maximum number of internally vertex-disjoint paths in some subgraph. For every fixed~$\eta$, the first can be done in polynomial (even linear) time, using for example the FPT algorithm of Reidl et al.~\cite{ReidlRVS14}. It is well known that the number of internally vertex-disjoint paths can be computed in polynomial time using flow techniques; see, for example, Schrijver~\cite[Chapter 9]{Schrijver03}. Since the Reduce algorithm recurses at most once on each child, it follows that the overall runtime is polynomial in the size of the input graph.
\end{proof}

\subsection{Final kernelization algorithm} \label{section:final:kernel}

Armed with the decomposition and the reduction algorithm, we can formulate the complete kernelization algorithm for \TreedepthEtaDeletion.

\begin{untheorem}
For each fixed~$\eta$, \TreedepthEtaDeletion has a polynomial kernel with~$\Oh(k^6)$ vertices: an instance~$(G,k)$ can be efficiently reduced to an equivalent instance~$(G',k)$ with~$2^{\Oh(\eta^2)} k^6$ vertices.
\end{untheorem}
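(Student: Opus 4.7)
The plan is to combine the two main algorithms of the preceding subsections into a single kernelization pipeline and then carefully tally the resulting vertex count. On input $(G,k)$, I first run the Decompose algorithm of Lemma~\ref{lemma:decomposition:algorithm}. If it declares a \no-instance, the output is a trivial \no-instance; otherwise it returns an equivalent $(G',k)$ together with a treedepth-$\eta$ modulator $S$, a set $Y\subseteq V(G')\setminus S$, and a nice treedepth decomposition $F'$ of $G'-S$ of height at most $\eta$. Writing $\mathcal{T}$ for the topmost vertices of $F'$ described in property~\ref{dec:component:structure}, so that the connected components of $G'-(S\cup Y)$ are exactly the subtrees $F'_u$ rooted at $u \in \mathcal{T}$, I then invoke Reduce$(G',S,F',u,k)$ once for every $u \in \mathcal{T}$, updating $G'$ and $F'$ in place between calls.

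For correctness I will use property~\ref{dec:equivalent} of Lemma~\ref{lemma:decomposition:algorithm} together with Lemma~\ref{lemma:invariants}. Properties~\ref{dec:connected:subgraphs} and~\ref{dec:component:clique} of the decomposition lemma ensure that the initial invariants of Reduce hold at the moment of each first call on $u\in\mathcal{T}$: $F'$ is a height-$\eta$ decomposition of $G'-S$, the graph $G'[F'_u]$ is connected, and $N_{G'}(F'_u)\setminus\anc_{F'}(u)\subseteq S$ is a clique. Lemma~\ref{lemma:invariants} then yields that each individual call preserves the answer to the instance. A small but crucial point that I will verify is that sequential calls do not interfere: since the subtrees $\{F'_u : u \in \mathcal{T}\}$ are pairwise vertex-disjoint and Reduce on $u$ only inserts edges with both endpoints in $N_{G'}(F'_u)\cup\{u\}\subseteq S\cup Y\cup F'_u$, only removes edges between a single $s\in S$ and some $T_{c_0}\subseteq F'_u$, and only deletes vertices lying inside $F'_u$, none of its modifications can invalidate the invariants needed when Reduce is later invoked on a different $u' \in \mathcal{T}$.

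The heart of the proof is the size bound. Lemma~\ref{lemma:decomposition:algorithm} provides $|S|\le 2^\eta k$, $|Y|\le\eta(2^\eta k)^2(k+\eta)$, and a number of components of $G'-(S\cup Y)$ bounded, via the dominant term $|S|\cdot|Y|\cdot(\eta+k)$, by $2^{O(\eta)}k^5$. For each $u\in\mathcal{T}$, Lemma~\ref{lemma:countleaves} bounds the number of leaves of the post-reduction subtree $F'_u$ by
\[
\phi(u)\;=\;(6\eta\cdot 2^{\eta})^{\height(F'_u)}\cdot(|N_{G'}(F'_u)\cap S|+1)\;\le\;2^{O(\eta^2)}(|S|+1),
\]
since $(6\eta\cdot 2^\eta)^\eta=2^{O(\eta^2)}$ and $\height(F'_u)\le\eta$. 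Because every tree in $F'$ has height at most $\eta$, each reduced subtree contains at most $\eta\cdot\phi(u)$ vertices; summing over $\mathcal{T}$ and using the crude bound $|N_{G'}(F'_u)\cap S|\le|S|$, the vertices lying in the components of $G'-(S\cup Y)$ number at most
\[
|\mathcal{T}|\cdot\eta\cdot 2^{O(\eta^2)}\cdot(|S|+1)\;\le\;2^{O(\eta)}k^5\cdot 2^{O(\eta^2)}\cdot k\;=\;2^{O(\eta^2)}k^6.
\]
Adding the $2^{O(\eta)}k^3$ vertices of $S\cup Y$ is absorbed, giving the claimed bound. Polynomial running time is immediate, since Decompose and each Reduce invocation run in polynomial time for fixed $\eta$.

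The step I expect to be most delicate is the non-interference verification for the sequential Reduce calls: one must argue that edges inserted among $S\cup Y\cup F'_u$ by the call on $u$ do not invalidate the Reduce invariants when the procedure is later launched on some $u'\in\mathcal{T}$ rooted in a disjoint subtree. Disjointness of the $F'_u$ and the fact that Reduce only rewrites inside its own subtree make this mostly syntactic, but I will still need to confirm that the newly added edges among $S$-vertices or between $S$ and $Y$ remain compatible with the treedepth decomposition of $G'-S$ witnessed by $F'$. A secondary bookkeeping pitfall is to pair the $(|N_{G'}(F'_u)\cap S|+1)$ factor of $\phi(u)$ with the right component count so that the final exponent lands at $6$ and not $7$; the naive estimate $|N_{G'}(F'_u)\cap S|\le|S|$ already suffices, so no refined double-counting over $S$ is needed.
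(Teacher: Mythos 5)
Your proposal is correct and follows essentially the same route as the paper: Decompose via Lemma~\ref{lemma:decomposition:algorithm}, then one call to Reduce per root in $\T'$, with correctness from the invariants of Lemma~\ref{lemma:invariants} (including the locality/non-interference point, which the paper also notes via Invariant~IV) and the size bound obtained by multiplying the $2^{\Oh(\eta)}k^5$ component count by the $\eta\cdot\phi(u)\le 2^{\Oh(\eta^2)}(|S|+1)$ per-component bound from Lemma~\ref{lemma:countleaves}. The arithmetic landing on $k^6$ matches the paper's computation exactly.
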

\begin{proof}
Fix some~$\eta \geq 1$. When presented with an input~$(G, k)$, the kernelization algorithm proceeds as follows. It first applies Lemma~\ref{lemma:decomposition:algorithm}. If the lemma reports that~$G$ has no treedepth-$\eta$ modulator of size at most~$k$ then we output a constant-size \no-instance and terminate. Otherwise we obtain an equivalent instance~$(G',k)$ along with a treedepth-$\eta$ modulator~$S$, a set~$Y \subseteq V(G') \setminus S$, a treedepth decomposition~$F'$ of~$G'$ of height at most~$\eta$ satisfying the conditions outlined in Lemma~\ref{lemma:decomposition:algorithm}. The graph~$G'$ consists of~$S$,~$Y$, and the vertices of~$G' - (S \cup Y)$. The lemma guarantees that the vertex sets of connected components of~$G' - (S \cup Y)$ correspond to the vertex sets of subtrees of~$F'$ rooted at members of~$\T'$. It also gives a bound on the cardinality of~$|\T'|$, by bounding the number of connected components of~$G' - (S \cup Y)$. Since~$|S|$ and~$|Y|$ are already small, to bound the total size of the instance it suffices to shrink each component of~$G' - (S \cup Y)$ to size polynomial in~$k$. 

To achieve this, we do the following. For each~$v \in \T'$ we call Reduce$(G', S, F', v, k)$. From the guarantees of Lemma~\ref{lemma:decomposition:algorithm}, it follows that the invariants outlined in Section~\ref{section:reduction:algorithm} are satisfied for the first call of Reduce. Since the changes that are made to the graph by the reduction algorithm are local, as formalized in the invariant, when we call Reduce on the next member of~$\T'$ the invariants are still initially satisfied. As Invariant~\ref{invar:equivalent} ensures that each transformation yields an instance equivalent to the one we started with, after executing Reduce for each~$v \in \T'$ the resulting instance~$(G',k)$ is equivalent to the original input. Since Decompose and Reduce both run in polynomial time for fixed~$\eta$, the entire procedure runs in polynomial time. The resulting instance~$(G',k)$ is given as the output of the kernelization. It remains to bound the number of vertices in~$G'$.

The vertex set of the final graph~$G'$ consists of~$S$,~$Y$, and whatever is left of the subtrees rooted at~$\T'$. By Lemma~\ref{lemma:countleaves}, after Reduce has finished processing for~$v \in \T'$, the number of leaves in the subtree of~$F'$ rooted at~$v$ has been reduced to at most~$\phi(v) = (3\eta \cdot 2^{\eta+1})^{\height(T_v)} \cdot (|S| + 1)$. As the height of any tree in~$F'$ is at most~$\eta$, any leaf has at most~$(\eta - 1)$ proper ancestors. Hence the number of vertices in the reduced subtree~$F'_v$ is at most~$\eta$ times the number of leaves, so at most~$$\eta \cdot (3\eta \cdot 2^{\eta+1})^{\eta} \cdot (|S| + 1).$$ By combining this bound with the number of connected components of~$G' - (S \cup Y)$, which equals~$|\T'|$ by Lemma~\ref{lemma:decomposition:algorithm}, we can now obtain a final size bound for~$G'$. Observe that, for fixed~$\eta$, Lemma~\ref{lemma:decomposition:algorithm} shows that~$|S| \in \Oh(k)$ and~$|Y| \in \Oh(k^3)$, implying that~$|\T'| \in \Oh(k^5)$. Hence we find:

\begin{align*}
|V(G')| &\leq |S| + |Y| + |\T'| \cdot \eta \cdot (3\eta \cdot 2^{\eta+1})^{\eta} \cdot (|S| + 1) \\
&\in \Oh(k + k^3 + k^5 \cdot k) \in \Oh(k^6).
\end{align*}

A straight-forward computation shows that the number of vertices in the kernel is bounded by~$2^{c \cdot \eta^2} \cdot k^6$ for an explicit, small constant~$c$ that can be extracted from our arguments. We omit the computations for ease of presentation. This concludes the proof.
\end{proof}
\section{Conclusion}\label{sec:conclusion}
In this paper we (re-)studied the \PlanarFDeletion problem from the perspective of (uniform) kernelization. We answered the question whether all \PlanarFDeletion problems have uniformly polynomial kernels negatively, but showed that the special case \TreedepthEtaDeletion (which is a \PlanarFDeletion problem for every~$\eta$, where every~$\F$ contains a path) has uniformly polynomial kernels.

In a recent paper, Fellows and Jansen~\cite{FellowsJ14} analyzed the connection between kernelization algorithms and minor order obstruction sets, suggesting that the sizes of kernels and the sizes of the graphs in related obstruction sets are closely linked. Our results in this paper give another example of this connection. For every~$k$ and~$d$, the graphs in which at most~$k$ vertices can be deleted to obtain a graph of treewidth at most~$d$ are a minor-closed family, and are therefore characterized by a finite construction set~$\Oh_k$. How do the members of this obstruction set relate to kernelization? We proved that the number of vertices in kernels for \TreewidthDminoneDeletion must be~$\Omega(k^{\frac{d}{4} - \epsilon})$ unless \containment. By a simple construction (although different than that of Lemma~\ref{lemma:lowerbound:constructgraph}), we can prove that there are minor-minimal obstructions that have~$\Omega(k^d)$ vertices. For upper bounds, consider the obstruction set~$\Oh'_k$ for the class of graphs whose treedepth can be reduced to~$\eta$ by at most~$k$ deletions. By adapting arguments from the kernelization---which is needed to circumvent the need to add edges to the graph---we can prove an upper bound of~$2^{\Oh(\eta^2)} k^6$ on the number of vertices of members of the obstruction set~$\Oh'_k$. Hence in the positive case we also see the connection between kernel sizes and obstruction sizes.

The distinction between uniformly versus non-uniformly polynomial kernels is similar to the distinction between algorithms whose parameter dependence is fixed-parameter tractable (\fpt) versus slicewise-polynomial ({\sf XP}), and opens up a similarly broad area of investigation. The kernelization complexity of \FDeletion is still wide open. Some notable open problems in this direction are:
\begin{itemize}
	\item Does \FDeletion admit a polynomial kernel for any fixed set~$\F$, even when~$\F$ contains no planar graphs? Even for the special case of deleting~$k$ vertices to get a planar graph ({\sc Vertex Planarization}), we do not know the answer.
	\item Consider the graphs that can be made planar by at most~$k$ vertex deletions, and the corresponding obstruction set~$\Oh^*_k$ for this family. Can the size of the members of~$\Oh^*_k$ be bounded polynomially in~$k$? By the suggested connection between kernel sizes and obstruction sizes, this may shed light on the kernelization complexity of~\textsc{Vertex Planarization}.
	\item Is it possible to obtain a dichotomy theorem, characterizing the families~$\F$ for which \PlanarFDeletion admits uniformly polynomial kernels?
\end{itemize}

These questions are part of a large research program into the complexity of \FDeletion problems, whose importance was recognized by its listing in the \emph{Research Horizons} section of the recent textbook by Downey and Fellows~\cite[Chapter 33.2]{DowneyF13}.

\newpage
\bibliographystyle{siam}
\bibliography{Paper,book_kernels_fvf,kernels}

\end{document}